\documentclass[8pt, a4paper]{article}
\usepackage[verbose=true,letterpaper]{geometry}
\AtBeginDocument{
  \newgeometry{
    textheight=9in,
    textwidth=5.5in,
    top=1in,
    headheight=12pt,
    headsep=25pt,
    footskip=30pt
  }
}
\usepackage{arxiv_style}
\usepackage[utf8]{inputenc} 
\usepackage[T1]{fontenc}    
\usepackage{hyperref}       
\usepackage{url}            
\usepackage{booktabs}       
\usepackage{amsfonts}       
\usepackage{nicefrac}       
\usepackage{microtype}      
\usepackage{xcolor}        
\usepackage{parskip}
\usepackage{bm}
\usepackage{enumitem}
\usepackage{graphicx}
\usepackage[small]{caption}
\usepackage{subcaption}
\usepackage{amsmath, amssymb}
\usepackage{amsthm}
\usepackage{xfrac}
\usepackage{multirow}
\usepackage{wrapfig}
\usepackage{nicematrix}
\usepackage{makecell}
\usepackage{rotating}
\usepackage{tikz}
\usepackage{doi}
\usetikzlibrary{arrows.meta, positioning, shapes.geometric, fit, backgrounds, calc}

\usepackage{algorithm}
\usepackage[noEnd=true]{algpseudocodex}

\makeatletter
\providecommand{\theHALG@line}{}
\renewcommand{\theHALG@line}{\thealgorithm.\arabic{ALG@line}}
\makeatother

\numberwithin{equation}{section}
\usepackage{cleveref}
\creflabelformat{equation}{#2\textup{#1}#3}

\hypersetup{colorlinks=true, 
            citecolor=blue, 
            linkcolor=blue, 
            urlcolor=blue}

\newtheorem{definition}{Definition}[section]
\newtheorem{remark}{Remark}[section]
\newtheorem{assumption}{Assumption}[section]
\newtheorem{proposition}{Proposition}[section]

\usepackage[numbers, compress]{natbib}
\newcommand{\termDS}{\substack{v_H q_{inf}^D \\ + \beta_{DD}\mu(DI) \\ + \beta_{UD}\mu(UI)}}
\newcommand{\termUS}{\substack{v_H q_{inf}^U \\ + \beta_{UU}\mu(UI) \\ + \beta_{DU}\mu(DI)}}

\newcommand{\params}[0]{\gamma}

\newcommand{\cA}{\mathcal{A}}
\newcommand{\cM}{\mathcal{M}}
\newcommand{\cP}{\mathcal{P}}

\newcommand{\PP}{\mathbb{P}}
\newcommand{\EE}{\mathbb{E}}
\newcommand{\RR}{\mathbb{R}}
\newcommand{\NN}{\mathbb{N}}
\newcommand{\MM}{\mathbb{M}}

\title{\textbf{Neural Parameter Calibration for Finite-State\\Mean Field Games}}

\author{Anna C.M.~Th\"oni\thanks{Radboud University, Nijmegen, the Netherlands}
\and
  Grégoire Lambrecht\thanks{New York University, New York, NY}
  \and
  G\"okçe Dayanıklı\thanks{University of Illinois Urbana-Champaign, Champaign, IL}
  \and
  Yonathan Efroni\thanks{Tel Aviv University, Tel Aviv, Israel}
  \and
  Tal Kachman\footnotemark[1]
  \and
  Mathieu Laurière\thanks{New York University Shanghai, Shanghai, People's Republic of China}
  }
\date{}

\begin{document}

\maketitle 

\begin{abstract}
	Mean field games efficiently approximate a very large population of strategic agents. While these games can aid the understanding of complex systems, their deployment in real-world settings is challenged by the specification of their parameters: mean field games (MFGs) often involve hidden preferences, constraints, and interactions that can rarely be theoretically derived or directly observed. To address this gap, we present a neural network-based framework for learning parametric, finite-state MFGs from observed population dynamics. To do so, we formulate the parameter calibration as an inverse problem and use implicit differentiation to backpropagate through the games' equilibrium. The resulting approach is fully differentiable and enables us to estimate flexible trajectory-wise parameter paths, including state- and time-dependent specifications without requiring observations of the individual agents' actions or rewards. We provide a proof for the exactness of the gradient computation in a discrete-time formulation. We validate our framework through numerical experiments across four systems of increasing complexity, ranging from synthetic linear-quadratic benchmarks to real-world urban mobility datasets.
\end{abstract}

\section{Introduction}
Mean field games (MFGs) offer a powerful mathematical framework for analyzing the strategic behavior of massive populations of interacting agents. By reducing the intractable complexity of a large number of individual interactions to a tractable problem involving a single representative agent and a macroscopic population distribution, MFGs have enabled important progress in modelling large-scale systems. Originally introduced independently by Lasry and Lions \cite{LasryLionsMFG2007} and Huang, Malhamé, and Caines \cite{HuangMFG2006}, this paradigm has been successfully applied across diverse domains, including urban traffic routing \cite{Chevalier2015,Festa2018,Huang2019,Tanaka2021}, epidemic spread \cite{Hubert2018,Doncel2022,Lee2021,aurell2022finite,Buckley2025,liu2025,liu2026modelingepidemicspreadstrategic}, financial markets \cite{Cardaliaguet2018,Carmona2020,Carmona2023}, and cybersecurity \cite{Kolokoltsov2016,Qian2020,Shen2025}. While a broad part of the literature traditionally considers MFGs in continuous state-action spaces \cite{Bensoussan2013, Carmona2018}, there has been growing interest in discrete-time, discrete-space settings \cite{anahtarci2021learningfittedaverage, yongacoglu2022independent, yardim2023policy} and finite-state, continuous-time MFGs, which are highly suited for discrete decision-making processes~\cite{Gomes2013}.

Despite their theoretical elegance and advances, the deployment of MFGs in real-world scenarios remains bottlenecked by the challenge of parameter specification. Constructing an accurate MFG requires defining complex structural parameters, such as hidden preference costs, transition constraints, and interaction rates. In real-world environments, these parameters are abstract and rarely directly observable, barring the data-driven adoption of MFGs. While purely descriptive computational models or finite-agent simulations can often be calibrated from historical data \cite{Wang2021, Hirano2022, Gaskin2023, Giampiccolo2024}, they inherently lack the causal, strategic, and forward-looking structure that MFGs provide. Consequently, a rigorous methodology to calibrate the structural MFG parameters from observed population data forms a valuable addition to the current modelling landscape.

\textbf{Contributions.} We introduce a neural network-based framework for learning parametric MFGs entirely from macroscopic trajectory data. An overview of the method is presented in Fig.~\ref{fig:method_overview}. We formulate the parameter calibration as an inverse problem: by treating the fixed-point equilibrium solver as an implicit neural network component, our method uses implicit differentiation to backpropagate through the MFG equilibrium. This allows us to learn highly flexible, time-dependent representations of the game's underlying parameters without observing individual agent actions or rewards. Our main contributions are summarized as follows:

\begin{itemize}[topsep=0pt]
    \item We propose a novel, fully differentiable approach that estimates trajectory-wise parameter paths obtained by evaluating a \textbf{time- and mean-field-dependent neural map} on \textbf{observed population-density trajectories}, using implicit differentiation through the equilibrium fixed point.
    \item We provide \textbf{theoretical foundations} supporting the method under suitable conditions, including the exactness of the adjoint gradient computation, the consistency of a computational decoupling strategy, and a uniform generalization bound.
    \item We validate the framework on \textbf{four systems of increasing complexity}, demonstrating that it \textbf{(1)}~accurately calibrates MFG parameters from both synthetic and real data, \textbf{(2)}~outperforms a mean-field dynamics baseline lacking game-theoretic structure, and \textbf{(3)}~maintains robustness to noise and dimensionality scaling in synthetic benchmarks.
\end{itemize}

\textbf{Related Work.} Our approach intersects with several active areas of research. Inverse Reinforcement Learning (IRL) aims to recover reward functions from agent behavior \cite{chen2022individualinverseRL, chen2021adversarial, anahtarci2024maximum, anahtarci2025maximum, alkir2025inverse, anahtarci2025kernel}. It is important to clarify that, unlike standard IRL, our method is not model-free: we assume the model is known up to some parameterization. However, because these parameters can depend on both time and the mean field distribution, learning them remains a challenging task. In the context of MFGs, recent advances in deep learning have primarily focused on solving the forward problem---computing equilibria for known parameters using neural networks or operator learning \cite{Cohen2024, Hofgard2026}---or exploring neural ordinary differential equations or model-based reinforcement learning for forward MFGs \cite{Thoni2026, huang2024model}. While classical parameter estimation has been studied for standard partial differential equations and multi-agent simulations \cite{Gaskin2023}, the calibration of MFGs from macroscopic density data remains largely unexplored. We contribute to filling this gap by considering the continuous-time, finite-state case.

\textbf{Paper Outline.} The remainder of this paper is structured as follows. Sec.~\ref{sec: background} introduces the background of finite-state MFGs. Sec.~\ref{sec: method} details our parameter calibration methodology and its theoretical foundations. Sec.~\ref{sec: examples} presents the test cases we study. Sec.~\ref{sec: experiments} provides the experimental evaluation across synthetic and real-world datasets. We conclude in Sec.~\ref{sec: conclusion}.

\textbf{Code Availability.} All code for the data generation and implementation of the methods can be found at \url{https://github.com/KachmanLab/MFGParameterCalibration}. Experimental details are also provided in the appendices.

\begin{figure}[!htbp]
\centering
\resizebox{0.8\textwidth}{!}{%
\begin{tikzpicture}[
    >=Stealth,
    node distance=1.0cm and 1.4cm,
    databox/.style={rectangle, draw=teal!70!black, fill=teal!8, thick,
                    minimum height=1.0cm, minimum width=1.8cm, align=center,
                    font=\small},
    nnbox/.style={rectangle, draw=violet!70!black, fill=violet!8, thick,
                  rounded corners=3pt, minimum height=1.0cm, minimum width=1.6cm,
                  align=center, font=\small},
    solverbox/.style={rectangle, draw=orange!70!black, fill=orange!6, thick,
                      minimum height=1.0cm, minimum width=2.4cm, align=center,
                      font=\small},
    lossbox/.style={rectangle, draw=red!60!black, fill=red!6, thick,
                    rounded corners=3pt, minimum height=1.0cm, minimum width=1.4cm,
                    align=center, font=\small},
    baselinebox/.style={rectangle, draw=gray!70, fill=gray!8,
                        thick, dashed, minimum height=0.8cm, minimum width=1.7cm,
                        align=center, font=\small},
    arrowlabel/.style={font=\footnotesize, midway},
]

\node[databox] (obs) {Observed\\$\mu$};
\node[nnbox, right=of obs] (nn) {Neural Net\\$\varphi_\theta$};
\node[right=0.9cm of nn, font=\small] (gamma) {$\gamma_\theta(t)$};
\node[solverbox, right=0.9cm of gamma] (solver) {\textbf{MFG Solver} $\Phi$\\[-2pt]{\scriptsize\color{orange!50!black} HJB $\leftarrow$ $|$ $\rightarrow$ FPK}};
\node[right=0.9cm of solver, font=\small] (mubar) {$\bar\mu^\theta$};
\node[lossbox, right=0.9cm of mubar] (loss) {$\mathcal{L}(\theta)$\\[-1pt]{\scriptsize$\|\mu\!-\!\bar\mu^\theta\|^2$}};

\draw[->, thick, teal!70!black] (obs) -- node[arrowlabel, above] {\scriptsize pre-eval} (nn);
\draw[->, thick, violet!70!black] (nn) -- (gamma);
\draw[->, thick] (gamma) -- (solver);
\draw[->, thick, orange!70!black] (solver) -- (mubar);
\draw[->, thick] (mubar) -- (loss);

\draw[->, thick, teal!70!black, dashed] (obs.south) -- ++(0,-0.55) -| node[arrowlabel, below, pos=0.25] {} (loss.south);

\draw[->, thick, red!60!black, densely dashed]
  (loss.north) -- ++(0,0.55) -| node[arrowlabel, above, pos=0.35, font=\scriptsize, text=red!60!black] {$\nabla_\theta\mathcal{L}$ \;(implicit differentiation)} (nn.north);

\node[baselinebox, below=0.45cm of solver] (baseline) {Forward ODE\\[-2pt]{\scriptsize(no game structure)}};
\draw[->, gray!70, thick, dashed] (gamma.south) |- (baseline.west);
\draw[->, gray!70, thick, dashed] (baseline.east) -| node[arrowlabel, right, pos=0.5, font=\scriptsize, text=gray!60!black] {\textit{Baseline}} (mubar.south);

\end{tikzpicture}
}%
\caption{\textbf{Method overview.} The network $\varphi_\theta$ is pre-evaluated on observed trajectories to produce $\gamma_\theta(t)$, which are passed to the MFG solver~$\Phi$ (coupled HJB and FPK equations). Gradients flow back via implicit differentiation through the fixed point (dashed red path). The dashed gray path shows the forward-only baseline that omits the solver process as it does not incorporate any game structure.}
\label{fig:method_overview}
\end{figure}

\section{Background}
\label{sec: background}

\paragraph{General Notation.} Throughout this work, we use the following notation. Let $\NN = \{0,1,\dots\}$ denote the set of natural numbers and $\NN^* = \NN \setminus \{0\}$. For any $k \in \NN$, we define $[k] = \{1,\dots,k\}$. Let $\RR$ denote the set of real numbers. 
For any $d \in \NN^*$, $\RR^d$ denotes the set of real-valued vectors of dimension $d$, and $\RR^{d\times d}$ the set of real-valued square matrices of size $d$. Let $\MM^d = \{ A \in \RR^{d\times d} \, \mid A[x, y] \ge 0 \text{ for any distinct } x,y, \text{ and for any }x\in [d], \; A[x,x]=-\sum_{y\in[d],\,y\ne x} A[x,y] \}$ denote the set of transition rate matrices (infinitesimal generators).

\paragraph{Finite-state Mean Field Games.} 
We consider MFGs with a finite-state space as introduced by \citet{Gomes2013}. Specifically, we let $[d] := \{1,\dots,d\}$ denote the state space. We consider a continuous-time horizon $t \in [0, T]$ with $T > 0$. Let $\cA$ be a Polish action space and $\cP([d])$ the set of probability measures on $[d]$ (the $(d-1)$-simplex). The choice of $\cA$ is model-dependent. A control is a function $\alpha: [0, T] \times [d] \mapsto \cA$. We restrict our attention to admissible controls, i.e., Borel-measurable feedback maps $\alpha:[0,T]\times[d]\to\cA$ such that, for every population flow $\mu$, the induced transition rates define a valid generator in $\MM^d$ and the cost $J^{\gamma,\mu_0}(\alpha;\mu)$ is finite.
We denote by $\gamma: [0, T] \times \cP([d]) \to \RR^n$ the dynamic parameters of the game, belonging to a space of continuous functions $\Gamma$. In practice, we may consider more restrictive parameter classes, such as constant parameters $\gamma \in \RR^n$ or time-only functions $\gamma: [0,T] \to \RR^n$, which are special cases of this general formulation. The game is defined by: a transition rate matrix $Q : \RR^n \times \cA \times \cP([d]) \to \MM^d$, a running cost $f : \RR^n \times [d] \times \cA \times \cP([d]) \to \RR$, and a terminal cost $g : \RR^n \times [d] \times \cP([d]) \to \RR$.
For a specific parameter function $\gamma \in \Gamma$ and values $(t,x,a,\mu) \in [0,T]\times [d]\times \cA \times \cP([d])$, we define the evaluated parametric functions as:
\begin{align}
    \begin{split}
        Q^{\gamma}(t, a,\mu) &:= Q(\gamma(t, \mu), a,\mu), \\
	   f^{\gamma}(t, x,a,\mu) &:= f(\gamma(t, \mu),x,a,\mu),\\
       g^{\gamma}(x,\mu) &:= g(\gamma(T, \mu),x,\mu).
	\label{eq: parametric evaluated functions}
    \end{split}
\end{align}
We denote by $ \cM^{\gamma} := (d,\cA,T,Q^{\gamma},f^{\gamma},g^{\gamma})$ a parametric MFG, and by $(\cM^{\gamma})_{\gamma\in\Gamma}$ the family.
\paragraph{Controls, State Transitions and Markovian Dynamics.}
Analogous to general MFG theory \cite{LasryLionsMFG2007, HuangMFG2006}, finite-state MFGs consider a representative player interacting with a given distribution of agents $\mu: [0, T] \to \cP([d])$. The representative agent influences their transition between states. This process is governed by the transition matrix $Q^{\gamma}(t, \alpha_t, \mu_t) \in \MM^d$, which depends on the parameter function $\gamma$, the player's control $\alpha$, and the distribution over states $\mu_t$. For each $x\ne y\in [d]$, each matrix entry $Q^{\gamma}(t, \alpha_t, \mu_t)[x, y]$ describes the transition rate from state $x$ to $y$ at time $t$. For the parameter function $\gamma$ and given an initial distribution $\mu_0 \in \cP([d])$, the representative agent controls their state $X^{\mu_{0}, \alpha,\gamma}_{t}$ following:
\begin{equation}
	\PP\big(X^{\gamma,\mu_0,\alpha}_{t+h} = y \mid X^{\gamma,\mu_0,\alpha}_{t} = x\big)
	= h\,Q^{\gamma}(t, \alpha_t,\mu_t)[x,y] + o(h), \quad h \to 0^+,
	\label{eq: transition probability}
\end{equation}
where $X_{0}^{\gamma, \mu_0, \alpha} \sim \mu_0$.
The state process $(X^{\gamma,\mu_0,\alpha}_t)_{t\in[0,T]}$ is a continuous-time Markov chain with initial law $\mu_0$, transition rates $Q^\gamma$ and control $\alpha$. It implicitly depends on $\mu = (\mu_t)_{t \in [0,T]}$.

\paragraph{The Expected Cost and Mean Field Equilibrium.} Throughout the game, the representative player aims to minimize their expected cost, a nonlinear function that depends on both the player's control as well as the population distribution $\mu_t \in \mathcal{P}([d])$ \cite{Kolokoltsov2010}. The total expected cost $J$ is the sum of a running cost $f$ and a terminal cost $g$.  If the representative player uses control $\alpha$, its expected cost is
\begin{equation}
	J^{\gamma,\mu_0}(\alpha; \mu) = \mathbb{E}\Big[\int_0^Tf^{\gamma}\left(s, X_{s}^{\gamma, \mu_0, \alpha}, \alpha_{s}(X_{s}^{\gamma, \mu_0, \alpha}), \mu_s \right)ds + g^{\gamma}(X_{T}^{\gamma, \mu_0, \alpha}, \mu_T)\Big].
	\label{eq: cost}
\end{equation}
Because the representative player cannot directly influence the population distribution $\mu$, the player solves a standard optimal control problem, parameterized by $\mu$ (and $\gamma$). 

At the Nash equilibrium, the distribution of the population and the representative player are the same.
\begin{definition}
	For $\gamma \in {\Gamma}$, a Nash equilibrium of the parametric MFG $\cM^{\gamma}$, associated with an initial distribution $\mu_0 \in \mathcal{P}([d])$, is a pair $(\bar{\alpha}^{\gamma,\mu_0}, \bar{\mu}^{\gamma,\mu_0})$ such that: \textbf{(1)} $\bar{\alpha}^{\gamma,\mu_0}$ minimizes the cost functional $J^{\gamma,\mu_0}(\cdot\,; \bar\mu^{\gamma,\mu_0})$ and \textbf{(2)} for every $t \in [0, T]$ it holds that $\bar\mu^{\gamma,\mu_0}_t = \text{Law}(X_{t}^{\gamma,\mu_0, \bar\alpha})$.
	\label{def: MFG equilibrium}
\end{definition}

The second requirement, known as the consistency condition \cite{Caines2017}, is one of the features distinguishing the MFG equilibrium from a mean field control problem, in which a single decision maker controls the mean field.

\paragraph{Characterizing the Mean Field Equilibrium: the Forward-Backward System.} To characterize the equilibrium introduced in Definition \ref{def: MFG equilibrium}, the MFG $\cM^{\gamma}$ is described by a coupled system consisting of a Hamilton--Jacobi--Bellman equation (HJB) and a Fokker--Planck--Kolmogorov equation (FPK). The HJB equation encodes the optimization problem of the representative player, runs backward in time, and describes the value function $\bar{u}^{\gamma,\mu_0}$. The FPK equation encodes the consistency condition, runs forward in time, and governs the evolution of the population distribution $\mu^{\gamma,\mu_0}$. More precisely, the following system of equations characterizes the equilibrium (see~\cite[Section 7.2]{Carmona2018} for details):
\begin{equation}
	\begin{cases}
		-\partial_t \bar{u}^{\gamma,\mu_0}_t(x) = H^{\gamma}\big(t,x,\bar{\mu}^{\gamma,\mu_0}_t, \bar{u}^{\gamma,\mu_0}_t, \bar\alpha^{\gamma,\mu_0}_t(x)\big)                            & (i)   \\
		\partial_t \bar{\mu}^{\gamma,\mu_0}_t(x) = \sum_{y \in [d]} \bar{\mu}^{\gamma,\mu_0}_t(y)\, Q^{\gamma}\big(t, \bar\alpha^{\gamma,\mu_0}_t(y), \bar{\mu}^{\gamma,\mu_0}_t\big)[y,x] & (ii)  \\
		\bar{\mu}^{\gamma,\mu_0}_{|t=0} = \mu_0, \quad \bar{u}^{\gamma,\mu_0}_{T}(x) = g^{\gamma}(x,\bar{\mu}^{\gamma,\mu_0}_T), \quad x \in [d],                                   & (iii)
	\end{cases}
	\label{eq: forward-backward}
\end{equation}
where the equilibrium (feedback) control is $\bar\alpha^{\gamma,\mu_0}_t(x) = \arg\min_{a \in \cA} H^{\gamma}\big(t,x,\bar{\mu}^{\gamma,\mu_0}_t, \bar{u}^{\gamma,\mu_0}_t( \cdot), a\big)$, and the Hamiltonian $H^{\gamma}$ is defined for $p \in \RR^d$ by:
\begin{equation}
	H^{\gamma}(t,x,\mu,p,a) =  f^{\gamma}(t,x,a,\mu) + \sum_{y \in [d]} Q^{\gamma}(t, a,\mu)[x,y]\, p_y.
	\label{eq: hamiltonian}
\end{equation}
Equivalently, since $Q^\gamma(t,a,\mu) \in \MM^d$ has rows summing to zero, this can be written in discrete-gradient form as
\begin{equation*}
H^\gamma(t,x,\mu,p,a)
=
f^\gamma(t,x,a,\mu)
+
\sum_{y\neq x} Q^\gamma(t,a,\mu)[x,y]\,(p_y-p_x).
\end{equation*}
Intuitively, the Hamiltonian combines the instantaneous cost $f^{\gamma}$ with the expected variation of the value function induced by the controlled transitions $Q^{\gamma}$. 
In the sequel, we may omit $\gamma$ or $\mu_0$ from the superscript when the context is clear.

\paragraph{Objectives.} From here, our goal is to propose a method that achieves three main objectives: \textbf{(1) calibrate} the parameterized MFG model $\mathcal{M}_\gamma$ by finding a suitable $\gamma$ so that the mean-field flow matches the observations, \textbf{(2) highlight the forward-looking} quality of the mean field game by outperforming an approach based solely on mean-field dynamics, and \textbf{(3) maintain robustness} against noisy data, (reasonably) out-of-distribution samples, and novel scenarios.

\section{Method}
\label{sec: method}
We describe our algorithm for finding the mean field equilibrium that reflects the learned parameters $\gamma$ in Sec.~\ref{sec: learning parameter estimates}. A theoretical analysis considering the contractivity of our approach, uniqueness of solutions, and associated approximation error is provided in Sec.~\ref{sec: theoretical analysis}.

\subsection{Estimating MFG Parameters}\label{sec: learning parameter estimates}

\paragraph{General Idea.} We estimate the parameters $\gamma_\theta$ to minimize the discrepancy between observed mean field trajectories and the predicted mean field trajectories $\bar\mu^{\gamma_\theta}$, where $\bar\mu^{\gamma_\theta}$ is the mean-field trajectory at equilibrium of the MFG parametrized by $\gamma_\theta$ and starting with the same initial condition as the observation. To this end, we train a neural network $\varphi_\theta$ to approximate the unknown $\gamma$. In general, the network $\varphi_\theta: [0, T] \times \cP([d]) \to \RR^n$ returns time- and distribution-dependent parameter estimates; simpler architectures (constant or time-only) are obtained by restricting the input; see Sec.~\ref{sec: experiments}.

\paragraph{Data.} We observe a dataset $\mathcal{D} = \{(\mu^i_t)_{t\in [0, T]}\}_{1\leq i\leq N_{\mathrm{traj}}}$ of mean field trajectories, generated under an unknown parameter $\gamma$ and possibly from different initial distributions $(\mu^i_0)_{1\leq i\leq N_{\mathrm{traj}}}$. These trajectories may be inexact observations of Nash equilibrium flows due to noise in the data.

\paragraph{Learning the Parameter Estimates.} We approximate $\gamma$ by a neural network $\varphi_\theta$ with parameters $\theta$. We train the parameters to minimize a squared $L_2$ discrepancy between observed mean field trajectories and generated mean field trajectories after solving the MFG corresponding to our current parameter estimate. In practice, we use a minibatch and a random time sub-interval in $[0,T]$ to improve generalization and avoid getting stuck in local loss minima. For an interval $I\subset[0,T]$, we use the continuous-time notation
\[
\|\mu-\mu'\|_{L^2(I)}^2
:=
\int_I \sum_{x\in[d]}(\mu_t(x)-\mu'_t(x))^2\,dt,
\]
which is approximated by a discrete-time sum in the implementation. The average loss over a minibatch of size $M$ is

\begin{equation}
\textstyle
	\mathcal{L}(\theta) = \frac{1}{M} \sum_{i=1}^M
	\left\|\mu^i - \bar\mu^{i, \theta}\right\|_{L^2([t_i, t_i+\delta])}^2,
	\label{eq: training loss}
\end{equation}
where $\bar\mu^{i, \theta} = \bar\mu^{\gamma_\theta, \mu^i_0}$ represents the Nash equilibrium trajectory with $\gamma_\theta$, and $\delta \in (0, T]$ is the sub-trajectory length ($\delta = T$ recovers the full-trajectory loss). In practice, we use the Adam optimizer~\cite{Kingma2014}. The full training and parameter calibration procedure is summarized in Algo.~\ref{al: parameter calibration}. 
\paragraph{Picard Solver.}  Consider a fixed initial distribution $\mu_0$ and parameter $\gamma$. To find the Nash equilibrium, we iteratively update the control, value function, and population distribution \cite{Carlini2014}: at iteration $k$, we have the mean field $\mu^{k}$ and we solve the backward equation to compute $u^{(k+1)}$; from this, we compute the new control $\alpha^{(k+1)}$; finally we solve the forward dynamics using this new control and obtain $\mu^{(k+1)}$.

We stress that this depends on $\gamma$.
We denote a single full cycle of this forward-backward update as an operator $\Xi$, such that $\mu^{(k+1), \gamma} = \Xi(\mu^{(k), \gamma}, \gamma)$. The mean field is expected to converge towards a fixed point $\bar\mu^\gamma$ satisfying $\bar\mu^\gamma = \Xi(\bar\mu^\gamma, \gamma)$, which corresponds to the Nash equilibrium for parameter $\gamma$. The result depends on the initial distribution $\mu_0$ at time $0$, and we denote the fixed-point solver that computes this equilibrium as $\Phi$, mapping the parameters and initial condition to the equilibrium trajectory: $\bar\mu^{\gamma,\mu_0} = \Phi(\gamma, \mu_0)$. 
See App.~\ref{appendix: picard iteration} for details on Picard iteration for MFGs.

\paragraph{Pre-Evaluation.} 
	Instead of embedding the neural network dynamically within the Picard solver which would require evaluating and differentiating the network at every solver iteration to find a coupled fixed point $\bar\mu^{\gamma,\mu_0} = \Phi(\varphi_\theta(\cdot, \bar\mu^{\gamma,\mu_0}),\mu_0)$ where $\gamma = \varphi_\theta(\cdot, \bar\mu^{\gamma,\mu_0})$, we evaluate $\varphi_\theta$ \textit{a priori} on the observed trajectory $\mu^i$ at every point in time $t \in [0,T]$. Because the neural network is evaluated exclusively on the \emph{observed} distributions rather than the solver's intermediate distributions, the resulting parameter $\gamma^i_{\theta}(t) := \varphi_\theta(t, \mu^i_t)$ fed into the Picard solver is a function of time only. This approach completely decouples the parameter prediction from the solver's internal iterations, rendering the computational graph significantly more tractable. For a fixed network, if the observed trajectory is close to the corresponding self-consistent trajectory, then the pre-evaluated parameter path is close to the path that the same network would produce along that self-consistent pair; if these trajectories coincide, the two parameter paths coincide. App.~\ref{app:math_foundations} makes this precise as a conditional stability statement for the computational shortcut.

\paragraph{Implicit Differentiation.} Even with the parameter prediction decoupled, computing the parameter gradients $\nabla_\theta \mathcal{L}(\theta)$ naively requires unrolling the Picard iteration and backpropagating through the entire sequence of solver steps. To avoid the massive memory overhead and gradient instabilities associated with unrolling, we use implicit differentiation through the fixed point \cite{Bai2019, Blondel2022}. Since the converged trajectory $\bar\mu^{i, \theta}$ is a fixed point of the iterative step operator $\Xi$, i.e., $\bar\mu^{i, \theta} = \Xi(\bar\mu^{i, \theta}, \gamma_\theta^i)$, we can apply the Implicit Function Theorem to the residual $\Psi(\bar\mu, \gamma) = \bar\mu - \Xi(\bar\mu, \gamma) = 0$. This yields the exact Jacobian $\tfrac{\partial \bar\mu}{\partial \gamma} = ( I - \tfrac{\partial \Xi}{\partial \bar\mu} )^{-1} \tfrac{\partial \Xi}{\partial \gamma}$. This allows us to compute the vector-Jacobian products required for the backward pass by solving a single adjoint linear system at the equilibrium state, bypassing the need to store or unroll the forward solver iterations (see next subsection for justification). This idea generalizes to other ODE systems (see App.~\ref{appendix: mfc_derivation} for an extension to a mean field control example).

\begin{algorithm}[!htbp]
	\caption{Finding a data-driven optimal control in a finite-state MFG.}
	\begin{algorithmic}[1]
		\Require $K \in \NN$ epochs; $M \in \mathbb{N}$ minibatch size; data $\mathcal{D} \subseteq C([0,T]; \cP([d]))$; Picard solver $\Phi$; learning rate $\rho$; sub-trajectory length $\delta \in (0,T]$
		\State Initialize the neural network $\varphi_\theta$ with parameters $\theta = \theta_1$
		\For{$k = 1, \dots, K$}
		\State Sample a minibatch of $M$ trajectories $\mu^i \in \mathcal{D}$ and $M$ initial times $t_i \in [0,T-\delta]$
		\For{$i = 1, \dots, M$}
		\State For all $t \in [0, T]$, evaluate $\gamma^i_{\theta_k}(t) \gets \varphi_{\theta_k}(t, \mu^i_t)$ \Comment{{\footnotesize Pre-evaluate $\gamma(t)$ from observations}}
		\State $\bar\mu^{i, \theta_k} \gets \Phi\left(\gamma^i_{\theta_k}(\cdot), \mu^i_0\right)$ \Comment{{\footnotesize Solve MFG over the full time horizon $[0,T]$}}
		\State $L_i \gets \|\mu^{i} - \bar\mu^{i,\theta_k}\|_{L^2([t_i, t_i+\delta])}^2$ \Comment{{\footnotesize Calculate loss on the sub-trajectory}}
		\EndFor
		\State $\mathcal{L}(\theta_k) \gets \frac{1}{M} \sum_{i=1}^M L_i$ \Comment{{\footnotesize Compute average loss}}
		\State $\theta_{k+1} \gets \theta_k - \rho\nabla_\theta \mathcal{L}(\theta_k)$ \Comment{{\footnotesize Implicit differentiation update}}
		\EndFor
		\State \Return{trained neural network parameters $\theta_{K+1}$}
	\end{algorithmic}
	\label{al: parameter calibration}
\end{algorithm}

\subsection{Theoretical Guarantees}\label{sec: theoretical analysis}

While a theoretical analysis is not the main focus of this work,  we investigate the key mechanisms guaranteeing the soundness of our method. In this section, we give informal statements; the full assumptions, rigorous statements, and complete proofs are deferred to App.~\ref{app:math_foundations}. Since the algorithm uses discrete-time trajectories, we consider this setting for the analysis. The equilibrium trajectory $\bar\mu^{\gamma,\mu_0}$ is obtained as the output of $\Phi(\gamma, \mu_0)$. Next, we assume that:
\begin{assumption}[Contraction and Differentiability (Informal)]
We operate under standard structural conditions: namely that the Picard operator is a contraction map (often guaranteed by sufficiently small time horizons or the use of damped iterations) and that it is smoothly differentiable w.r.t. $\mu$ and $\gamma$. 
\end{assumption}
We first establish the correctness of the adjoint iteration used for scalable parameter calibration in Algo.~\ref{al: parameter calibration}. The following guarantees that our gradient descent updates are correct, bypassing the need for explicit formation and inversion of the high-dimensional Jacobian matrix $(I - \mathcal{J}_\mu)$.

\begin{proposition}[Adjoint Gradient Computation (Informal)]\label{prop:adjoint_informal}
	The sequence of vectors $w^{(k)}$ computed by the adjoint backward iteration converges to a unique limit $w^*$. Furthermore, the implicit gradient of the mean field loss $\mathcal{L}(\gamma)$ with respect to $\gamma$ is exactly recovered by $\nabla_\gamma \mathcal{L} = \mathcal{J}_\gamma^\top w^*$, where $\mathcal{J}_\gamma$ denotes the Jacobian matrix of the Picard operator with respect to the parameters $\gamma$.
\end{proposition}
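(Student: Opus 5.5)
We work in the discrete-time setting, where the trajectory $\bar\mu$ is a finite-dimensional vector in $\RR^{m}$ and $\gamma\in\RR^{p}$. We write the Picard operator as $\Xi:\RR^m\times\RR^p\to\RR^m$ and set
\[
\mathcal{J}_\mu:=\partial_\mu\Xi(\bar\mu,\gamma), \qquad \mathcal{J}_\gamma:=\partial_\gamma\Xi(\bar\mu,\gamma),
\]
both evaluated at the fixed point. The adjoint iteration is taken to be
\[
w^{(k+1)}=\mathcal{J}_\mu^\top w^{(k)}+\nabla_{\bar\mu}\ell(\bar\mu),
\]
where $\ell(\bar\mu)=\|\mu-\bar\mu\|^2$ is the (discretized) loss as a function of the equilibrium trajectory, so $\mathcal{L}(\gamma)=\ell(\Phi(\gamma,\mu_0))$.

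\textbf{Convergence of the adjoint iteration.} By the contraction assumption there is $\kappa<1$ such that $\|\Xi(\mu,\gamma)-\Xi(\mu',\gamma)\|\le\kappa\|\mu-\mu'\|$ for all $\mu,\mu'$, in some norm. Differentiability then gives $\|\mathcal{J}_\mu\|\le\kappa$ in the induced operator norm, because directional derivatives are limits of difference quotients bounded by $\kappa$. The transpose has the same bound in the dual norm. Hence the affine map $w\mapsto\mathcal{J}_\mu^\top w+\nabla\ell$ is a contraction on $\RR^m$ with the dual norm. By Banach's fixed point theorem, $w^{(k)}$ converges geometrically, from any initialization, to the unique
\[
w^*=(I-\mathcal{J}_\mu^\top)^{-1}\nabla\ell,
\]
where $I-\mathcal{J}_\mu^\top$ is invertible with Neumann series $\sum_j(\mathcal{J}_\mu^\top)^j$.

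\textbf{Exactness of the gradient.} Banach's theorem also gives existence and uniqueness of $\bar\mu^\gamma$ for each $\gamma$. Consider the residual $\Psi(\mu,\gamma)=\mu-\Xi(\mu,\gamma)$. It is $C^1$, and $\partial_\mu\Psi=I-\mathcal{J}_\mu$ is invertible by the Neumann argument above. The Implicit Function Theorem therefore gives a $C^1$ map $\gamma\mapsto\bar\mu^\gamma$ near $\gamma$. By uniqueness of the fixed point, this local map coincides with $\Phi(\cdot,\mu_0)$, and its Jacobian is $\partial_\gamma\bar\mu=(I-\mathcal{J}_\mu)^{-1}\mathcal{J}_\gamma$. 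The chain rule then yields
\[
\nabla_\gamma\mathcal{L}=(\partial_\gamma\bar\mu)^\top\nabla\ell=\mathcal{J}_\gamma^\top(I-\mathcal{J}_\mu^\top)^{-1}\nabla\ell=\mathcal{J}_\gamma^\top w^*,
\]
using $((I-\mathcal{J}_\mu)^{-1})^\top=(I-\mathcal{J}_\mu^\top)^{-1}$. Gradients with respect to $\theta$ follow by one more chain rule through the pre-evaluated map $\theta\mapsto\gamma_\theta^i$, which does not involve the solver.

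\textbf{Main obstacle.} The delicate step is making the contraction assumption strong enough to control the derivative. A contraction only in the $\mu$-argument, and only on an invariant set such as products of simplices, bounds $\mathcal{J}_\mu$ only along directions tangent to that set. I would therefore state the assumption in the full vector space, or in the affine hull with zero-sum tangent directions, and check that $\nabla\ell$ and the adjoint iteration can be restricted or projected consistently. A further point is that the fixed point must be interior enough for $\Xi$ to be differentiable there, given the $\arg\min$ in the control. This requires $C^1$ dependence of the optimal feedback on $(u,\mu,\gamma)$, for instance via strict convexity of the Hamiltonian in $a$, which I would fold into the differentiability assumption.
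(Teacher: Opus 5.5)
Your proof is correct and follows essentially the same route as the paper. You get convergence of the adjoint iteration from $\|\mathcal{J}_\mu\|\le\kappa$, using a dual-norm contraction and Banach's theorem where the paper uses $\rho(\mathcal{J}_\mu^\top)=\rho(\mathcal{J}_\mu)<1$ and the Neumann series, and then the Implicit Function Theorem, the chain rule, and transposition to obtain $\nabla_\gamma\mathcal{L}=\mathcal{J}_\gamma^\top w^*$.

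Your caveat is also well taken. The paper infers $\|\mathcal{J}_\mu\|_\infty\le\kappa$ directly from contraction on $\cP([d])^{N+1}$, which by itself only controls zero-sum (tangent) directions. Your full-neighborhood or tangent-space formulation is the cleaner way to make that step rigorous.
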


See Prop.~\ref{prop:adjoint} for the details. In Algorithm~\ref{al: parameter calibration}, this result is applied to the pre-evaluated parameter path
\(
\Gamma_i(\theta):=(\varphi_\theta(t_k,\mu^i_{t_k}))_{k=0}^N.
\)
The adjoint solve gives the gradient with respect to this path,
\(
\nabla_{\Gamma_i}\mathcal L_i=\mathcal J_\gamma^\top w_i^*.
\)
The gradient with respect to the neural-network parameters is then obtained by the ordinary chain rule,
\(
\nabla_\theta \mathcal L_i(\theta)
=
D_\theta\Gamma_i(\theta)^\top \mathcal J_\gamma^\top w_i^*,
\)
and the minibatch gradient is the average over sampled trajectories.

Next, we justify the computational decoupling idea described in the Pre-Evaluation paragraph of Sec.~\ref{sec: learning parameter estimates}. Instead of evaluating $\varphi_\theta$ on the solver's intermediate trajectories, we evaluate it directly on the observed trajectory $\mu^{\mathrm{obs}}$. The following result quantifies the error introduced by this pre-evaluation step.

\begin{proposition}[Pre-Evaluation Consistency (Informal)]\label{prop:consistency_informal}
    For a fixed network $\varphi_\theta$, the discrepancy between the parameter path obtained by evaluating $\varphi_\theta$ on $\mu^{\mathrm{obs}}$ and the path obtained by evaluating the same network on the coupled equilibrium trajectory is bounded proportionally by $\|\bar\mu-\mu^{\mathrm{obs}}\|_\infty$. In particular, if the observed and equilibrium trajectories coincide, then the decoupled and coupled parameter paths coincide.
\end{proposition}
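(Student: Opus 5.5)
The plan is to reduce the statement to a Lipschitz estimate on the network in its distribution argument. We work in the discrete-time setting of the analysis, with grid $0=t_0<\dots<t_N=T$. The two parameter paths are defined as follows. The decoupled path is
\[
\gamma^{\mathrm{dec}}_\theta(t_k):=\varphi_\theta(t_k,\mu^{\mathrm{obs}}_{t_k}).
\]
The coupled path is
\[
\gamma^{\mathrm{cpl}}_\theta(t_k):=\varphi_\theta(t_k,\bar\mu_{t_k}),
\]
where $\bar\mu$ is a self-consistent trajectory, i.e.\ $\bar\mu=\Phi(\varphi_\theta(\cdot,\bar\mu_\cdot),\mu_0)$. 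We assume that such a coupled fixed point exists; this is part of the conditional nature of the statement and is not something we prove.

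The key structural assumption, to be stated in the formal version, is that for fixed $\theta$ the map $\mu\mapsto\varphi_\theta(t,\mu)$ is Lipschitz on $\cP([d])$ with a constant $L_\theta$ that is uniform in $t$. This holds for standard networks with Lipschitz activations, and $L_\theta$ can be bounded by the product of the layer operator norms.

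The core estimate is pointwise in time. For each grid index $k$,
\[
\big|\gamma^{\mathrm{dec}}_\theta(t_k)-\gamma^{\mathrm{cpl}}_\theta(t_k)\big|
=\big|\varphi_\theta(t_k,\mu^{\mathrm{obs}}_{t_k})-\varphi_\theta(t_k,\bar\mu_{t_k})\big|
\le L_\theta\,\big|\mu^{\mathrm{obs}}_{t_k}-\bar\mu_{t_k}\big|.
\]
Taking the maximum over $k$ gives
\[
\|\gamma^{\mathrm{dec}}_\theta-\gamma^{\mathrm{cpl}}_\theta\|_\infty\le L_\theta\|\bar\mu-\mu^{\mathrm{obs}}\|_\infty.
\]
The "in particular" clause follows immediately: if $\bar\mu=\mu^{\mathrm{obs}}$, the right-hand side vanishes, so the two paths coincide at every grid point.

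The main subtlety is bookkeeping rather than analysis. The norm on $\cP([d])$ used for the Lipschitz bound must match the norm in $\|\cdot\|_\infty$ over the trajectory; all norms on $\RR^d$ are equivalent, so this changes only the constant. The bound is in terms of the coupled trajectory $\bar\mu$, not the decoupled solver output $\Phi(\gamma^{\mathrm{dec}}_\theta,\mu_0)$.

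As an optional strengthening, under the contraction assumption one can also compare $\Phi(\gamma^{\mathrm{dec}}_\theta,\mu_0)$ with $\bar\mu$. Let $\kappa<1$ be the contraction modulus of $\Xi$ in $\mu$, and let $L_\gamma$ be its Lipschitz constant in $\gamma$. A standard fixed-point perturbation argument then gives
\[
\|\Phi(\gamma^{\mathrm{dec}}_\theta,\mu_0)-\bar\mu\|\le \frac{L_\gamma L_\theta}{1-\kappa}\,\|\bar\mu-\mu^{\mathrm{obs}}\|_\infty.
\]
This is not needed for the stated claim.
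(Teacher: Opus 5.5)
Your proposal is correct and is essentially the paper's proof. The paper likewise applies the Lipschitz bound on $\varphi_\theta$ in its distribution argument pointwise and takes the maximum over the grid to get $\|\gamma^{\mathrm{c}}-\gamma^{\mathrm{d}}\|_\infty\le L_\varphi\|\bar\mu-\mu^{\mathrm{obs}}\|_\infty$; your optional strengthening is exactly the second bound of the formal proposition, which the paper obtains by the same contraction plus Lipschitz-in-$\gamma$ rearrangement using $\bar\mu=\Phi(\gamma^{\mathrm{c}},\mu_0)$.
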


 See Prop.~\ref{prop:consistency} for details. Finally, we establish a generalization bound that characterizes the population risk (the expected loss on unseen initial distributions) of the learned parameters. A central challenge is that the equilibrium map $\Phi$ is defined implicitly. However, the contractive property of the Picard operator ensures that $\Phi$ is Lipschitz continuous, which tightly controls the complexity of the neural network's hypothesis space when composed with the fixed-point solver.

\begin{proposition}[Uniform Generalization Bound (Informal)]\label{prop:generalization_informal}
    Let $\Theta \subset \RR^p$ be a compact parameter space, and suppose we observe $M$ independent training trajectories. Then, with high probability, 
    the population risk $\mathcal{R}(\theta)$ is bounded above by the empirical training risk $\hat{\mathcal{R}}(\theta)$, uniformly for all $\theta \in \Theta$, by an error term of order $\mathcal{O}(M^{-1/2})$ up to an additive term proportional to the covering radius.
\end{proposition}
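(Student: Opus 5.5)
The plan is the standard Lipschitz-composition argument combined with a covering of $\Theta$. First, I will show that the equilibrium map $\Phi$ is Lipschitz in $\gamma$ for each fixed $\mu_0$. Let $\Xi(\cdot,\gamma)$ be a contraction in $\mu$ with constant $\kappa<1$ and Lipschitz in $\gamma$ with constant $L_\gamma$. Then for two parameter paths $\gamma,\gamma'$,
\begin{equation*}
\|\bar\mu^{\gamma}-\bar\mu^{\gamma'}\|_\infty \le \|\Xi(\bar\mu^{\gamma},\gamma)-\Xi(\bar\mu^{\gamma'},\gamma)\|_\infty+\|\Xi(\bar\mu^{\gamma'},\gamma)-\Xi(\bar\mu^{\gamma'},\gamma')\|_\infty ,
\end{equation*}
which gives $\|\Phi(\gamma,\mu_0)-\Phi(\gamma',\mu_0)\|_\infty\le \tfrac{L_\gamma}{1-\kappa}\|\gamma-\gamma'\|_\infty$.

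Next, I will assume that $\theta\mapsto\varphi_\theta(t,\mu)$ is $L_\varphi$-Lipschitz uniformly in $(t,\mu)$; this holds, for instance, for a smooth network on compact $\Theta$. The pre-evaluated path $\Gamma_i(\theta)$ is then Lipschitz in $\theta$. The loss is bounded because all $\mu$ lie in the simplex, so each per-trajectory loss satisfies $\ell\le 2T$ in the discrete-time norm. The squared loss is Lipschitz in $\bar\mu$ with constant of order $T$ on this bounded set. Composing these facts, the per-sample loss $\ell(\theta;\mu^i)$ is $L:=C\,T\,L_\varphi L_\gamma/(1-\kappa)$-Lipschitz in $\theta$, uniformly over trajectories.

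Now fix $\varepsilon>0$ and take an $\varepsilon$-net $\{\theta_1,\dots,\theta_{N(\varepsilon)}\}$ of $\Theta$. Compactness gives $N(\varepsilon)\le(3\,\mathrm{diam}\,\Theta/\varepsilon)^p$. The training trajectories are i.i.d. and the loss is bounded in $[0,B]$, so Hoeffding's inequality and a union bound over the net give, with probability at least $1-\delta$ and simultaneously for all net points,
\begin{equation*}
\mathcal R(\theta_j)\le\hat{\mathcal R}(\theta_j)+B\sqrt{\tfrac{\log N(\varepsilon)+\log(1/\delta)}{2M}} .
\end{equation*}
For an arbitrary $\theta$, I pick $\theta_j$ with $\|\theta-\theta_j\|\le\varepsilon$. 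Lipschitz continuity moves both $\mathcal R$ and $\hat{\mathcal R}$ by at most $L\varepsilon$, which yields the claim with additive term $2L\varepsilon$, proportional to the covering radius, and a leading error of order $M^{-1/2}$.

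The main obstacle is the first step: the Lipschitz bound must be uniform over the initial distributions $\mu_0$ and over all parameter paths produced by the network. The contraction constant $\kappa$ from the assumption has to hold uniformly on the relevant bounded set of $\gamma$, so I will take that uniformity as part of the formal assumption. A secondary subtlety is the randomness introduced by the sub-interval sampling $t_i$. I will handle it by defining the risk with the full-horizon loss ($\delta=T$), or by treating $t_i$ as part of the i.i.d. sample; Hoeffding applies in either case.
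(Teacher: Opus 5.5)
Your proposal is correct and follows essentially the same route as the paper. Both arguments get Lipschitz continuity of $\Phi$ in $\gamma$ with constant $L_{\Xi,\gamma}/(1-\kappa)$ from the contraction (the paper's Proposition~\ref{prop:consistency}(ii)), compose it with the $\theta$-Lipschitz network and the bounded squared loss, and then use an $\epsilon$-net, Hoeffding with a union bound, and a $2L\epsilon$ transfer to arbitrary $\theta$. The only differences are in constants: your $\ell\le 2T$ holds when $\mu^{\mathrm{obs}}_0=\mu_0$, whereas the paper allows a mismatch at $k=0$ and uses $B=2T(1+1/N)$, and your $\log(N(\varepsilon)/\delta)$ is slightly sharper than the paper's $\log(2\mathcal N(\epsilon,\Theta)/\delta)$.
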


See Prop.~\ref{prop:generalization} for details. 
This guarantee applies to the i.i.d. sampling setting stated in the proposition; the real-data experiments below are empirical evaluations and may involve temporal dependence or domain shift. We empirically validate the above convergence rate on an example in App.~\ref{app:generalization_algo}.

\section{Test Cases}\label{sec: examples}
We consider the following examples to provide numerical evidence for the accuracy and generalization of our approach. We provide a brief overview here, and refer to App.~\ref{appendix: experimental_details} for the full game descriptions, synthetic data generation and experimental details.

\paragraph{Linear-Quadratic.} We consider a linear-quadratic (LQ) MFG that is adapted from \citet[Section 7.1]{Cohen2024} and can be regarded as a crowd motion model that is purely based on the individual player's control. The model has a flexible number of states $d$, bounded continuous transition-rate controls $a_{xy}\in[1,3]$ for $y\neq x$, and one learnable parameter ($\gamma \equiv b \in \RR$). We instantiate the game using synthetically generated data under different conditions of synthetic noise drawn from a Dirichlet distribution. 

\paragraph{Cybersecurity.}
The cybersecurity model considers a botnet-like population of computers that may become victims of a cyberattack. It is originally introduced by \citet{Kolokoltsov2016}, and has subsequently been revisited by \citet[Section 7.2.3]{Carmona2018}, solved numerically by \citet[Section 7.2]{lauriere2021} and solved approximately via the master equation by \citet[Section 7.4]{Cohen2024}. We focus on the implementation found in \cite{Carmona2018}. The game has $d=4$ states, corresponding to protected/unprotected and susceptible/infected computers, a discrete action space $\cA=\{0,1\}$, and eight learnable parameters ($\gamma\in\RR^8$). Just like the linear-quadratic example, the game is instantiated using synthetically generated data under different conditions of synthetic noise drawn from a Dirichlet distribution.

\paragraph{Susceptible-Infected-Recovered.}
The Susceptible-Infected-Recovered (SIR) model is well-studied in the MFG literature \cite{Kermack1927}. Related works include, but are not limited to, dynamics related to vaccination \cite{Hubert2018, Doncel2022}, social distancing \cite{Lee2021, aurell2022finite,Buckley2025}, both vaccination and social distancing \cite{liu2025,liu2026modelingepidemicspreadstrategic} and optimal interventions via governing institutions \cite{Aurell2022}. In this section, we consider a simplified variant of the Stackelberg MFG proposed by \citet{Aurell2022}, allowing a population of homogeneous players to control their own interaction rate. The game has $d=3$ states ($[d]=\{S, I, R\}$), a continuous action space $\cA =[0, 1]$ and four learnable parameters ($\gamma \in \RR^4$). 
We analyze a publicly available influenza monitoring dataset from the \textit{National Respiratory and Enteric Virus Surveillance System (NREVSS)} \cite{CDC},, which describes the weekly positive rates of influenza tests per U.S. state as reported to the \textit{U.S. Centers for Disease Control and Prevention} (CDC). We filter the data by U.S. state and year, and use the percentage of positive Influenza A tests as an observation proxy for the infected coordinate $\mu(I)$, rather than as a fully observed population distribution. We capture the yearly epidemic peaks, where each state and year constitutes one observed trajectory ($N=30$ weeks). We approximately fit our 30-week window such that each flu season starts around $t=0$. The filtering is approximate on purpose to generate different initial states $\mu_0$. We drop all trajectories that contain more than 15 missing values, yielding 279 observed trajectories that we chronologically split into training and testing sets. 
\paragraph{Urban Mobility.}
As a last example, we consider a bike-sharing model where we track the cumulative arrivals and departures at six stations. The example allows for the infusion of strategic behavior in the players' commute by penalizing both the deviation from natural bike demand patterns and arriving at congested stations. Therefore, the game has $d=6$ states, continuous nonnegative outgoing transition-rate controls, and learnable base-rate and cost parameters.
We use the publicly available \textit{Citi Bike trip-level} dataset from New York City \cite{Citibike}. Each record contains a trip's start time, end time, origin station, and destination station. We select a cluster of $K=5$ stations in Midtown Manhattan and introduce a sixth ``external'' state representing the rest of the network. By tracking cumulative arrivals and departures at each station in 15-minute bins over the interval $[06{:}00, 20{:}00]$, we reconstruct the empirical mean field distribution $\hat{\mu}(t) \in \cP([d])$. Each weekday constitutes one observed trajectory ($N = 64$ time steps), providing approximately 20 trajectories per month that we split into training and testing sets.

\section{Experiments \& Results}\label{sec: experiments}
We evaluate the objectives that we introduced in Sec.~\ref{sec: background} through numerical experiments based on the examples from Sec.~\ref{sec: examples}. 
The synthetic experiments are closest to the i.i.d. setting used in the theoretical generalization result, whereas the influenza and bike-sharing experiments use real data with chronological structure and should be interpreted as empirical evaluations beyond that formal guarantee.
All experiments have been run $n=5$ times with different random seeds. We report the mean $\pm$ standard deviation over all runs. 

\paragraph{Objective 1: MFG Calibration.}
We assess the calibration quality by evaluating the predicted mean field flow $\mu^\theta_t$ and $L_2$ performance on the LQ and Cybersecurity MFGs with synthetic data. The true parameters $\gamma_i \in [\gamma_{i,\min}, \gamma_{i,\max}]$ take three functional forms: constant, time-dependent and mean field-dependent (see App.~\ref{appendix: experimental_details}). Fig.~\ref{fig: calibration_lq_performance} shows the predictions of the calibrated MFG on two data points from the test set, together with the $L_2$ error on the predicted mean field flow (train and test) and parameters for the LQ example. From the figure, it follows that the predicted mean field flows closely match the observations as the training converges and the $L_2$ loss on the parameter estimates is minimized. Results for the cybersecurity example are in App.~\ref{appendix: extra_analyzes}.
\begin{figure}[!htbp]
    \centering
    \begin{subfigure}[b]{0.24\textwidth}
    \includegraphics[width=\textwidth]{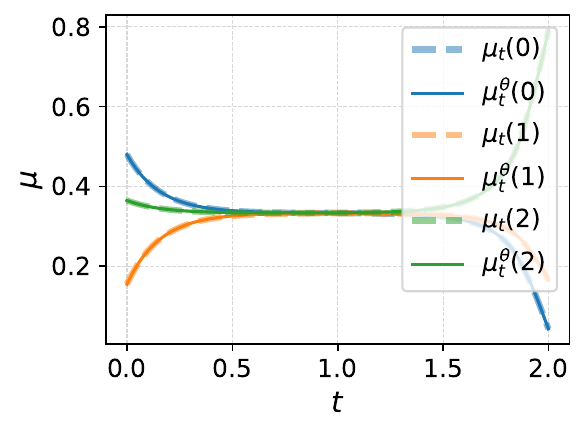}
    \end{subfigure}
    \hfill
    \begin{subfigure}[b]{0.24\textwidth}
        \includegraphics[width=\textwidth]{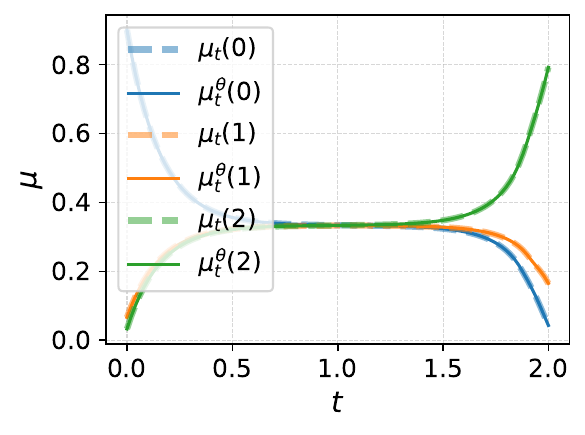}
    \end{subfigure}
    \hfill
        \begin{subfigure}[b]{0.24\textwidth}
    \includegraphics[width=\textwidth]{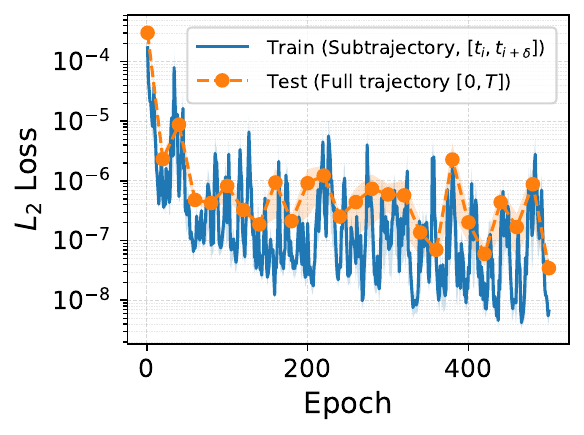}
    \end{subfigure}
        \begin{subfigure}[b]{0.24\textwidth}
    \includegraphics[width=\textwidth]{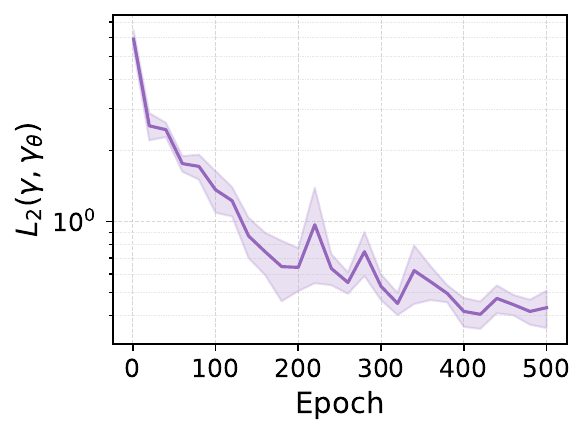}
    \end{subfigure}
    \caption{LQ MFG, from left to right: Predicted mean field flow on two random samples from the test set, $L_2$ loss on mean fields (train and test), and $L_2$ error on predicted parameters.}
        \label{fig: calibration_lq_performance}
\end{figure}

\paragraph{Objective 2: Comparing MFGs and Forward Models.} 
To isolate and demonstrate the forward-looking power of the MFG formulation, we compare our model to a mean field dynamics baseline that describes the population flow via exogenous transition rates. The MFG model contains the forward dynamics baseline as a special case when the strategic costs and terminal cost vanish and the HJB solution is selected as $u\equiv0$, in which case $a_{xy}^*=q_{xy}^{\mathrm{base}}$. We evaluate the two models under the following three testing regimes on the urban mobility example. More elaborate experimental descriptions and results are provided in App.~\ref{appendix: biking_running}.

\textit{Mean field dynamics model (baseline).} In this model, agents follow exogenous, time-dependent transition rates without any strategic behavior. The population distribution evolves according to the Kolmogorov forward equation:
$
	\dot{\mu}_i = \sum_{j \neq i} \mu_j \, q_{ji}(\gamma) - \mu_i \sum_{j \neq i} q_{ij}(\gamma),
$

where the rates $q_{ij}(t) = \mathrm{softplus}(\gamma_{ij}(t))$ are parameterized by a neural network $\varphi_\theta : [0,T] \to \RR^{d(d-1)}$. Depending on the test regime (see below), $\varphi_\theta$ is either a constant vector, a single affine layer, or a multi-layer perceptron. The network is trained by minimizing the $L_2$ discrepancy between the predicted and observed trajectories, analogously to the procedure described in Sec.~\ref{sec: method}.

\textit{Mean field game model.} In the MFG formulation, each commuter optimizes their destination choice to minimize a cost that penalizes both deviating from the natural demand pattern and arriving at congested stations. The running cost is:
\begin{equation}\label{eq:bike_cost}
\textstyle
	f(x, a, \mu) = b \sum_{y \neq x} (a_{xy} - q_{xy}^{\mathrm{base}})^2 + c \sum_{y \neq x} \mu(y) \, a_{xy},
\end{equation}
where $q_{xy}^{\mathrm{base}}(t) = \mathrm{softplus}(\gamma_{xy}(t))$ are the same base rates as in the baseline model, $b > 0$ controls the cost of deviating from the natural demand, and $c \geq 0$ weights the destination congestion penalty. The terminal cost is $g(x) = w_g(x)$, where $w_g \in \RR^d$ is a learnable vector initialized to zero. We consider three test cases of increasing complexity and compare all model predictions with the observed mean field mass:
\begin{itemize}[leftmargin=20pt,itemsep=1pt, topsep=0pt]
    \item\textbf{Test 1: Constant Base Rates.} We restrict the base transition rates to be entirely constant over time: $\varphi_\theta(t) = \gamma_0$ (a learned but time-independent vector). A standard ODE with constant rates decays toward a stationary distribution and cannot recreate the complex, double-peaked morning and evening commute patterns. The MFG, however, learns a non-trivial terminal cost $w_g(x)$ representing commuters' end-of-day location preferences (Fig.~\ref{fig:MFG_MF_comparison}a). 
    \item\textbf{Test 2: Linear Base Rates.} We allow the base rates to be affine functions of time: $\varphi_\theta(t) = W t + b_{\text{bias}}$, where $W \in \RR^{d(d-1) \times 1}$ and $b_{\text{bias}} \in \RR^{d(d-1)}$. This enables the baseline to capture monotonic trends over the day but prevents it from fitting highly non-linear dynamics (Fig.~\ref{fig:MFG_MF_comparison}b). 
    \item\textbf{Test 3: Restricted Dynamics with General Cost.} We consider a practitioner who relies on a rigid phenomenological model for the baseline dynamics (linear base rates, as in Test~2) and does not wish to alter it. Instead, we augment the MFG with a flexible cost network $\psi_\phi : [0,T] \times \cP([d]) \to \RR^{d(d-1)}$ (a 2-layer MLP with 64 hidden units and $\tanh$ activations), which replaces the scalar congestion cost $c \cdot \mu(y)$ in~\eqref{eq:bike_cost} with a general, potentially negative, cost vector $\gamma_{\text{cost}}(t, \mu_t) := \psi_\phi(t, \mu_t) \in \RR^{d(d-1)}$. The running cost becomes:
    $
    	f(x, a, \mu, t) = b \sum_{y \neq x} (a_{xy} - q_{xy}^{\mathrm{base}})^2 - \sum_{y \neq x} [\gamma_{\text{cost}}]_{xy}(t, \mu) \, a_{xy},
    $

    and the corresponding optimal control is:
    $
    	a_{xy}^* = \max\!\Big(0, \; q_{xy}^{\mathrm{base}} + \frac{[\gamma_{\text{cost}}]_{xy} - (u(y) - u(x))}{2b}\Big).
    $

    The sign of the linear cost term is reversed compared to~\eqref{eq:bike_cost}, and the network $\psi_\phi$ can output negative values, enabling the model to learn both congestion penalties and incentives (Fig.~\ref{fig:MFG_MF_comparison}b). 
\end{itemize}
We show the predicted mean field mass for a single station (Station 0) for the three test cases in Fig.~\ref{fig:MFG_MF_comparison}. 
\begin{figure}[!htbp]
    \centering
    \begin{subfigure}[b]{0.24\textwidth}
    \includegraphics[width=\textwidth]{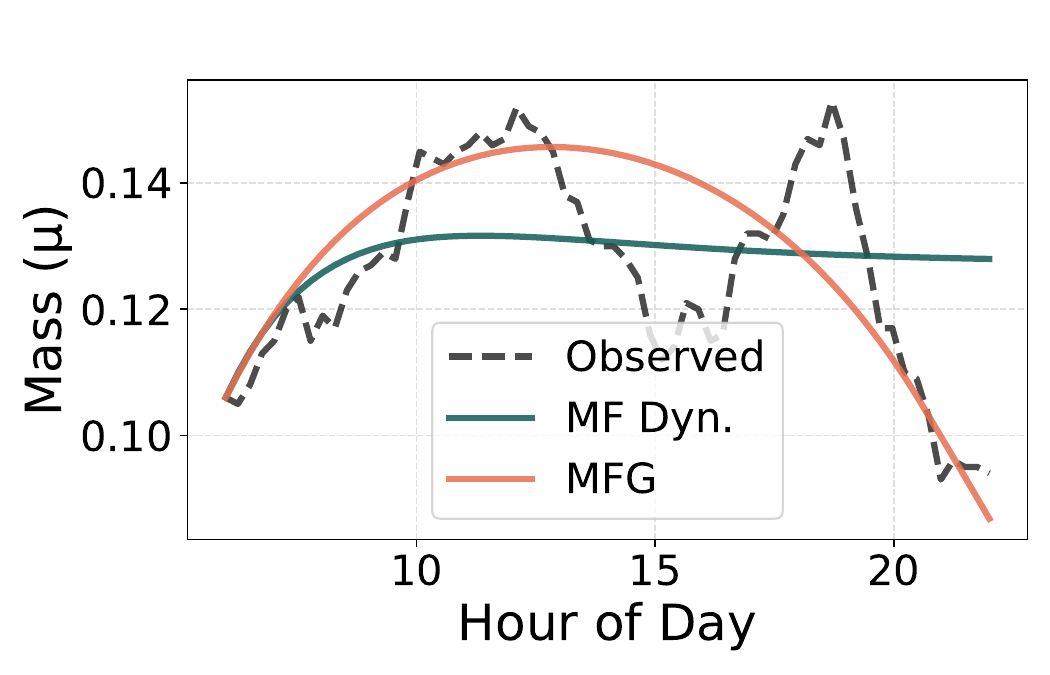}
    \caption{}
    \end{subfigure}
    \hfill    \begin{subfigure}[b]{0.24\textwidth}
    \includegraphics[width=\textwidth]{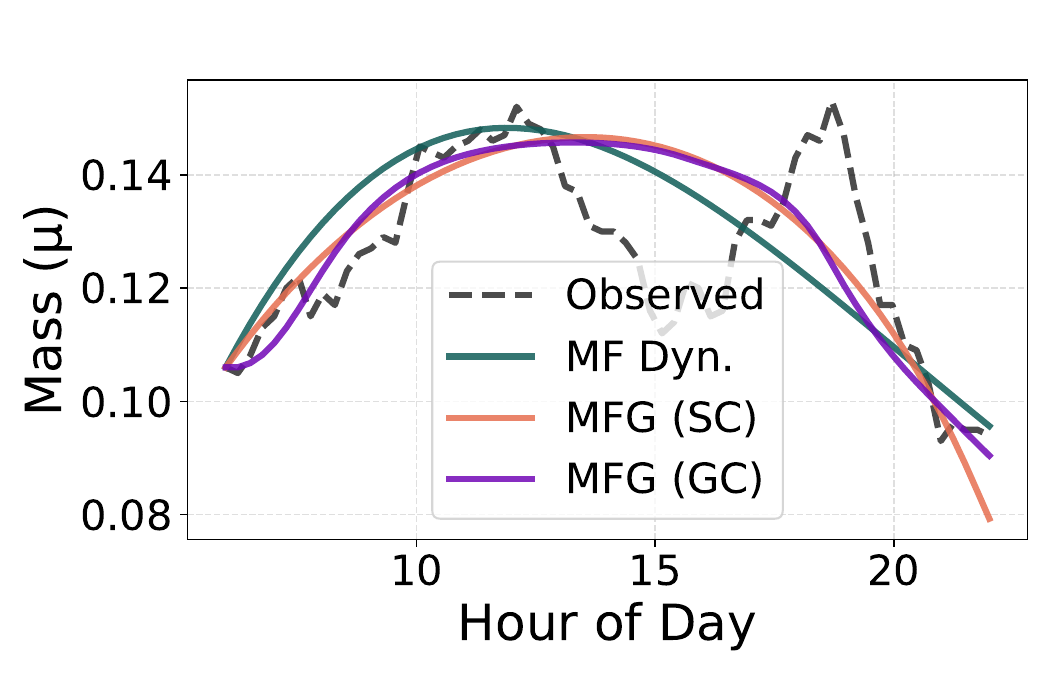}
    \caption{}
    \end{subfigure}
    \hfill
        \begin{subfigure}[b]{0.24\textwidth}
    \includegraphics[width=\textwidth]{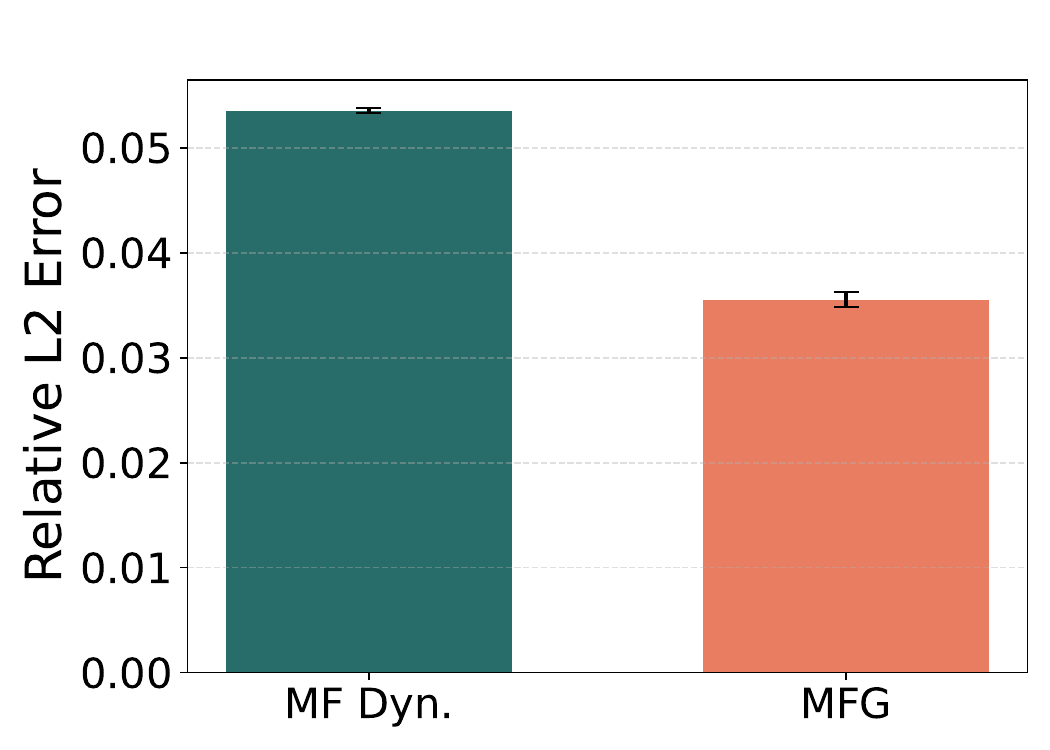}
    \caption{}
    \end{subfigure}
    \hfill
     \begin{subfigure}[b]{0.24\textwidth}
    \includegraphics[width=\textwidth]{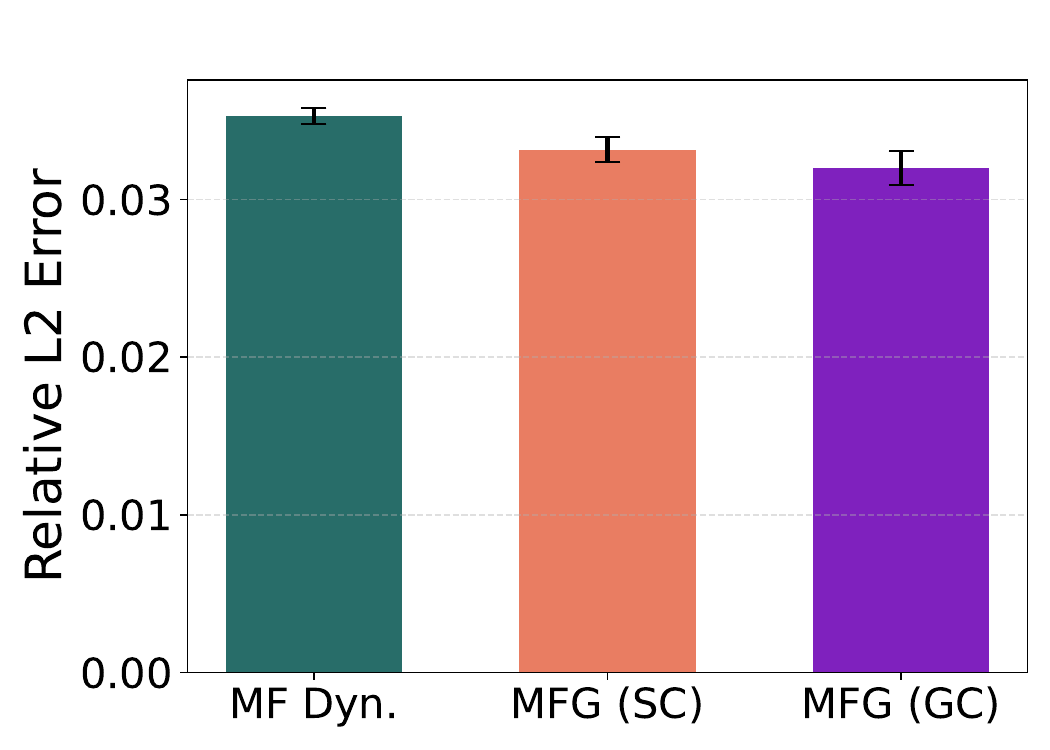}
    \caption{}
    \end{subfigure}
    \vspace{-6pt}
    \caption{Predicted mean field mass for Station 0 under three test cases: (a) constant base rates, (b) linear base rates with scalar costs (SC) and linear base rates with general costs (GC). Results are shown for both the mean field dynamics baseline (MF Dyn.) and the MFG throughout. Relative errors with respect to the observed mean field mass compare the models with (c) constant base rates and (d) linear base rates with scalar and general costs.}
    \label{fig:MFG_MF_comparison}
\end{figure}
Although neither model perfectly matches the observations (because it comes from real data and not synthetic MFG-based data), the MFG benefits from the inclusion of the terminal cost $g_w$ that is propagated backward in time by the HJB. This causes agents to strategically adjust their flows throughout the day and outperform the MF baseline on the in-distribution weekday test shown in this experiment (Fig.~\ref{fig:MFG_MF_comparison}c, d).

\paragraph{Objective 3: Noise Robustness, Out-of-Distribution Forecasts and Novel Scenarios.}
Fig.~\ref{fig:noise_robustness} presents an analysis of robustness to noise in the observed trajectories $\mu_t$ for the LQ model. As the noise level increases, the mean-field $L_2$ error worsens, while the parameter estimates remain qualitatively close to the ground truth in the displayed time- and mean-field-dependent examples.

\begin{figure}[ht]
    \centering
    \begin{subfigure}[b]{0.24\textwidth}
        \includegraphics[width=\textwidth]{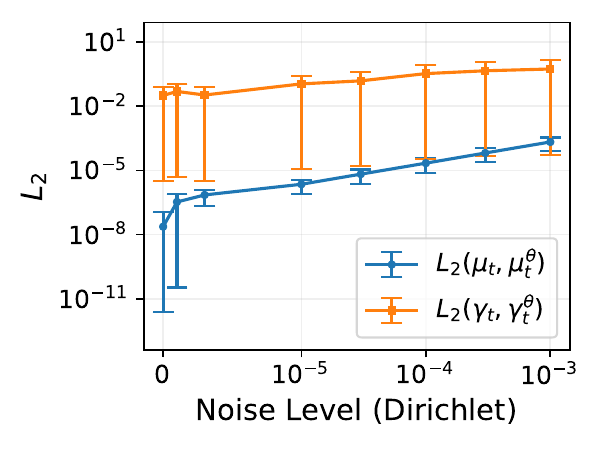}
    \end{subfigure}
    \begin{subfigure}[b]{0.24\textwidth}
        \includegraphics[width=\textwidth]{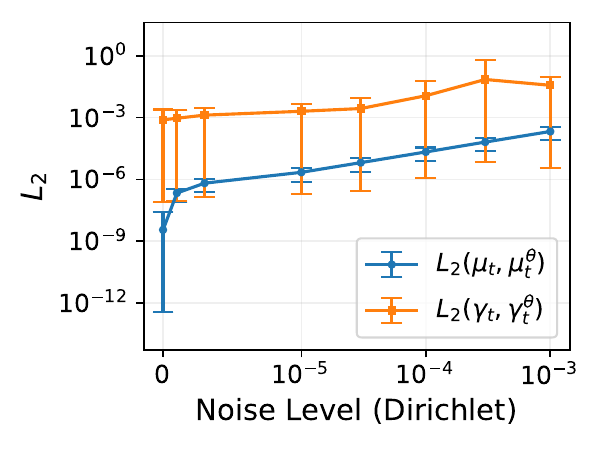}
    \end{subfigure}
    \hfill
    \begin{subfigure}[b]{0.24\textwidth}
        \includegraphics[width=\textwidth]{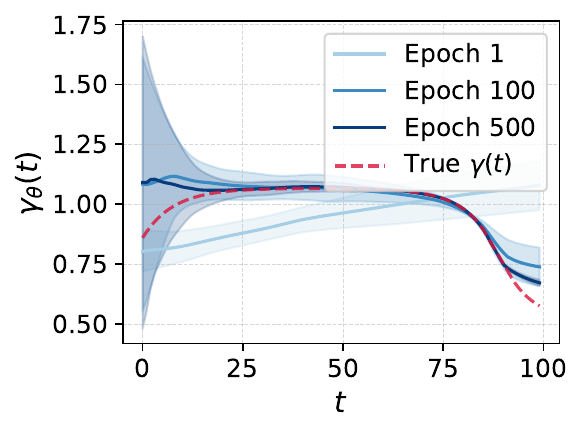}
    \end{subfigure}
    \hfill
    \begin{subfigure}[b]{0.24\textwidth}
        \includegraphics[width=\textwidth]{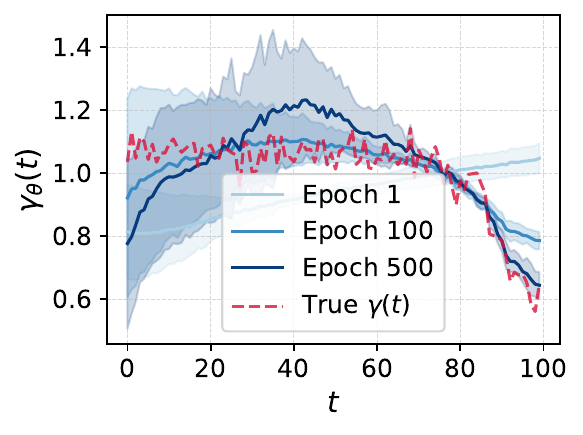}
    \end{subfigure}
    \caption{Noise robustness for the parameter estimates $\gamma_\theta$ and the mean field predictions $\mu_\theta$ for the LQ model. From left to right: time-dependent and mean field-dependent, predictions for the mean-field dependent $\gamma_t$ at $\sigma=0$ and $\sigma=10^{-3}$.}\label{fig:noise_robustness}
\end{figure}

To further demonstrate the noise robustness of the mean field flow predictions, we assess the calibration accuracy in a real-world system by modelling Influenza activity in the United States. We train the calibration method on historical weekly observations and evaluate it on held-out flu seasons. The resulting out-of-distribution prediction of the SIR dynamics is associated with an error of $L_2(\mu_t(I), \mu^\theta_t(I)) = 4.36 \times 10^{-3} \pm 1.36 \times10^{-2}$, and representative predictions for 2023--2025 are shown in Fig.~\ref{fig:sir_influenza_trajectories}. Because test positivity is not the same object as disease prevalence and only the infected coordinate is observed, this experiment should be interpreted as calibration to an influenza activity proxy. The susceptible and recovered trajectories shown in Fig.~\ref{fig:sir_influenza_trajectories} are model-implied latent coordinates, not independently observed CDC quantities.
\begin{figure}[h]
    \centering
    \begin{subfigure}[b]{0.28\textwidth}
        \includegraphics[width=\textwidth]{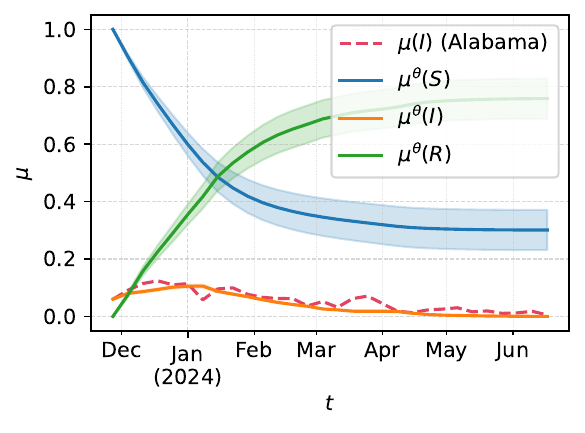}
    \end{subfigure}
    \hfill
    \begin{subfigure}[b]{0.28\textwidth}
        \includegraphics[width=\textwidth]{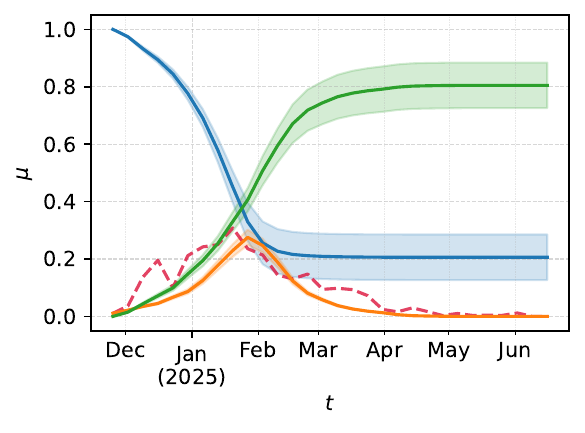}
    \end{subfigure}
    \hfill
    \begin{subfigure}[b]{0.28\textwidth}
\includegraphics[width=\textwidth]{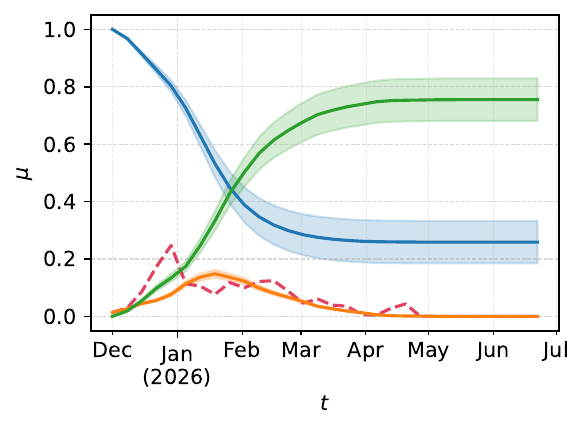}
    \end{subfigure}
    \caption{Evolution of the mean field distribution trajectories $\mu^\theta_t$ for the U.S.~state Alabama in for representative held-out seasons from 2023 to 2025. The predictions are compared to the ground truth $\mu_t(I)$. The predicted mean field flow describing the number of susceptible and recovered agents is derived from the SIR definition given $\mu^\theta_t(I)$. }\label{fig:sir_influenza_trajectories}
\end{figure}
Even though the predicted curves do not perfectly match the observed fraction of positive tests, the model captures important characteristics such as the magnitude and duration of the infection peak. 

Lastly, we evaluate our model on its zero-shot adaptability to novel scenarios. In doing so, we highlight a fundamental advantage of MFGs: in contrast to a forward baseline model, game-theoretic models have the ability to evaluate counterfactual scenarios without retraining. We illustrate this by revisiting the baseline comparison in the urban mobility MFG, while simulating an announced station closure: the city announces that a given station will be closed for maintenance for a fixed period of time $t \in [t_s, t_e]$ and wishes to predict the behavior of users.

Since the baseline model has no value function or cost structure, the only way to encode the closure is to mechanically set the transition rates into Station 1 to zero during the closure window. Crucially, the rates remain unchanged before the closure at $t_s$, as the model has no mechanism to anticipate the disruption. In contrast, the closure can be encoded in the MFG model as a state-dependent running cost penalty. Because the HJB equation propagates the new cost function from terminal time $T$ to time $0$, the value function $u(t, x)$ at times $t < t_s$ already reflects the anticipated penalty.

Fig.~\ref{fig:bikeshare_intervention} compares the two calibrated models in this counterfactual closure scenario. The forward dynamics baseline follows its learned historical pattern until the closure begins, at which point the imposed closure constraint changes the trajectory. The MFG model, however, gradually drains the affected station before the closure starts, as the closure penalty is propagated backward through the HJB equation. This simulation illustrates how the game-theoretic model can produce anticipatory responses to announced interventions without retraining. 

\begin{figure}[ht]
    \centering
    \begin{subfigure}[b]{0.24\textwidth}
\includegraphics[width=\textwidth]{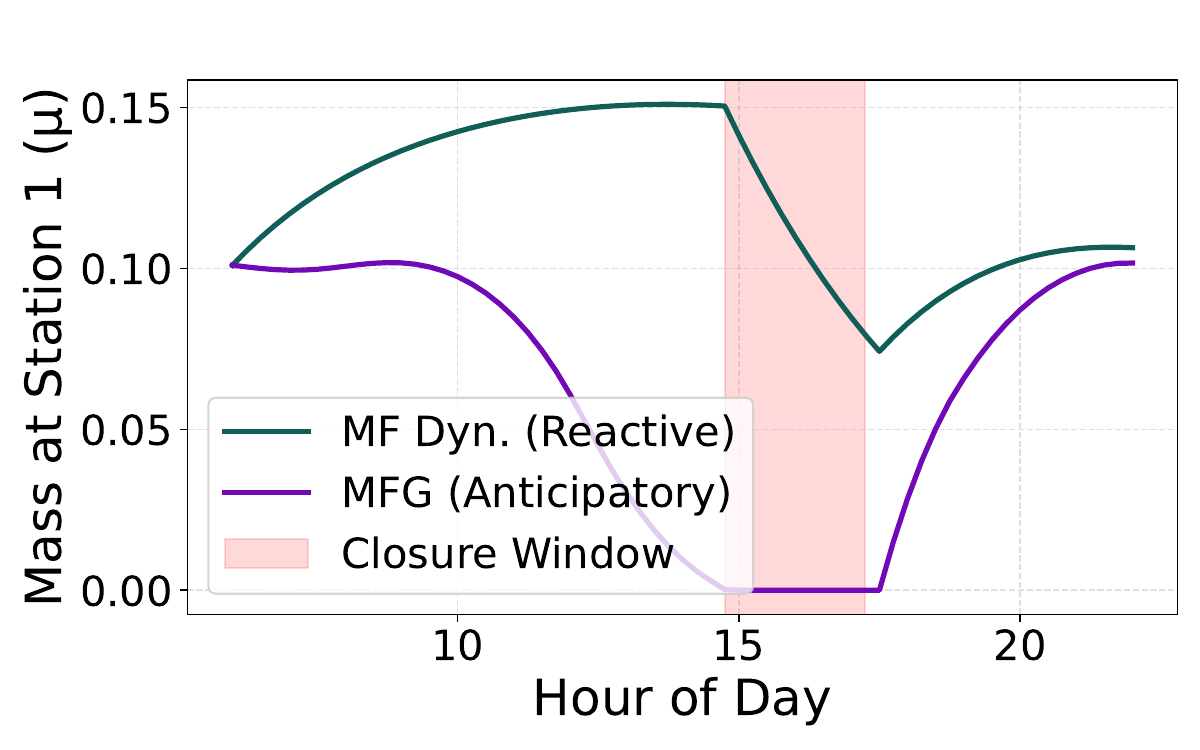}
\end{subfigure}
    \begin{subfigure}[b]{0.24\textwidth}
\includegraphics[width=\textwidth]{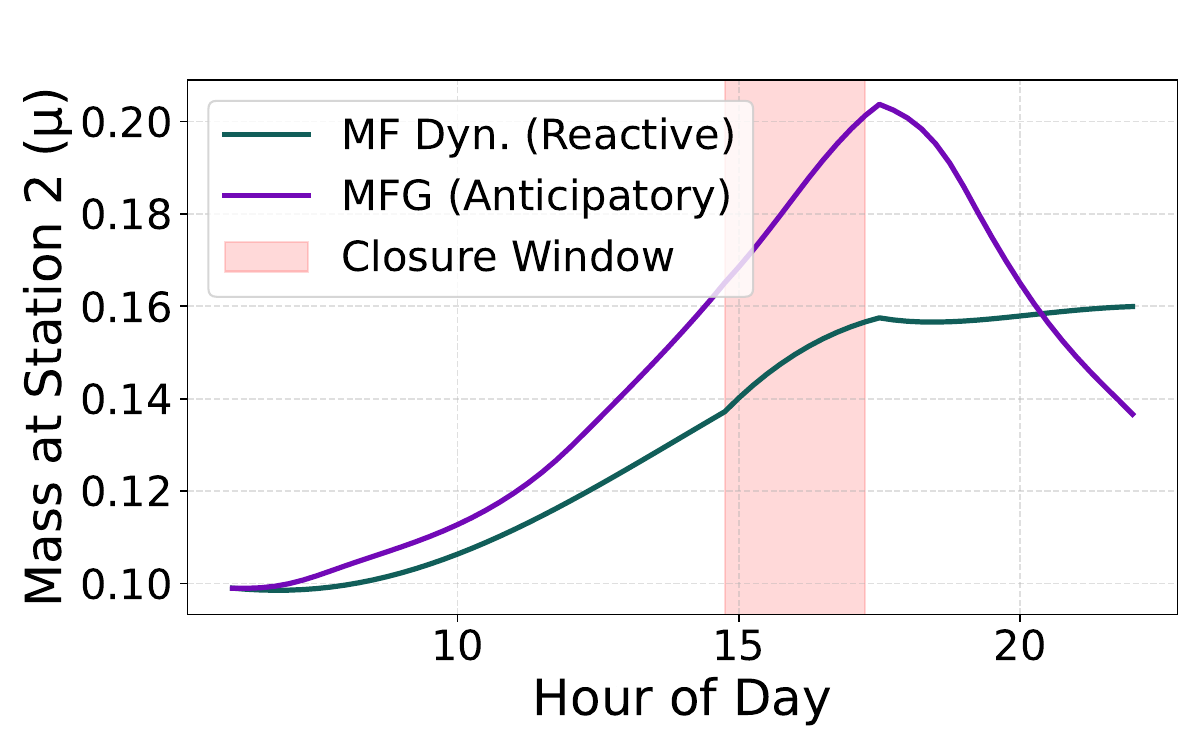}
    \end{subfigure}
    \hfill
        \begin{subfigure}[b]{0.24\textwidth}
\includegraphics[width=\textwidth]{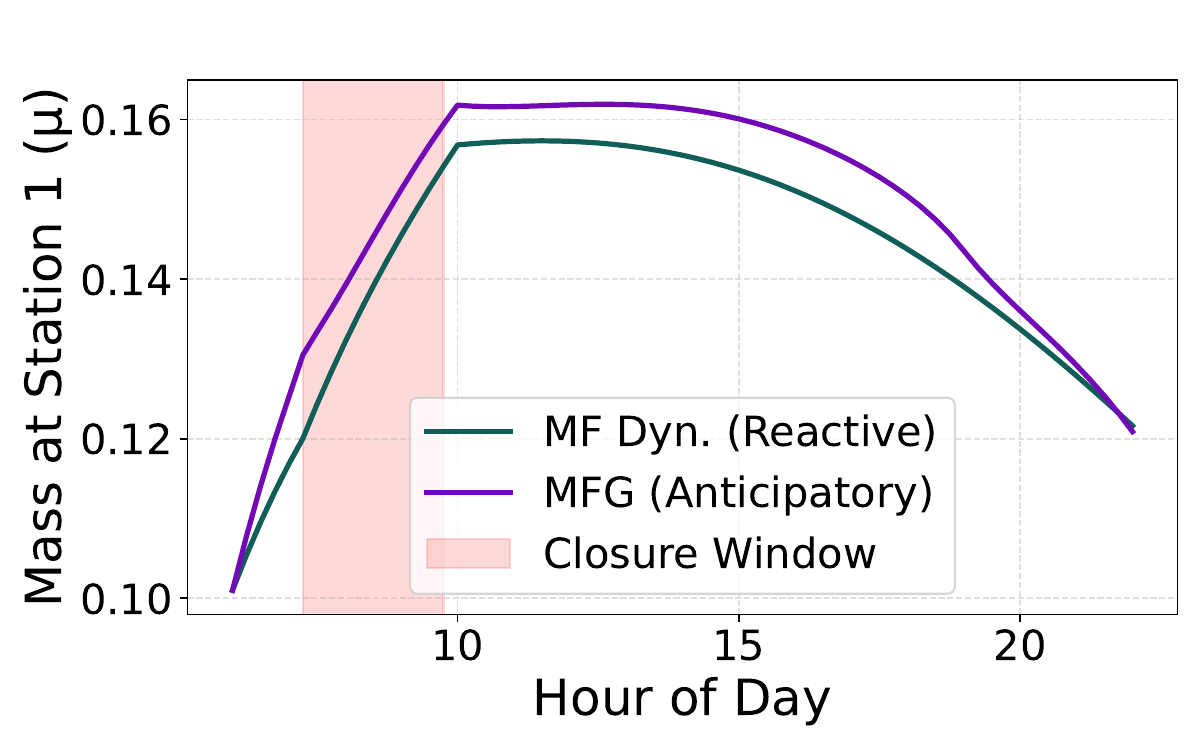}
\end{subfigure}
        \begin{subfigure}[b]{0.24\textwidth}
\includegraphics[width=\textwidth]{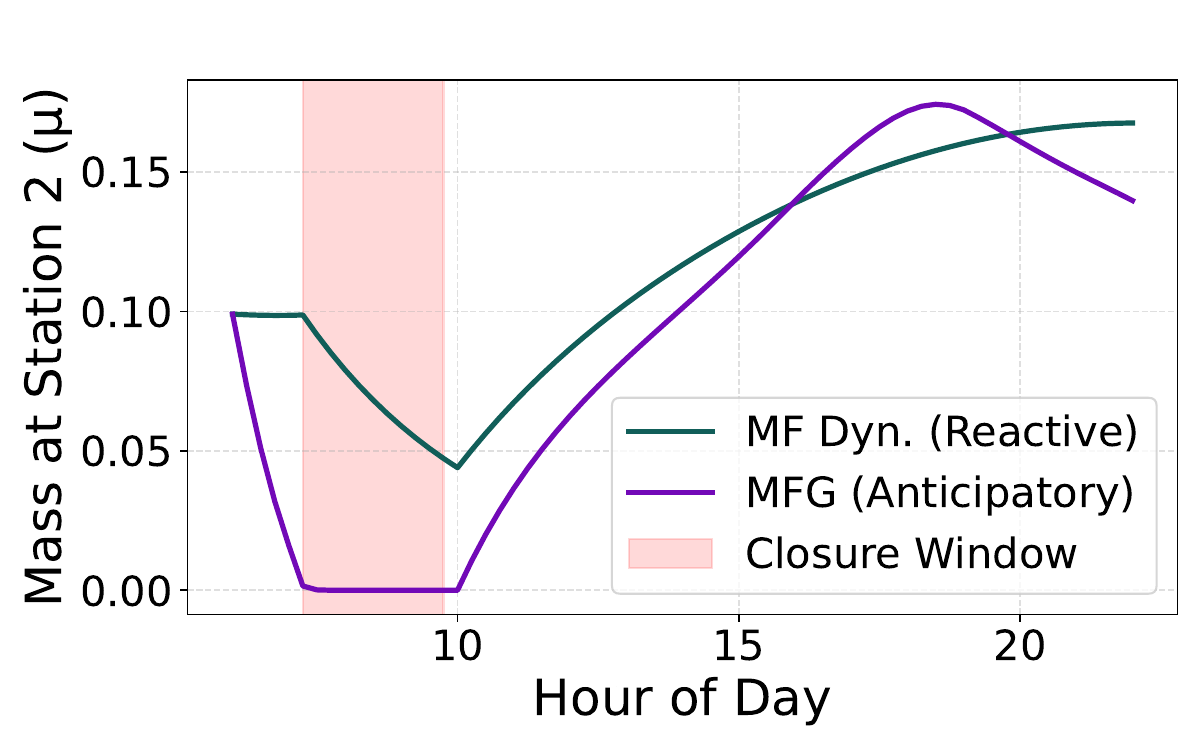}
    \end{subfigure}
    \caption{Counterfactual station-closure simulation, from left to right: station 1 (scenario 1), station 2 (scenario 1), station 1 (scenario 2), station 2 (scenario 2). The reactive MF dynamics baseline follows its historical routing pattern unchanged until the closure instant, then flatlines. The anticipatory MFG model propagates the penalty backward through the HJB equation, causing agents to reroute away from the station \emph{before} the closure begins. We refer to App.~\ref{app_bike_scenarios} for all other trajectories.}
    \label{fig:bikeshare_intervention}
\end{figure}

\section{Conclusion}
\label{sec: conclusion}
We introduced a novel, fully differentiable framework for learning trajectory-wise parameter paths induced by time- and mean-field-dependent neural parameterizations of finite-state mean field games directly from macroscopic population trajectories, supported by exact adjoint gradient computations and uniform generalization bounds under discrete-time contraction and smoothness assumptions. We validated the structural advantages of our approach across four domains, illustrating anticipatory responses in counterfactual station-closure simulations. 

\textbf{Limitations.} The current methodology is limited by its reliance on \textit{a priori} knowledge of the game's structural form and can struggle to generalize in out-of-distribution scenarios (such as the transition from weekday commutes to weekend leisure riding) when the underlying user utilities fundamentally change. Future work will focus on extending this framework to continuous state-action spaces, relaxing the i.i.d. and smoothness assumptions used in the theoretical analysis, relaxing the requirement for fully observable population densities, and integrating structural discovery mechanisms to improve robustness against unobserved distributional shifts.

\section*{Acknowledgements}
Tal Kachman and Anna C.M. Thöni acknowledge funding from the National Growth Fund project “Big Chemistry” (1420578) and the European Laboratory for Learning and Intelligent Systems (ELLIS). The authors have no conflicts of interest to declare.

\bibliography{references}
\clearpage

\appendix
\section{Picard Iteration for Solving MFGs}\label{appendix: picard iteration}
Picard iteration is a fixed-point numerical method for solving the forward-backward system associated with MFGs. In this section, we describe the implementation of the solver called in line~5 of Algo.~\ref{al: parameter calibration}. For a parameter path $\gamma=(\gamma_t)_{t\in[0,T]}$ and initial distribution $\mu_0$, we denote by
\(
    \bar\mu^{\gamma,\mu_0}=\Phi(\gamma,\mu_0)\in \cP([d])^{N+1}
\)
the discrete equilibrium trajectory returned by the solver; the associated value function and control are computed internally.

We aim to solve the system of PDEs described in~\eqref{eq: forward-backward} on the time interval $[0,T]$. 

Let $t_i=i\Delta t$, $i=0,\dots,N$, with $\Delta t=T/N$. Given a Picard iterate
$\mu^{(k)}=(\mu_i^{(k)})_{i=0}^{N}$, the backward step sets
\[
u^{(k+1)}_{N}(x)=g^\gamma(x,\mu^{(k)}_{N}),
\]
and, for $i=N-1,\dots,0$,
\[
u^{(k+1)}_i(x)
=
u^{(k+1)}_{i+1}(x)
+\Delta t\,
H^\gamma\!\left(
t_i,x,\mu_i^{(k)},u^{(k+1)}_{i+1}(\cdot),
\alpha^{H,k+1}_i(x)
\right),
\]
where
\[
\alpha^{H,k+1}_i(x)\in
\arg\min_{a\in\cA}
H^\gamma\!\left(t_i,x,\mu_i^{(k)},u^{(k+1)}_{i+1}(\cdot),a\right).
\]
The forward step then sets $\tilde\mu^{(k+1)}_0=\mu_0$ and, for $i=0,\dots,N-1$,
\[
\tilde\mu^{(k+1)}_{i+1}(x)
=
\tilde\mu^{(k+1)}_i(x)
+
\Delta t
\sum_{y\in[d]}
\tilde\mu^{(k+1)}_i(y)\,
Q^\gamma\!\left(t_i,\alpha^{F,k+1}_i(y),\tilde\mu^{(k+1)}_i\right)[y,x],
\]
with
\[
\alpha^{F,k+1}_i(x)\in
\arg\min_{a\in\cA}
H^\gamma\!\left(t_i,x,\tilde\mu^{(k+1)}_i,u^{(k+1)}_i(\cdot),a\right).
\]
Using fixed-point iteration, we produce approximations of the value function $\{u_i^{\gamma,\mu_0}(x)\}_{i=0}^{N}$ and mean field flow $\{\mu_i^{\gamma,\mu_0}(x)\}_{i=0}^{N}$ for $x \in [d]$ along the discrete time grid. To improve stability across a broad range of interaction strengths, we use a damping coefficient $\lambda \in [0,1)$ and set $\lambda=0.5$ in our experiments, blending each forward FPK update with the previous distribution iterate. The full fixed-point algorithm is summarized in Alg.~\ref{al: picard}.

For ease of notation, we omit the dependence on $\gamma$ and $\mu_0$ in Algorithm~\ref{al: picard}. The algorithm approximates the solution of a single MFG $\cM^\gamma$ for a fixed initial law $\mu_0$; in the calibration procedure, it is called separately for each sampled trajectory.

\begin{algorithm}[ht]
	\caption{Picard iteration for approximating the Nash equilibrium of finite-state MFGs}
	\begin{algorithmic}[1]
		\Require Parameter path $\gamma$, initial law $\mu_0$, initial trajectory $\mu^{(0)}$, damping $\lambda\in[0,1)$, tolerance $\varepsilon$, maximum iterations $K$
\State $k\gets 0$, $\delta\gets\infty$
\While{$\delta>\varepsilon$ and $k<K$}
    \State $u^{(k+1)} \gets \mathrm{BackwardHJB}(\mu^{(k)},\gamma)$
    \State $\tilde\mu^{(k+1)} \gets \mathrm{ForwardFPK}(u^{(k+1)},\gamma,\mu_0)$
    \State $\mu^{(k+1)} \gets (1-\lambda)\tilde\mu^{(k+1)}+\lambda\mu^{(k)}$
    \State $\delta \gets \|\mu^{(k+1)}-\mu^{(k)}\|_\infty$
    \State $k\gets k+1$
\EndWhile
\State \Return{$\mu^{(k)}$}
	\end{algorithmic}
	\label{al: picard}
\end{algorithm}

\section{Details on the Theoretical Guarantees}\label{app:math_foundations}

We provide theoretical results supporting the algorithmic choices made in Sec.~\ref{sec: learning parameter estimates}. The existence and uniqueness of Nash equilibria in finite-state MFGs have been established under various structural conditions, including monotonicity and smallness of the time horizon; see e.g.~\citet{Gomes2013} and \citet{CecchinFischer2020}. We take these as a starting point and work throughout in the discrete-time setting used by our numerical implementation.

\paragraph{Discrete-time setup.} We discretize the time horizon $[0, T]$ into $N$ steps of size $\Delta t = T/N$, with time points $t_k = k\,\Delta t$ for $k = 0, \ldots, N$. A discrete trajectory is a sequence $\mu = (\mu_0, \mu_1, \ldots, \mu_N)$ with $\mu_k \in \cP([d])$ for each $k$. We identify $\mu$ with an element of $\RR^{(N+1) \times d}$ and equip this space with the norm $\|\mu\|_\infty := \max_{0 \le k \le N} \max_{x \in [d]} |\mu_k(x)|$. 
For discrete trajectories, we use
\[
\|\mu\|_{2,N}^2 := \Delta t \sum_{k=0}^{N}\sum_{x\in[d]} |\mu_k(x)|^2.
\]
Throughout this appendix, $\Gamma$ denotes the finite-dimensional set of discretized parameter paths after vectorization, e.g. $\gamma=(\gamma_0,\ldots,\gamma_N)\in\RR^{(N+1)n}$. Whenever derivatives are used, we assume that $\Xi$ admits a $C^1$ extension to an open neighborhood of the relevant subset of $\RR^{(N+1)d}\times\RR^{(N+1)n}$; equivalently, derivatives on the simplex may be understood in local or tangent coordinates.

The discrete Picard operator $\Xi : \cP([d])^{N+1} \times \Gamma \to \cP([d])^{N+1}$ performs one full forward-backward cycle: given a trajectory $\mu$ and parameters $\gamma$, it (1)~solves the backward HJB equation on the discrete grid to obtain the value function, (2)~extracts the optimal control at each time step, and (3)~propagates forward the FPK equation to produce a new trajectory $\Xi(\mu, \gamma)$. We denote by $\Phi(\gamma, \mu_0)$ the fixed-point solver that returns the equilibrium trajectory $\bar\mu^\gamma$ satisfying $\bar\mu^\gamma = \Xi(\bar\mu^\gamma, \gamma)$.

\begin{assumption}[Picard Contractivity]\label{ass:contraction}
	For each $\gamma \in \Gamma$, the Picard operator $\Xi(\cdot, \gamma) : \cP([d])^{N+1} \to \cP([d])^{N+1}$ is a contraction in the $\|\cdot\|_\infty$ norm with constant $\kappa \in [0, 1)$:
	\[
		\|\Xi(\mu, \gamma) - \Xi(\mu', \gamma)\|_\infty \leq \kappa \, \|\mu - \mu'\|_\infty, \qquad \forall\, \mu, \mu' \in \cP([d])^{N+1}.
	\]
\end{assumption}

\begin{remark}[On damping]\label{rem:damping}
	Assumption~\ref{ass:contraction} is a standard smallness condition, typically satisfied when the time horizon $T$, the Lipschitz constants of $f$ and $Q$ in $\mu$, or the number of states $d$ are sufficiently small \cite{Gomes2013}. In practice, one often replaces $\Xi$ by a damped operator $\Xi_\alpha(\mu, \gamma) := (1 - \alpha)\,\Xi(\mu, \gamma) + \alpha\, \mu$ for some $\alpha \in (0, 1)$, which improves numerical stability. 

    In our experiments, damping is used as an empirical stabilization device and, when using large enough damping coefficient, the damped Picard iteration always converges.
\end{remark}

Under Assumption~\ref{ass:contraction}, the Banach fixed-point theorem guarantees that for each $\gamma \in \Gamma$, the iteration $\mu^{(k+1)} = \Xi(\mu^{(k)}, \gamma)$ converges to a unique fixed point $\bar\mu^\gamma = \Xi(\bar\mu^\gamma, \gamma)$ at geometric rate $\kappa^k$.

\paragraph{Differentiability of the equilibrium map.} The equilibrium trajectory $\bar\mu^\gamma$ is an implicit function of $\gamma$ defined by the fixed-point equation $\bar\mu^\gamma - \Xi(\bar\mu^\gamma, \gamma) = 0$. In the discrete-time setting, $\Xi$ is a composition of standard smooth operations (e.g., softplus, matrix-vector products).  We formalize the required regularity as follows.

\begin{assumption}[Smoothness]\label{ass:smooth}
	The Picard map $\Xi$ admits a continuously differentiable extension, in both variables, to an open neighborhood of the trajectories and parameter paths considered in $\RR^{(N+1)d}\times\RR^{(N+1)n}$.
\end{assumption}
In the numerical examples, some implementations use clipping, projection, or discrete optimization steps to enforce admissibility and numerical stability. These operations may introduce nonsmooth points, but the numerical results show that the algorithm still works in these situations. The above smoothness assumption should be understood as a sufficient condition for the theoretical analysis.

Under Assumptions~\ref{ass:contraction} and~\ref{ass:smooth}, we denote the Jacobian matrices of $\Xi$ at the fixed point $(\bar\mu^\gamma, \gamma)$ by
\[
	\mathcal{J}_\mu := \frac{\partial \Xi}{\partial \mu}\bigg|_{(\bar\mu^\gamma, \gamma)} \in \RR^{(N+1)d \times (N+1)d}, \qquad
	\mathcal{J}_\gamma := \frac{\partial \Xi}{\partial \gamma}\bigg|_{(\bar\mu^\gamma, \gamma)},
\]
where we identify $\mu \in \RR^{(N+1) \times d}$ with its vectorization in $\RR^{(N+1)d}$. Since $\Xi(\cdot, \gamma)$ is $\kappa$-Lipschitz (Assumption~\ref{ass:contraction}), its Jacobian satisfies $\|\mathcal{J}_\mu\|_\infty \leq \kappa < 1$, where $\|\cdot\|_\infty$ denotes the operator norm induced by the $\ell^\infty$ vector norm. In particular, the spectral radius satisfies $\rho(\mathcal{J}_\mu) \leq \|\mathcal{J}_\mu\|_\infty < 1$, so the matrix $(I - \mathcal{J}_\mu)$ is invertible. The Implicit Function Theorem then yields:
\begin{equation}\label{eq:implicit_grad}
	\frac{\partial \bar\mu^\gamma}{\partial \gamma} = \left(I - \mathcal{J}_\mu\right)^{-1} \mathcal{J}_\gamma.
\end{equation}

\paragraph{Efficient gradient computation via the adjoint method.} While~\eqref{eq:implicit_grad} provides an explicit formula for the sensitivity of the equilibrium to the parameters, computing it directly would require forming and inverting the $(N+1)d \times (N+1)d$ matrix $(I - \mathcal{J}_\mu)$, which is prohibitively expensive. In Algo.~\ref{al: parameter calibration}, what is actually needed is not the full Jacobian $\partial\bar\mu^\gamma / \partial\gamma$ itself, but rather the vector-Jacobian product $\nabla_\gamma \mathcal{L} = (\partial\bar\mu^\gamma / \partial\gamma)^\top \nabla_\mu \ell$. This product can be computed efficiently by solving the adjoint linear system $(I - \mathcal{J}_\mu^\top)\, w = \nabla_\mu \ell$ for $w$, and then evaluating $\mathcal{J}_\gamma^\top w$. Crucially, this adjoint system can be solved via a fixed-point iteration that only requires matrix-vector products with $\mathcal{J}_\mu^\top$ (computed cheaply by automatic differentiation), without ever forming or storing $\mathcal{J}_\mu$ explicitly.

The following two propositions formalize this approach: Proposition~\ref{prop:adjoint} establishes the convergence of the adjoint iteration and its equivalence with the exact implicit gradient from~\eqref{eq:implicit_grad}, while Prop.~\ref{prop:consistency} justifies the Pre-Evaluation idea of Sec.~\ref{sec: learning parameter estimates}.

\begin{proposition}[Correctness of Adjoint Gradient Computation]\label{prop:adjoint}
	Let Assumptions~\ref{ass:contraction} and~\ref{ass:smooth} hold. Fix an observed trajectory $\mu^{\mathrm{obs}} \in \RR^{(N+1) \times d}$ and define the per-sample loss $\ell : \RR^{(N+1) \times d} \to \RR$ by $\ell(\mu) := \|\mu - \mu^{\mathrm{obs}}\|_{2,N}^2$, where $\mu^{\mathrm{obs}}$ is held fixed. Define the composite loss $\mathcal{L}(\gamma) := \ell(\bar\mu^\gamma) = \|\bar\mu^\gamma - \mu^{\mathrm{obs}}\|_{2,N}^2$, where $\bar\mu^\gamma = \Phi(\gamma, \mu_0)$ is the equilibrium trajectory.

	Consider the adjoint sequence initialized at $w^{(0)} = 0 \in \RR^{(N+1)d}$ and defined by
	\begin{equation}\label{eq:adjoint_iteration}
		w^{(k+1)} = \mathcal{J}_\mu^\top\, w^{(k)} + \nabla_\mu \ell\big|_{\bar\mu^\gamma}.
	\end{equation}
	Then:
	\begin{enumerate}
		\item[(i)] The sequence $(w^{(k)})_{k \geq 0}$ converges to a unique limit $w^* \in \RR^{(N+1)d}$.
		\item[(ii)] The gradient of the composite loss with respect to $\gamma$ satisfies
		\begin{equation}\label{eq:adjoint_gradient}
			\nabla_\gamma \mathcal{L} = \mathcal{J}_\gamma^\top\, w^*,
		\end{equation}
		which equals the exact implicit gradient from~\eqref{eq:implicit_grad} composed with $\nabla_\mu \ell$.
	\end{enumerate}
\end{proposition}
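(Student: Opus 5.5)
The plan is to treat the adjoint recursion \eqref{eq:adjoint_iteration} as a fixed-point iteration for the affine map $T(w) := \mathcal{J}_\mu^\top w + b$, with $b := \nabla_\mu \ell|_{\bar\mu^\gamma} = 2\Delta t\,(\bar\mu^\gamma - \mu^{\mathrm{obs}})$ (vectorized). For part (i), Assumption~\ref{ass:contraction} gives $\|\mathcal{J}_\mu\|_\infty \le \kappa < 1$, as already noted before \eqref{eq:implicit_grad}. However, $\mathcal{J}_\mu^\top$ is controlled in the $\ell^1$-induced norm, $\|\mathcal{J}_\mu^\top\|_1 = \|\mathcal{J}_\mu\|_\infty \le \kappa$, rather than in $\|\cdot\|_\infty$. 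I will therefore work in $\|\cdot\|_1$ on $\RR^{(N+1)d}$. There $T$ is a $\kappa$-contraction, so Banach's theorem applied to the complete space $(\RR^{(N+1)d},\|\cdot\|_1)$ yields a unique fixed point $w^*$ and convergence of $w^{(k)}$ from $w^{(0)}=0$. Since all norms on a finite-dimensional space are equivalent, convergence holds in any norm.

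Alternatively, I can unroll the recursion as $w^{(k)} = \sum_{j=0}^{k-1} (\mathcal{J}_\mu^\top)^j b$. This is a Neumann series that converges because $\rho(\mathcal{J}_\mu^\top)=\rho(\mathcal{J}_\mu)<1$. Its limit is $w^* = (I-\mathcal{J}_\mu^\top)^{-1} b$, which is also the unique solution of $(I-\mathcal{J}_\mu^\top)w=b$, since $I-\mathcal{J}_\mu$ is invertible.

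One subtlety has to be settled first: passing from the Lipschitz bound on the simplex to $\|\mathcal{J}_\mu\|_\infty\le\kappa$. The Lipschitz property only controls directional derivatives along tangent directions of $\cP([d])^{N+1}$, that is, zero-sum perturbations in each time slice. To avoid a gap, I will take the Jacobian bound to hold on the $C^1$ extension from Assumption~\ref{ass:smooth}, reading the contraction as valid on the open neighborhood. Alternatively, I will carry out the whole argument in tangent coordinates, where the same reasoning applies verbatim. I expect this bookkeeping to be the main technical obstacle; the rest is linear algebra.

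For part (ii), I will apply the chain rule to $\mathcal{L}(\gamma)=\ell(\bar\mu^\gamma)$. Since $\gamma\mapsto\bar\mu^\gamma$ is $C^1$ with derivative given by \eqref{eq:implicit_grad}, obtained from the Implicit Function Theorem applied to $\Psi(\mu,\gamma)=\mu-\Xi(\mu,\gamma)$, we get
\[
\nabla_\gamma\mathcal{L} = \Big(\tfrac{\partial\bar\mu^\gamma}{\partial\gamma}\Big)^{\!\top} b = \mathcal{J}_\gamma^\top (I-\mathcal{J}_\mu)^{-\top} b = \mathcal{J}_\gamma^\top (I-\mathcal{J}_\mu^\top)^{-1} b = \mathcal{J}_\gamma^\top w^*,
\]
using $(A^{-1})^\top=(A^\top)^{-1}$ and the identification of $w^*$ from part (i). This is exactly \eqref{eq:adjoint_gradient}, and by construction it equals the implicit gradient \eqref{eq:implicit_grad} composed with $\nabla_\mu\ell$. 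Finally, I will note that $\ell$ is a smooth quadratic, so $b$ is well defined, and that the IFT hypotheses (a $C^1$ residual and invertibility of $\partial_\mu\Psi = I-\mathcal{J}_\mu$) follow from Assumptions~\ref{ass:contraction} and~\ref{ass:smooth}.
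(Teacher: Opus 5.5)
Your proposal is correct and follows the paper's proof. Part (ii) is the identical transpose of the implicit derivative. For part (i), your Neumann-series alternative is exactly the paper's argument via $\rho(\mathcal{J}_\mu^\top)=\rho(\mathcal{J}_\mu)\le\|\mathcal{J}_\mu\|_\infty<1$, and your primary route through $\|\mathcal{J}_\mu^\top\|_1=\|\mathcal{J}_\mu\|_\infty\le\kappa$ is an equivalent variant that also yields the explicit rate $\|w^{(k)}-w^*\|_1\le\kappa^k\|w^*\|_1$.

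Your simplex caveat is well taken: the paper just asserts $\|\mathcal{J}_\mu\|_\infty\le\kappa$, so it implicitly uses your extension reading. The tangent-coordinate alternative is not quite verbatim, though. It controls only the tangential component of $w^{(k)}$. That is enough for $\mathcal{J}_\gamma^\top w^{(k)}\to\nabla_\gamma\mathcal{L}$, since $\mathcal{J}_\gamma$ maps into the tangent space. Convergence of the ambient sequence in (i), however, also needs control of the extended Jacobian in directions normal to the simplex, and Assumption~\ref{ass:contraction} alone does not give that.
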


\begin{proof}
	\textit{Part (i).} Let $A := \mathcal{J}_\mu \in \RR^{(N+1)d \times (N+1)d}$ and $v := \nabla_\mu \ell\big|_{\bar\mu^\gamma} \in \RR^{(N+1)d}$. The iteration~\eqref{eq:adjoint_iteration} reads $w^{(k+1)} = A^\top w^{(k)} + v$. Its unique fixed point, if it exists, is $w^* = (I - A^\top)^{-1} v$.

	By Assumption~\ref{ass:contraction}, $\|A\|_\infty \leq \kappa < 1$. Since the spectral radius satisfies $\rho(A) \leq \|A\|_\infty < 1$ for any induced matrix norm, and $A$ and $A^\top$ share the same eigenvalues, we have $\rho(A^\top) = \rho(A) < 1$. Consequently, $(A^\top)^k \to 0$ as $k \to \infty$, which implies both that $(I - A^\top)$ is invertible (with Neumann series $(I - A^\top)^{-1} = \sum_{j=0}^{\infty} (A^\top)^j$) and that the iteration converges. Indeed, $w^{(k)} - w^* = (A^\top)^k (w^{(0)} - w^*) \to 0$.

	\textit{Part (ii).} Since $\mathcal{L}(\gamma) = \ell(\bar\mu^\gamma)$ and $\bar\mu^\gamma = \Phi(\gamma, \mu_0)$, the chain rule gives
	\[
		\nabla_\gamma \mathcal{L}
		= \left(\frac{\partial \bar\mu^\gamma}{\partial \gamma}\right)^{\!\top} \nabla_\mu \ell\big|_{\bar\mu^\gamma}.
	\]
	Substituting the implicit derivative~\eqref{eq:implicit_grad} and transposing:
	\begin{align*}
		\nabla_\gamma \mathcal{L}
		&= \bigl((I - \mathcal{J}_\mu)^{-1} \mathcal{J}_\gamma\bigr)^{\!\top}\, v
		= \mathcal{J}_\gamma^\top\, (I - \mathcal{J}_\mu^\top)^{-1}\, v
		= \mathcal{J}_\gamma^\top\, w^*,
	\end{align*}
	where the last equality uses $w^* = (I - A^\top)^{-1} v$ from Part~(i). This is precisely~\eqref{eq:adjoint_gradient}.
\end{proof}

\begin{remark}[Practical approximation]\label{rem:approx}
Proposition~\ref{prop:adjoint} assumes that the Picard solver has converged to the exact fixed point $\bar\mu^\gamma$. In practice, the solver is terminated after $K$ iterations, yielding an approximate equilibrium $\mu^{(K)}$. By the contraction property (Assumption~\ref{ass:contraction}),
\(
\|\mu^{(K)} - \bar\mu^\gamma\|_\infty
\leq
\kappa^K \|\mu^{(0)} - \bar\mu^\gamma\|_\infty,
\)
so the fixed-point approximation error decays geometrically. Quantifying the corresponding finite-iteration error in the implicit gradient would require additional regularity of the Jacobians and of the adjoint solve; all theoretical statements below are stated for the exact fixed point.
\end{remark}

\paragraph{Consistency of the Pre-Evaluation idea.} As described in the Pre-Evaluation idea of Sec.~\ref{sec: learning parameter estimates}, Algo.~\ref{al: parameter calibration} evaluates the neural network $\varphi_\theta$ on the \emph{observed} trajectory $\mu^{\mathrm{obs}}$ rather than on the solver's own intermediate distributions. This decoupling greatly simplifies the computational graph, but introduces a discrepancy: the parameters fed to the solver are $\gamma^{\mathrm{d}}_k = \varphi_\theta(t_k, \mu^{\mathrm{obs}}_k)$, whereas the "ideal" coupled parameters would be $\gamma^{\mathrm{c}}_k = \varphi_\theta(t_k, \bar\mu_k)$, where $\bar\mu$ is the solver's own equilibrium. The following proposition shows that this discrepancy is controlled by $\|\bar\mu-\mu^{\mathrm{obs}}\|_\infty$, the distance between the observed trajectory and the corresponding self-consistent pair. It is a conditional stability statement for the computational shortcut.

\begin{proposition}[Pre-Evaluation Consistency]\label{prop:consistency}
	Let Assumption~\ref{ass:contraction} hold, and suppose $\Xi(\mu, \gamma)$ is $L_{\Xi, \gamma}$-Lipschitz in $\gamma$ uniformly over $\mu \in \cP([d])^{N+1}$:
	\[
		\|\Xi(\mu, \gamma_1) - \Xi(\mu, \gamma_2)\|_\infty \leq L_{\Xi, \gamma}\, \|\gamma_1 - \gamma_2\|_\infty, \qquad \forall\, \mu \in \cP([d])^{N+1}.
	\]
	Let $\varphi_\theta: [0,T] \times \cP([d]) \to \RR^n$ be $L_\varphi$-Lipschitz in its second argument, i.e., $\|\varphi_\theta(t, \mu) - \varphi_\theta(t, \mu')\|_\infty \leq L_\varphi \|\mu - \mu'\|_\infty$ for all $t$ and $\mu, \mu'$.

	Assume there exists a self-consistent pair, i.e., a trajectory $\bar\mu$ and parameters $\gamma^{\mathrm{c}}$ satisfying $\gamma^{\mathrm{c}}_k = \varphi_\theta(t_k, \bar\mu_k)$ for all $k$ and $\bar\mu = \Phi(\gamma^{\mathrm{c}}, \mu_0)$ simultaneously. Define the decoupled parameters $\gamma^{\mathrm{d}}_k := \varphi_\theta(t_k, \mu^{\mathrm{obs}}_k)$, where $\mu^{\mathrm{obs}}$ is the observed trajectory.

	Then:
	\begin{enumerate}
		\item[(i)] If $\mu^{\mathrm{obs}} = \bar\mu$, then $\gamma^{\mathrm{c}} = \gamma^{\mathrm{d}}$ and $\Phi(\gamma^{\mathrm{d}}, \mu_0) = \Phi(\gamma^{\mathrm{c}}, \mu_0)$.
		\item[(ii)] For general $\mu^{\mathrm{obs}}$, the parameter discrepancy and equilibrium error satisfy
		\begin{equation}\label{eq:consistency_bound}
			\|\gamma^{\mathrm{c}} - \gamma^{\mathrm{d}}\|_\infty \leq L_\varphi \cdot \|\bar\mu - \mu^{\mathrm{obs}}\|_\infty, \qquad
			\|\Phi(\gamma^{\mathrm{c}}, \mu_0) - \Phi(\gamma^{\mathrm{d}}, \mu_0)\|_\infty \leq \frac{L_{\Xi, \gamma}}{1-\kappa} \cdot L_\varphi \cdot \|\bar\mu - \mu^{\mathrm{obs}}\|_\infty.
		\end{equation}
	\end{enumerate}
\end{proposition}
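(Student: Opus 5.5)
Part (i) is immediate from the definitions. If $\mu^{\mathrm{obs}}=\bar\mu$, then for every $k$ we have $\gamma^{\mathrm{d}}_k=\varphi_\theta(t_k,\mu^{\mathrm{obs}}_k)=\varphi_\theta(t_k,\bar\mu_k)=\gamma^{\mathrm{c}}_k$. Hence $\gamma^{\mathrm{c}}=\gamma^{\mathrm{d}}$ as discretized paths. Since $\Phi(\cdot,\mu_0)$ is a well-defined function (single-valued by the uniqueness of the fixed point under Assumption~\ref{ass:contraction} and Banach's theorem), the two equilibria coincide.

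For the first bound in part (ii), we apply the Lipschitz property of $\varphi_\theta$ in its second argument pointwise in $k$:
\[
\|\gamma^{\mathrm{c}}_k-\gamma^{\mathrm{d}}_k\|_\infty\le L_\varphi\|\bar\mu_k-\mu^{\mathrm{obs}}_k\|_\infty\le L_\varphi\|\bar\mu-\mu^{\mathrm{obs}}\|_\infty .
\]
Taking the maximum over $k=0,\dots,N$ gives the claim, with $\|\gamma\|_\infty$ read as the maximum over time indices and components.

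The second bound is the standard stability estimate for parametrized contractions, and this is the only step needing an argument. Write $\mu^{\mathrm{c}}=\Phi(\gamma^{\mathrm{c}},\mu_0)=\bar\mu$ and $\mu^{\mathrm{d}}=\Phi(\gamma^{\mathrm{d}},\mu_0)$. Both are fixed points of their respective operators $\Xi(\cdot,\gamma^{\mathrm{c}})$ and $\Xi(\cdot,\gamma^{\mathrm{d}})$, with the same initial law $\mu_0$ built into $\Xi$. Inserting $\Xi(\mu^{\mathrm{d}},\gamma^{\mathrm{c}})$ and using the triangle inequality gives
\[
\|\mu^{\mathrm{c}}-\mu^{\mathrm{d}}\|_\infty\le\|\Xi(\mu^{\mathrm{c}},\gamma^{\mathrm{c}})-\Xi(\mu^{\mathrm{d}},\gamma^{\mathrm{c}})\|_\infty+\|\Xi(\mu^{\mathrm{d}},\gamma^{\mathrm{c}})-\Xi(\mu^{\mathrm{d}},\gamma^{\mathrm{d}})\|_\infty .
\]
By Assumption~\ref{ass:contraction} the first term is at most $\kappa\|\mu^{\mathrm{c}}-\mu^{\mathrm{d}}\|_\infty$, and by the uniform Lipschitz hypothesis in $\gamma$ the second is at most $L_{\Xi,\gamma}\|\gamma^{\mathrm{c}}-\gamma^{\mathrm{d}}\|_\infty$. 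Rearranging and using $\kappa<1$ yields $\|\mu^{\mathrm{c}}-\mu^{\mathrm{d}}\|_\infty\le \frac{L_{\Xi,\gamma}}{1-\kappa}\|\gamma^{\mathrm{c}}-\gamma^{\mathrm{d}}\|_\infty$. Combining this with the first bound gives \eqref{eq:consistency_bound}.

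The main point to check is bookkeeping rather than mathematics. The contraction constant $\kappa$ must hold uniformly in $\gamma$, which Assumption~\ref{ass:contraction} provides. Both parameter paths must lie in $\Gamma$, so that $\Xi(\cdot,\gamma^{\mathrm{d}})$ has a fixed point at all. Finally, the norm on parameter paths must be consistent between the Lipschitz hypothesis on $\Xi$ and the first bound.
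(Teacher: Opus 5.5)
Your proposal is correct and follows the paper's proof step for step. Part (i) comes from the definitions, and the first bound in (ii) from applying the Lipschitz property of $\varphi_\theta$ at each $k$ and taking the maximum. The second bound uses the fixed-point identity, contraction in $\mu$ and Lipschitz continuity in $\gamma$, then rearranges using $\kappa<1$; your explicit intermediate term $\Xi(\mu^{\mathrm{d}},\gamma^{\mathrm{c}})$ only spells out the triangle inequality the paper applies implicitly.
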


\begin{proof}
	\textit{Part (i).} If $\mu^{\mathrm{obs}} = \bar\mu$, then $\gamma^{\mathrm{d}}_k = \varphi_\theta(t_k, \mu^{\mathrm{obs}}_k) = \varphi_\theta(t_k, \bar\mu_k) = \gamma^{\mathrm{c}}_k$ for all $k$, hence $\Phi(\gamma^{\mathrm{d}}, \mu_0) = \Phi(\gamma^{\mathrm{c}}, \mu_0)$.

	\textit{Part (ii).} By the Lipschitz property of $\varphi_\theta$, for each $k$:
	$\|\gamma^{\mathrm{c}}_k - \gamma^{\mathrm{d}}_k\|_\infty = \|\varphi_\theta(t_k, \bar\mu_k) - \varphi_\theta(t_k, \mu^{\mathrm{obs}}_k)\|_\infty \leq L_\varphi \|\bar\mu_k - \mu^{\mathrm{obs}}_k\|_\infty$.
	Taking the maximum over $k$ yields the first bound. For the second bound, let $\mu_j = \Phi(\gamma_j, \mu_0)$ for $j \in \{\mathrm{c}, \mathrm{d}\}$. By the fixed-point property and Assumption~\ref{ass:contraction}:
	$\|\mu_\mathrm{c} - \mu_\mathrm{d}\|_\infty = \|\Xi(\mu_\mathrm{c}, \gamma^{\mathrm{c}}) - \Xi(\mu_\mathrm{d}, \gamma^{\mathrm{d}})\|_\infty \leq \kappa \|\mu_\mathrm{c} - \mu_\mathrm{d}\|_\infty + L_{\Xi,\gamma} \|\gamma^{\mathrm{c}} - \gamma^{\mathrm{d}}\|_\infty$.
	Rearranging: $\|\mu_\mathrm{c} - \mu_\mathrm{d}\|_\infty \leq \frac{L_{\Xi,\gamma}}{1-\kappa} \|\gamma^{\mathrm{c}} - \gamma^{\mathrm{d}}\|_\infty$. Combining with the first bound concludes the proof.
\end{proof}

Proposition~\ref{prop:consistency} justifies the Pre-Evaluation idea of Sec.~\ref{sec: learning parameter estimates}: for a fixed network, the error introduced by decoupling is controlled by the discrepancy $\|\bar\mu-\mu^{\mathrm{obs}}\|_\infty$, and vanishes when $\bar\mu=\mu^{\mathrm{obs}}$. The result is a stability statement for the computational shortcut.

Additionally, Part~(ii) establishes that the equilibrium map $\Phi$ is Lipschitz in $\gamma$ with constant $L_{\Xi,\gamma}/(1-\kappa)$; this sensitivity bound is also used in the proof of the generalization result below.

\paragraph{Generalization to unseen trajectories.} In practice, the neural network $\varphi_\theta$ is trained on a finite dataset of observed trajectories. A natural question is whether the learned parameters generalize to new, unseen trajectories drawn from the same distribution. The following proposition provides a uniform bound on the gap between the population risk (expected loss on new data) and the empirical risk (average loss on training data). The key insight is that the implicit layer $\Phi$ inherits Lipschitz regularity from the Picard operator (via Proposition~\ref{prop:consistency}), which in turn controls the complexity of the loss landscape.

\begin{proposition}[Uniform Generalization Bound]\label{prop:generalization}
	Let Assumption~\ref{ass:contraction} hold, and assume $\Xi(\mu, \gamma)$ is $L_{\Xi,\gamma}$-Lipschitz in $\gamma$ uniformly over $\mu$, as in Proposition~\ref{prop:consistency}.
	Let the neural network parameter space $\Theta \subset \RR^p$ be compact with $\ell^2$-diameter $D := \sup_{\theta, \theta' \in \Theta} \|\theta - \theta'\|_2$.
	Assume the network map $\theta \mapsto \varphi_\theta(\cdot, \mu)$ is $L_{\varphi, \theta}$-Lipschitz in $\theta$ (in the $\ell^2$ norm) uniformly over $\mu$.
	Suppose the training data $\mathcal{S} = \{(\mu^{\mathrm{obs},i}, \mu_0^i)\}_{i=1}^M$ consists of $M$ i.i.d.\ samples from a distribution $\mathcal{P}$.
	Define the per-sample loss $\ell(\theta; \mu^{\mathrm{obs}}, \mu_0)
:= \|\mu^{\mathrm{obs}} - \Phi(\varphi_\theta(\cdot, \mu^{\mathrm{obs}}), \mu_0)\|_{2,N}^2$, where $\|\cdot\|_{2,N}$ denotes the discrete $L^2$ norm introduced above. Since for any two simplex-valued discrete trajectories $\mu,\nu$ one has $\|\mu-\nu\|_{2,N}^2 \leq 2(N+1)\Delta t$, the loss is bounded by $B := 2(N+1)\Delta t = 2T(1+1/N)$. 
	Define the empirical and population risks as $\hat{\mathcal{R}}(\theta) = \frac{1}{M}\sum_{i=1}^M \ell(\theta; \mu^{\mathrm{obs},i}, \mu_0^i)$ and $\mathcal{R}(\theta) = \EE_{\mathcal{P}}[\ell(\theta; \mu^{\mathrm{obs}}, \mu_0)]$.
	Then, for any $\delta \in (0,1)$ and $\epsilon > 0$, with probability at least $1 - \delta$ over the draw of $\mathcal{S}$, for all $\theta \in \Theta$:
	\begin{equation}\label{eq:generalization}
		\mathcal{R}(\theta) \leq \hat{\mathcal{R}}(\theta) + 2 L_\ell \epsilon + B\sqrt{\frac{\log(2 \mathcal{N}(\epsilon, \Theta) / \delta)}{2M}},
	\end{equation}
	where $\mathcal{N}(\epsilon,\Theta)$ denotes the $\epsilon$-covering number of $\Theta$ in the $\ell^2$ norm, $C_{N,d}:=\sqrt{dT(1+1/N)}$, and $L_\ell = 2 \sqrt{B} \cdot C_{N,d} \cdot \frac{L_{\Xi,\gamma}}{1-\kappa} \cdot L_{\varphi, \theta}$ is the Lipschitz constant of $\ell$ with respect to $\theta$.
\end{proposition}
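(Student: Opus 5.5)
The plan is a standard covering-number argument: show that the per-sample loss is Lipschitz in $\theta$, uniformly over samples, apply Hoeffding's inequality on a finite $\epsilon$-net of $\Theta$ with a union bound, and transfer the bound from the net to all of $\Theta$ through the Lipschitz property.

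\textbf{Step 1 (Lipschitz constant in $\theta$).} Fix a sample $(\mu^{\mathrm{obs}},\mu_0)$ and two parameters $\theta,\theta'$. Write $\bar\mu_\theta := \Phi(\varphi_\theta(\cdot,\mu^{\mathrm{obs}}),\mu_0)$.
\begin{itemize}
\item For any $\nu,\nu'$ we have $\bigl|\|a-\nu\|^2_{2,N}-\|a-\nu'\|^2_{2,N}\bigr| = \bigl|\|a-\nu\|_{2,N}-\|a-\nu'\|_{2,N}\bigr|\cdot\bigl(\|a-\nu\|_{2,N}+\|a-\nu'\|_{2,N}\bigr)$. By the triangle inequality and the bound $\|a-\nu\|_{2,N}\le\sqrt B$ on simplex-valued trajectories, this gives $|\ell(\theta)-\ell(\theta')|\le 2\sqrt B\,\|\bar\mu_\theta-\bar\mu_{\theta'}\|_{2,N}$.
\item The norm comparison $\|\nu\|_{2,N}\le \sqrt{\Delta t (N+1) d}\,\|\nu\|_\infty = C_{N,d}\|\nu\|_\infty$ turns this into an $\|\cdot\|_\infty$ bound.
\item The Lipschitz bound for $\Phi$ in $\gamma$ from Part (ii) of Proposition~\ref{prop:consistency} gives $\|\bar\mu_\theta-\bar\mu_{\theta'}\|_\infty\le \frac{L_{\Xi,\gamma}}{1-\kappa}\|\varphi_\theta(\cdot,\mu^{\mathrm{obs}})-\varphi_{\theta'}(\cdot,\mu^{\mathrm{obs}})\|_\infty$. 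That argument uses only the contraction (Assumption~\ref{ass:contraction}) and the uniform Lipschitz property of $\Xi$ in $\gamma$, both of which are assumed here.
\item Finally, the uniform Lipschitz property of $\theta\mapsto\varphi_\theta$ bounds the last factor by $L_{\varphi,\theta}\|\theta-\theta'\|_2$.
\end{itemize}
Chaining these gives $|\ell(\theta;\cdot)-\ell(\theta';\cdot)|\le L_\ell\|\theta-\theta'\|_2$ with exactly the stated $L_\ell$. The same bound then holds for both $\hat{\mathcal R}$ and $\mathcal R$.

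\textbf{Step 2 (finite net).} Take an $\epsilon$-cover $\{\theta_1,\dots,\theta_{\mathcal N}\}$ of $\Theta$ with $\mathcal N=\mathcal N(\epsilon,\Theta)$, which is finite by compactness. For each fixed $\theta_j$, the losses $\ell(\theta_j;\mu^{\mathrm{obs},i},\mu_0^i)$ are i.i.d.\ and take values in $[0,B]$. Hoeffding's inequality gives $\PP\bigl(\mathcal R(\theta_j)-\hat{\mathcal R}(\theta_j)>t\bigr)\le \exp(-2Mt^2/B^2)$. A union bound over the $\mathcal N$ centers, with $t=B\sqrt{\log(2\mathcal N/\delta)/(2M)}$, makes the failure probability at most $\delta/2\le\delta$. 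Using a two-sided version accounts for the factor 2 in the logarithm, and costs nothing.

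\textbf{Step 3 (extension to all of $\Theta$).} For arbitrary $\theta$, pick $\theta_j$ with $\|\theta-\theta_j\|_2\le\epsilon$. Then
\[
\mathcal R(\theta)\le \mathcal R(\theta_j)+L_\ell\epsilon\le \hat{\mathcal R}(\theta_j)+t+L_\ell\epsilon\le\hat{\mathcal R}(\theta)+2L_\ell\epsilon+t,
\]
which is \eqref{eq:generalization}.

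\textbf{Main obstacle.} The only delicate point is Step 1: making sure the Lipschitz estimate for $\Phi$ is uniform over the data, and that the constants chain correctly across the different norms. In particular, the $\|\cdot\|_\infty$ on parameter paths must match the one in the $\Xi$-Lipschitz hypothesis, and $\varphi_\theta$'s $\ell^2$-in-$\theta$ Lipschitz constant must be read as controlling the sup over time and components. I would state this interpretation explicitly and otherwise reuse Proposition~\ref{prop:consistency} verbatim.
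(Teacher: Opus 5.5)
Your proposal is correct and follows the paper's proof essentially step for step. The Lipschitz constant is built from the same chain $2\sqrt{B}\cdot C_{N,d}\cdot \frac{L_{\Xi,\gamma}}{1-\kappa}\cdot L_{\varphi,\theta}$ (via $|a^2-b^2|\le(a+b)|a-b|$, the norm comparison, and the Lipschitz bound on $\Phi$ from Proposition~\ref{prop:consistency}), followed by Hoeffding plus a union bound over an $\epsilon$-cover, transferred to all of $\Theta$ at cost $2L_\ell\epsilon$; the one detail worth stating is that the cover centers should lie in $\Theta$ (the paper takes $C_\epsilon\subset\Theta$), so that the Lipschitz hypothesis on $\theta\mapsto\varphi_\theta$ applies at them.
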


For fixed $\epsilon$, $\delta$, and parameter class $\Theta$, the stochastic term in this bound scales as $M^{-1/2}$; the additional term $2L_\ell\epsilon$ is the price of passing from a finite cover to the full parameter class. For a bounded subset of $\mathbb R^p$ with diameter $D$ one may use the standard bound $\mathcal{N}(\epsilon,\Theta)\leq (1+2D/\epsilon)^p$.

\begin{proof}
	We first bound the Lipschitz constant $L_\ell$ of $\theta \mapsto \ell(\theta; \mu^{\mathrm{obs}}, \mu_0)$. For $\theta_1, \theta_2 \in \Theta$, write $\bar\mu_j := \Phi(\varphi_{\theta_j}(\cdot, \mu^{\mathrm{obs}}), \mu_0)$ for $j = 1, 2$. Then:
	\[
		|\ell(\theta_1) - \ell(\theta_2)| = \bigl|\|\mu^{\mathrm{obs}} - \bar\mu_1\|_{2,N}^2 - \|\mu^{\mathrm{obs}} - \bar\mu_2\|_{2,N}^2\bigr| \leq 2\sqrt{B}\, \|\bar\mu_1 - \bar\mu_2\|_{2,N},
	\]
	using the identity $|a^2 - b^2| \leq (a+b)|a-b|$ and $\|\mu^{\mathrm{obs}} - \bar\mu_j\|_{2,N} \leq \sqrt{B}$. 
	
	By the norm comparison $\|\cdot\|_{2,N} \leq C_{N,d}\, \|\cdot\|_\infty$ and the Lipschitz bound on $\Phi$ from Proposition~\ref{prop:consistency}:
	\[
		\|\bar\mu_1 - \bar\mu_2\|_\infty \leq \frac{L_{\Xi,\gamma}}{1-\kappa}\, \|\varphi_{\theta_1} - \varphi_{\theta_2}\|_\infty \leq \frac{L_{\Xi,\gamma}}{1-\kappa}\, L_{\varphi, \theta}\, \|\theta_1 - \theta_2\|_2.
	\]

	Combining: $L_\ell = 2\sqrt{B}\, C_{N,d}\, \frac{L_{\Xi,\gamma}}{1-\kappa}\, L_{\varphi, \theta}$.

	The generalization bound then follows from a standard covering number argument. Let $C_\epsilon \subset \Theta$ be an $\epsilon$-cover with $|C_\epsilon| = \mathcal{N}(\epsilon, \Theta)$. For any $\theta$, there exists $\theta_j \in C_\epsilon$ with $\|\theta - \theta_j\|_2 \leq \epsilon$, so $|\mathcal{R}(\theta) - \mathcal{R}(\theta_j)| \leq L_\ell \epsilon$ and $|\hat{\mathcal{R}}(\theta) - \hat{\mathcal{R}}(\theta_j)| \leq L_\ell \epsilon$. By Hoeffding's inequality applied to each fixed $\theta_j$ (with $\ell$ bounded by $B$) and a union bound over $C_\epsilon$:
	\[
		\PP\!\left(\max_{\theta_j \in C_\epsilon} \bigl(\mathcal{R}(\theta_j) - \hat{\mathcal{R}}(\theta_j)\bigr) \geq t\right) \leq \mathcal{N}(\epsilon, \Theta)\, \exp\!\left(-\frac{2 M t^2}{B^2}\right).
	\]
	Choosing $t = B\sqrt{\frac{\log(2\mathcal{N}(\epsilon,\Theta)/\delta)}{2M}}$ makes the right-hand side at most $\delta/2$, hence at most $\delta$; the extra factor $2$ in the logarithm is harmless for this one-sided bound. For any $\theta$: $\mathcal{R}(\theta) \leq \mathcal{R}(\theta_j) + L_\ell \epsilon \leq \hat{\mathcal{R}}(\theta_j) + t + L_\ell \epsilon \leq \hat{\mathcal{R}}(\theta) + t + 2L_\ell \epsilon$.
\end{proof}

In the next section, we illustrate this $M^{-1/2}$ finite-sample scaling on a simple sampled-parameter experiment.

\section{Empirical Illustration of the Generalization Bound}\label{app:generalization_algo}

To empirically illustrate the $\mathcal{O}(M^{-1/2})$ scaling predicted by Proposition~\ref{prop:generalization}, we evaluate a finite-sample proxy for the generalization gap. The core objective is to isolate the effect of the finite sample size ($M$) from optimization error; the observation-noise level is fixed throughout the experiment. 

The procedure works by evaluating, for each candidate parameter $\theta \in \Theta$, the difference between the empirical risk $\hat{\mathcal{R}}_M(\theta)$ on a finite sample of $M$ initial conditions and the population risk $\mathcal{R}(\theta)$ approximated via Monte Carlo integration over a large, separate test set. As a Monte Carlo proxy for the uniform gap in Proposition~\ref{prop:generalization}, we sample many parameters $\theta$ uniformly at random from $\Theta$ and report the maximum gap over this sampled set. This process is repeated across multiple subsamples to compute the standard error. The full methodology is detailed in Algorithm~\ref{al: generalization_experiment}.

\begin{algorithm}[H]
	\caption{Generalization Bound Empirical Verification.}
	\label{al: generalization_experiment}
	\begin{algorithmic}[1]
		\Require True parameter $\theta^*$; training set sizes $\mathbb{M} = \{M_1, M_2, \ldots\}$; subsamples $S \in \NN$; test size $N_{\text{test}} \in \NN$; number of random parameters $P \in \NN$; parameter space $\Theta$
		\State Generate $N_{\text{test}}$ test initial conditions $\mu_0^{\text{test},i} \sim \text{Dir}(\mathbf{1})$ \Comment{{\footnotesize Data Generation}}
        \State Simulate ground-truth test trajectories: $\bar{\mu}^{\text{test},i} \gets \Phi(\theta^*, \mu_0^{\text{test},i})$ for $i \in \{1, \dots, N_{\text{test}}\}$
        \State Generate a large pool of $N_{\text{train}}$ training initial conditions and corresponding true trajectories
        \State Sample $P$ parameters $\{\theta_p\}_{p=1}^P$ uniformly at random from $\Theta$ \Comment{{\footnotesize No training; fixed $\theta$}}
        \For{$M \in \mathbb{M}$}
            \For{$s \in \{1, \dots, S\}$}
                \State Sample $M$ trajectories $\{(\mu_0^{j}, \bar{\mu}^{j})\}_{j=1}^M$ from the training pool \Comment{{\footnotesize Draw a finite dataset}}
                \For{$p \in \{1, \dots, P\}$}
                    \State $\hat{\mathcal{R}}_M(\theta_p) \gets \frac{1}{M} \sum_{j=1}^M \|\bar{\mu}^{j} - \Phi(\theta_p, \mu_0^{j})\|_{2,N}^2$ \Comment{{\footnotesize Empirical Risk}}
                    \State $\mathcal{R}(\theta_p) \gets \frac{1}{N_{\text{test}}} \sum_{i=1}^{N_{\text{test}}} \|\bar{\mu}^{\text{test},i} - \Phi(\theta_p, \mu_0^{\text{test},i})\|_{2,N}^2$ \Comment{{\footnotesize Population Risk}}
                \EndFor
                \State $\text{Gap}_{M, s} \gets \sup_{p} \left| \mathcal{R}(\theta_p) - \hat{\mathcal{R}}_M(\theta_p) \right|$ \Comment{{\footnotesize Supremum over sampled $\theta$}}
            \EndFor
            \State $\mu_{\text{Gap}}(M) \gets \frac{1}{S}\sum_{s=1}^S \text{Gap}_{M, s}$ \Comment{{\footnotesize Mean Gap}}
            \State $\sigma_{\text{Gap}}(M) \gets \text{std}(\text{Gap}_{M}) / \sqrt{S}$ \Comment{{\footnotesize Standard Error}}
        \EndFor
        \State \Return{$\{\mu_{\text{Gap}}(M), \sigma_{\text{Gap}}(M)\}_{M \in \mathbb{M}}$ for log-log regression analysis}
	\end{algorithmic}
\end{algorithm}

\paragraph{Experimental setup.}
We apply Algorithm~\ref{al: generalization_experiment} to the Linear-Quadratic MFG model from Sec.~\ref{sec: experiments} with a $2$-dimensional linear parameterization $\gamma_\theta(t) = \theta_0 + \theta_1 t$, true parameter $\theta^* = (1.0, 2.0)$, and $d = 3$ discrete states. To study a sampled proxy for the uniform bound of Proposition~\ref{prop:generalization} without conflating optimization error, we bypass any training step: we evaluate the generalization gap $|\mathcal{R}(\theta) - \hat{\mathcal{R}}_M(\theta)|$ for $P=500$ parameters $\theta$ drawn uniformly at random from $\Theta = [0.1, 5.0]^2$, and report the maximum over these sampled parameters. The hyperparameters are summarized in Table~\ref{tab:gen_bound_params}.

\begin{table}[h]
	\centering
	\caption{Hyperparameters for the generalization bound verification experiment.}
	\label{tab:gen_bound_params}
	\small
	\begin{tabular}{ll}
		\toprule
		\textbf{Parameter} & \textbf{Value} \\
		\midrule
		Discrete states ($d$) & 3 \\
		Time horizon ($T$) & 1.0 \\
		Number of points in time ($N+1$) & 20 \\
		True parameter $\theta^*$ & $(1.0, 2.0)$ \\
		Parameter domain $\Theta$ & $[0.1, 5.0]^2$ \\
		Parameterization & Linear: $\gamma(t) = \theta_0 + \theta_1 t$ \\
		Noise level (Dirichlet $\kappa^{-1}$) & $10^{-4}$ \\
		Training pool size $N_{\text{train}}$ & 5000 \\
		Test set size $N_{\text{test}}$ & 5000 \\
		Sample sizes $\mathbb{M}$ & $\{1, 5, 10, 20, 50, 100, 200\}$ \\
		Number of random $\theta$'s & 500 \\
		Subsamples $S$ & 100 \\
		\bottomrule
	\end{tabular}
\end{table}

\begin{figure}[h]
	\centering
	\includegraphics[width=0.75\textwidth]{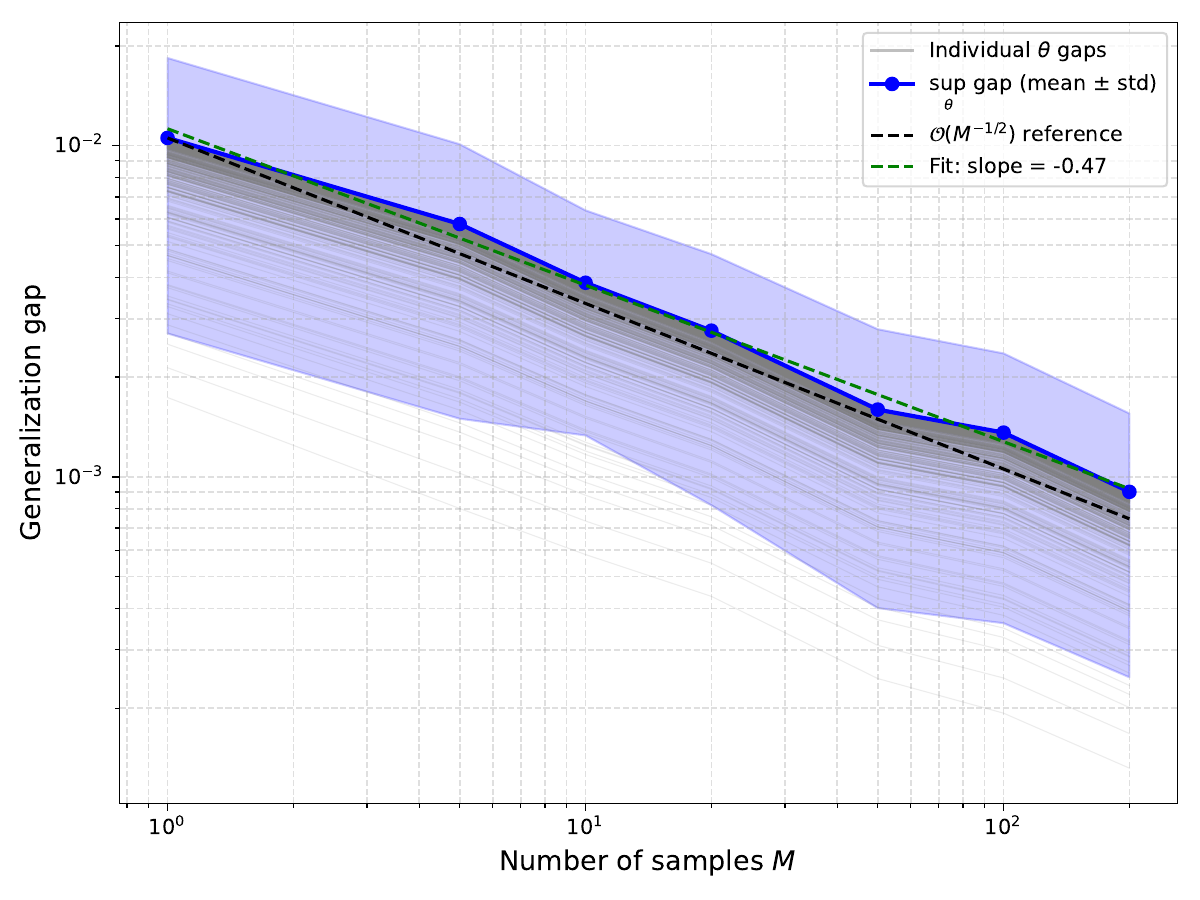}
	\caption{Empirical illustration of the generalization-bound scaling. Gray curves show the generalization gap $|\mathcal{R}(\theta) - \hat{\mathcal{R}}_M(\theta)|$ averaged over $S = 100$ subsamples for each of the $500$ randomly sampled parameters $\theta \in \Theta$. The blue curve shows the maximum over the sampled parameters of these gaps, with shaded $\pm 1$ standard deviation. The dashed black line is the theoretical $\mathcal{O}(M^{-1/2})$ reference. The fitted log-log slope of $-0.50$ is consistent with the predicted rate.}
	\label{fig:gen_bound}
\end{figure}

\paragraph{Discussion.}
Fig.~\ref{fig:gen_bound} shows that the empirical supremum over the sampled parameters decays at a rate of roughly $M^{-0.50}$ in log-log scale, in agreement with the $\mathcal{O}(M^{-1/2})$ rate predicted by Proposition~\ref{prop:generalization}. The gap ranges from approximately $10^{-2}$ at $M = 1$ to $9 \times 10^{-4}$ at $M = 200$. The individual $\theta$-level gaps (gray curves) all exhibit the same scaling behavior, and the supremum envelope (blue curve) closely tracks them, indicating that no single parameter dominates the worst case disproportionately. We note that the error bars (standard deviation across subsamples) shrink proportionally with the mean, confirming the stability of the Monte Carlo estimation procedure.

\begin{remark}[Gap between the theoretical constant and empirical observations]\label{rem:bound_tightness}
	While the empirical generalization gap confirms the theoretically predicted $\mathcal{O}(M^{-1/2})$ convergence rate, the absolute value of the theoretical upper bound (Proposition~\ref{prop:generalization}) is typically several orders of magnitude larger than the observed gap. This discrepancy arises from several compounding sources of conservatism:
	\begin{enumerate}[leftmargin=*]
		\item \textbf{Worst-case loss bound $B$.} The bound assumes $\ell(\theta; \mu^{\mathrm{obs}}) \leq B$ for all $\theta \in \Theta$ and all possible observations, where $B = 2T(1 + 1/N)$ bounds the squared $L^2$ distance between any two distributions on the simplex. In practice, the actual losses are concentrated in a narrow range far below this worst-case envelope.
		\item \textbf{Parameter space diameter $D$.} The covering number $\mathcal{N}(\epsilon, \Theta) \leq (1+2D/\epsilon)^p$ grows polynomially with the diameter $D = \mathrm{diam}(\Theta)$ of the full parameter space. In practice, the loss function varies appreciably only over a much smaller effective region around the true parameter $\theta^*$.
		\item \textbf{Distribution-free nature.} The Hoeffding--union bound argument holds for \emph{any} data-generating distribution over initial conditions $\mu_0$. It cannot exploit the specific concentration properties of the Dirichlet distribution or the smoothness of the MFG equilibrium map $\Phi$, both of which reduce the variance of finite-sample averages in our setting.
	\end{enumerate}
	We emphasize that the primary contribution of Proposition~\ref{prop:generalization} is the \emph{rate} $\mathcal{O}(M^{-1/2})$ and its structural dependence on the Lipschitz constants of the MFG equilibrium map, rather than the absolute value of the constant.
\end{remark}

\section{Experimental Details}\label{appendix: experimental_details}

\subsection{Synthetic Data Generation} In the LQ and cybersecurity setting, we generate synthetic data by solving the forward-backward MFG system with true parameters sampled from specific functional forms. Specifically, we consider three forms for all true parameters $\gamma_i \in [\gamma_{i, \min}, \gamma_{i, \max}]$:

\begin{enumerate}
    \item \textbf{Constant}: $\gamma_i(t, \mu) = (\gamma_{i, \min} + \gamma_{i, \max}) / 2$.
    \item \textbf{Time-dependent}: $\gamma_i(t, \mu) = \gamma_{i, \min} + \frac{4(\gamma_{i, \max} - \gamma_{i, \min})}{T^2} t(T-t)$.
    \item \textbf{Mean field dependent}: $\gamma_i(t, \mu) = \gamma_{i, \min} + (\gamma_{i, \max} - \gamma_{i, \min})\mu(0)$, scaling linearly with the mass in the first state. 
\end{enumerate}
All hyperparameters and instantiations for $\gamma_{i, \min}$ and  $\gamma_{i, \max}$ are reported in App.~\ref{appendix: experimental_parameters}.
\paragraph{Noise Robustness.} We evaluate the robustness of our approach by introducing Dirichlet noise to the observed mean field distributions. Specifically, for a given noise level $\sigma > 0$, the noisy observation $\tilde{\mu}_t$ at each time step is sampled from a Dirichlet distribution:
\begin{equation}
    \tilde{\mu}_t \sim \text{Dir}(p_t), \quad \text{where} \quad p_t = \frac{1}{\sigma} (\mu_t + \epsilon \bm{1})
    \label{eq: noise}
\end{equation}
Here, $\mu_t$ is the true mean field distribution, $\epsilon = 10^{-6}$ is a small constant to ensure strict positivity, and $\sigma$ acts as the variance-controlling noise level (with smaller $\sigma$ yielding samples tightly concentrated around the true mean). We apply this analysis to the synthetic linear-quadratic and cybersecurity settings, where we have full control over the noise.

\subsection{Linear-Quadratic}
In the LQ MFG, the players are free to choose their transition rates $a_{xy}$ between the states without being influenced by predefined dynamics. In the absence of predefined transitions, the game is flexible and can be scaled to an arbitrary number of states $d$. The players' transition rates between those states are subject to the following running and terminal costs,
\begin{equation}
	f^{\gamma}(x, a, \mu):= b \sum_{y\neq x} (a_{xy} - 2)^2 + \mu(x); \quad g(x, \mu) = \frac{1}{2} \mu(x)^2.
\end{equation}
Here, the parameter of interest is $\gamma \equiv b \in \RR$, which determines the cost associated with the state transitions, while the addition of $\mu(x)$ penalizes the players for being in crowded states. As a consequence, large values for $b$ disincentivize the players to deviate from $a_{xy} = 2$, which is expected to result in a smoothly evolving mean field distribution $\mu$. In contrast, small values of $b$ increase the dominance of $\mu(x)$ in the running cost, resulting in a fast redistribution of the population over the states. Due to the linear-quadratic structure of the game, an analytical solution for the optimal transition rate $a^*$ can be obtained through a first-order minimization of the Hamiltonian \cite{Bensoussan2016}:
\begin{equation}
\begin{aligned}
H_b(t,x,\mu,u)
&= \mu(x)+\inf_{a\in [1,3]^{d-1}}
\left\{\sum_{y\neq x} b(a_{xy}-2)^2+a_{xy}(u(y)-u(x))\right\},\\
a_{xy}^*(x,u)
&= \Pi_{[1,3]}\!\left(2-\frac{u(y)-u(x)}{2b}\right).
\end{aligned}
\label{eq: hamiltonian and optimal action}
\end{equation}
where $\Pi_{[1,3]}(z)=\min\{3,\max\{1,z\}\}$.

\subsection{Cybersecurity}
The cybersecurity game considers a botnet-like population of computers that may become victims of a cyberattack. Each computer can be defended against such attacks by updating its protection level. In this case, the computer moves from the Unprotected ($U$) to the Defended ($D$) state. An update takes the time $\lambda >0$ to complete. In addition, each computer is either Susceptible ($S$) to the attack, or already Infected ($I$) with the malware. Therefore the $d=4$ states are $UI,US,DI,DS$; in the transition matrix below we order rows and columns as $(DI,DS,UI,US)$. The game admits two actions $\cA = \{0, 1\}$. If $a= 0$ then the security level of the computer remains unchanged. For $a=1$, the security level of the computer is updated, effectively reversing its protection level.

Depending on their protection level, infected computers can recover to the susceptible state at a recovery rate $q^D_{rec}$ or $q^U_{rec}$. There are two ways for a susceptible computer to become infected: either (1) directly, via an attack from the hacker, or (2) indirectly, via an infected computer. Direct attacks occur at the rates $v_Hq^D_{inf}$ and $v_Hq^U_{inf}$, for the defended and unprotected computers, respectively. The rate of infection through indirect attacks depends on the distribution of states $\mu$ among the vulnerable computers. A computer in $DI$ infects a computer in $DS$ with rate $\beta_{DD}\mu(DI)$ and a computer in $US$ with rate $\beta_{DU}\mu(DI)$. Similarly, computer in $UI$ infects a computer in $DS$ with rate $\beta_{UD}\mu(UI)$ and a computer in $US$ with rate $\beta_{UU}\mu(UI)$. The resulting transition matrix is presented in~\eqref{eq: cyber_Q}. 
{\small
	\begin{equation}
		Q^{\gamma}(\cdot, \cdot, \mu, v_H, a) =
		\begin{bNiceArray}{cccc}[
				first-row,
				first-col,
				code-for-first-row=\scriptstyle,
				code-for-first-col=\scriptstyle]
			& DI & DS & UI & US\\
			DI & \dots  & q_{rec}^D  & \bm{1}_{\{1\}}(a) \lambda  & 0 \\
			DS & \termDS  & \dots  & 0  & \bm{1}_{\{1\}}(a) \lambda \\
			UI & \bm{1}_{\{1\}}(a) \lambda & 0 & \dots & q_{rec}^U \\
			US & 0 & \bm{1}_{\{1\}}(a) \lambda & \termUS & \dots\\
		\end{bNiceArray}
        	\label{eq: cyber_Q}
	\end{equation}
    }
Each computer owner aims to minimize the running cost that combines $k_D$, the costs of the defense system per unit of time, and the $k_I$ per unit of time loss associated with an infected device:
\begin{equation}
	f^{\gamma} (t, x, a, \mu)= k_D \bm{1}_{\{D\}}(x) + k_I \bm{1}_{\{I\}}(x).
	\label{eq: cost_cyber}
\end{equation}

We model the unknown parameters $\gamma \equiv (q_{rec}^D, q_{rec}^U, q_{inf}^D,\allowbreak q_{inf}^U, \beta_{DD}, \beta_{DU}, \beta_{UU}, \beta_{UD})$. The parameters $\lambda$ and $v_H $ are not learned, as they describe the unobservable dynamics of the representative player ($\lambda$) or are inseparable from $q_{inf}^U$ and $q_{inf}^D$ ($v_H$). 

\subsection{Susceptible-Infected-Recovered}
In real-world infectious disease dynamics, state transitions are not just influenced by parameters that are associated with the disease. Instead, the infectious disease transmission also depends on the intensity of contact between individuals. This is the player's control: each player has the option to minimize their contact with others, for example, by minimizing their movements or wearing protective equipment. Therefore, we define the transition rate of a susceptible representative agent to the infected state as $\beta \alpha_t \mu_t(I)$. Here, $\beta$ denotes the base infection rate, $\alpha_t$ is the socialization-level control of the susceptible individual, and $\mu_t(x)$ is the fraction of the population that is in state $x$ at time $t$. We implicitly assume that the infected individuals will follow any distancing guidelines that are suggested to them.  We keep the state transitions for infected and recovered players (i.e., the recovery and waning of immunity rates, respectively) fixed, making them dependent on disease properties only. Because we are modelling the viral dynamics per season, we assume that the players retain full immunity after having recovered. Summarizing these dynamics, we construct the transition matrix as
\begin{equation}
	Q^{\gamma}(a_t, \mu_t) = \begin{bmatrix}
		-\beta a_t\mu_t(I) & \beta a_t \mu_t(I) & 0       \\
		0                  & -\gamma          & \gamma  \\
		0             & 0                & 0
	\end{bmatrix},
\end{equation}
where the row-wise sum of the elements is 0 by design to ensure that $Q^{\gamma} \in \MM^d$. The agents face a cost for being ill $c_I$ and for having a low contact factor $c_\alpha$ in the Susceptible state
\begin{equation}
   f^{\gamma}(t, x, a, \mu, c_\alpha, c_I)=  \frac{c_\alpha}{2}(1-\alpha_t(x))^2 \bm{1}_{\{S\}} + c_{I} \bm{1}_{\{I\}}.
\end{equation}
Both cost weights are learned by the neural network. When these weights are small, the strategic component has little influence on the fitted dynamics; however, in the degenerate case $c_\alpha=c_I=0$, the optimal control is not uniquely selected, so any mean-field calibration baseline requires an explicit selection rule, such as fixing $\alpha_t\equiv 1$.

\subsection{Urban Mobility}
To further explore the relevance of the MFG mechanics, we consider a bike-sharing network where commuters strategically choose their destinations based on station congestion. We compare two models calibrated on the same data: a \emph{mean field dynamics} model (baseline), which describes population flow via exogenous transition rates, and a \emph{mean field game} model, which additionally captures strategic congestion avoidance through an optimal control formulation. The MFG model contains the forward dynamics baseline as a special case: if the congestion/general-cost term is zero, $w_g=0$, and the zero HJB solution is selected, then $a_{xy}^*=q_{xy}^{\mathrm{base}}$ and the FPK equation reduces to the baseline.

\paragraph{Mean field dynamics model (baseline).} In this model, agents follow exogenous, time-dependent transition rates without any strategic behavior. The population distribution evolves according to the Kolmogorov forward equation:
\begin{equation}
	\dot{\mu}_i = \sum_{j \neq i} \mu_j \, q_{ji}(\gamma) - \mu_i \sum_{j \neq i} q_{ij}(\gamma),
\end{equation}
where the rates $q_{ij}(t) = \mathrm{softplus}(\gamma_{ij}(t))$ are parameterized by a neural network $\varphi_\theta : [0,T] \to \RR^{d(d-1)}$. Depending on the test regime (see below), $\varphi_\theta$ is either a constant vector, a single affine layer, or a multi-layer perceptron. The network is trained by minimizing the $L_2$ discrepancy between the predicted trajectory $\mu^{\gamma_\theta}$ and the observed distribution $\hat{\mu}$, analogously to the procedure described in Sec.~\ref{sec: method}.

\paragraph{Mean field game model.} In the MFG formulation, each commuter optimizes their destination choice to minimize a cost that penalizes both deviating from the natural demand pattern and arriving at congested stations. The running cost is:
\begin{equation}
	f(x, a, \mu) = b \sum_{y \neq x} (a_{xy} - q_{xy}^{\mathrm{base}})^2 + c \sum_{y \neq x} \mu(y) \, a_{xy},
\end{equation}
where $q_{xy}^{\mathrm{base}}(t) = \mathrm{softplus}(\gamma_{xy}(t))$ are the same base rates as in the baseline model, $b > 0$ controls the cost of deviating from the natural demand, and $c \geq 0$ weights the destination congestion penalty. The terminal cost is $g(x) = w_g(x)$, where $w_g \in \RR^d$ is a learnable vector initialized to zero. The Hamiltonian minimization yields the optimal transition rate:
\begin{equation}
	a_{xy}^* = \max\!\left(0, \; q_{xy}^{\mathrm{base}} - \frac{c\,\mu(y) + u(y) - u(x)}{2b}\right),
\end{equation}
where $u$ is the value function obtained from the backward HJB equation with terminal condition $u(T,x) = w_g(x)$. 

\paragraph{Physical interpretation.} The baseline model captures the dominant origin-destination demand patterns, morning commute from residential to commercial areas, and the evening reverse flow, through the learned rates $q_{ij}^{\mathrm{base}}(t)$. The MFG model adds a strategic correction: when a destination station $y$ is crowded ($\mu(y)$ is large), the effective rate $a_{xy}^*$ is reduced by $c\mu(y)/(2b)$. In Tests 1--2, the implementation additionally clips these rates above at $100$ for numerical stability; Test~3 uses only the nonnegativity projection shown in the formula. Furthermore, the backward equation provides \emph{forward-looking} congestion avoidance through the value function $u$, allowing agents to anticipate future bottlenecks rather than merely reacting to current occupancy.

\paragraph{Experimental setup.} For the MFG model, the cost parameters are initialized to $b_0 = 1.0$ and $c_0 = 0.5$, and $w_g = 0$. All parameters ($\theta$, $b$, $c$, $w_g$) are jointly optimized using the Adam optimizer \cite{Kingma2014}. The MFG equilibrium is computed at each training step using the damped Picard iteration described in App.~\ref{appendix: picard iteration}. We report the mean field $L_2$ and relative $L_2$ prediction errors on both training and held-out test trajectories, averaged over 5 seeds.

\section{Mean Field Control: Formulation and Optimality Conditions}\label{appendix: mfc_derivation}
While MFGs describe the Nash equilibrium of non-cooperative agents, Mean Field Control (MFC) describes the social optimum where a central planner minimizes the aggregate cost of the population. In this section, we derive the optimality conditions for MFC in a finite-state space setting using the calculus of variations, following the approach of \citet{Bensoussan2013}.

\subsection{Problem Definition}
Consider a population of agents distributed over a finite-state space $[d]$. The central planner aims to choose a feedback control policy $\alpha_t = (\alpha_{t, xy})_{x,y \in [d]}$ to minimize the total cost functional:
\begin{equation}
    J(\alpha) = \int_0^T \sum_{x \in [d]} f_{\params}(t, x, \alpha_t(x), \mu_t) \mu_t(x) dt + \sum_{x \in [d]} g_{\params}(x, \mu_T) \mu_T(x),
    \label{eq:mfc_cost}
\end{equation}
subject to the Fokker-Planck equation (forward ODE) for the mean field $\mu_t \in \mathcal{P}([d])$:
\begin{equation}
    \partial_t \mu_t(x) = \sum_{y \neq x} \left( \mu_t(y) \alpha_{t, yx} - \mu_t(x) \alpha_{t, xy} \right), \quad \mu_0(x) \text{ given.}
    \label{eq:mfc_fpk}
\end{equation}
For simplicity, this appendix restricts to rate-control dynamics of the form~\eqref{eq:mfc_fpk}, where the control variables $\alpha_{t,xy}$ are the transition rates themselves. This covers the LQ MFC example, where $\gamma_\theta(t,\mu)$ enters the running cost, and hence the optimal rates, but not an additional primitive generator $Q^\gamma(t,\alpha,\mu)$. Extending the derivation to a general parameter-dependent generator would require adding the corresponding derivatives of the dynamics to the adjoint equation. For clarity, the variational derivation below is written in the interior case. When transition rates are constrained to be nonnegative or bounded, the pointwise minimization condition should be read as the corresponding variational inequality condition over the admissible rate set.

\subsection{The Optimality System}
The necessary conditions for the optimal control $\alpha^*$ are characterized by a forward-backward system of ODEs. Specifically, there exists an adjoint state $u_t(x)$ such that:
\begin{enumerate}
    \item \textbf{Forward Equation}: The mean field $\mu_t$ satisfies the FPK equation \eqref{eq:mfc_fpk} with the optimal control $\alpha^*$.
    \item \textbf{Backward Adjoint Equation}: The adjoint state $u_t$ satisfies:
    \begin{equation}
        -\partial_t u_t(x) = \sum_{y \neq x} \alpha^*_{t, xy} (u_t(y) - u_t(x)) + f_{\params}(t, x, \alpha^*_t, \mu_t) + \sum_{y \in [d]} \mu_t(y) \left[ \frac{\partial f_{\params}}{\partial \mu(x)} + \frac{\partial f_{\params}}{\partial \params} \frac{\partial \params}{\partial \mu(x)} \right],
    \end{equation}
    where the partial derivatives inside the sum are evaluated at $(t, y, \alpha^*_t(y), \mu_t)$. The terminal condition is given by:
    \begin{equation}
        u_T(x) = g_{\params}(x, \mu_T) + \sum_{y \in [d]} \mu_T(y) \left[ \frac{\partial g_{\params}}{\partial \mu(x)} + \frac{\partial g_{\params}}{\partial \params} \frac{\partial \params}{\partial \mu(x)} \right].
    \end{equation}
    \item \textbf{Optimality Condition}: At each time $t$, the control $\alpha^*_t(x)$ minimizes the Hamiltonian:
    \begin{equation}
        \alpha^*_t(x) \in \arg\min_{a} \left[ f_{\params}(t, x, a, \mu_t) + \sum_{y \neq x} a_{xy} (u_t(y) - u_t(x)) \right].
    \end{equation}
\end{enumerate}

\subsection{Derivation via Calculus of Variations}
We use a perturbation argument to derive the adjoint equation. Let $\alpha$ be the optimal control and let $\beta$ be an arbitrary perturbation. We define $\alpha^\epsilon = \alpha + \epsilon \beta$. Let $\mu^\epsilon$ be the mean field corresponding to $\alpha^\epsilon$, and write $\mu^\epsilon = \mu + \epsilon \nu + \mathcal{O}(\epsilon^2)$, where $\nu$ satisfies the linearized dynamics:
\begin{equation}
    \partial_t \nu_t(x) = \sum_{y \neq x} (\nu_t(y) \alpha_{yx} + \mu_t(y) \beta_{yx} - \nu_t(x) \alpha_{xy} - \mu_t(x) \beta_{xy}), \quad \nu_0(x) = 0.
\end{equation}
The first-order variation of the cost functional \eqref{eq:mfc_cost} is given by:
\begin{equation}
\resizebox{\hsize}{!}{$
    \delta J = \int_0^T \sum_{x \in [d]} \left[ \frac{\partial f_{\params}}{\partial \alpha} \beta_t(x) \mu_t(x) + \left( f_{\params} + \sum_{y \in [d]} \mu_t(y) \frac{d f_{\params}(t, y, \alpha_t(y), \mu_t)}{d \mu_t(x)} \right) \nu_t(x) \right] dt + \sum_{x \in [d]} \left[ g_{\params} + \sum_{y \in [d]} \mu_T(y) \frac{d g_{\params}(y, \mu_T)}{d \mu_T(x)} \right] \nu_T(x),$}
\end{equation}
where $\frac{d f_{\params}}{d \mu(x)}$ denotes the total derivative accounting for the implicit dependence of $\params$ on $\mu$. 
To eliminate the dependence on $\nu$, we introduce the adjoint state $u_t(x)$ and observe that for any $u_t$:
\begin{equation}
    \int_0^T \sum_x u_t(x) \left( \partial_t \nu_t(x) - \sum_{y \neq x} (\nu_t(y) \alpha_{yx} + \mu_t(y) \beta_{yx} - \nu_t(x) \alpha_{xy} - \mu_t(x) \beta_{xy}) \right) dt = 0.
\end{equation}
Integrating by parts $\int u \partial_t \nu dt = [u\nu]_0^T - \int \partial_t u \nu dt$ and using $\nu_0=0$:
\begin{equation}
    \sum_x u_T \nu_T - \int_0^T \sum_x \left[ \partial_t u + \sum_{y \neq x} \alpha_{xy}(u(y) - u(x)) \right] \nu dt - \int_0^T \sum_x \mu(x) \sum_{y \neq x} \beta_{xy}(u(y) - u(x)) dt = 0.
\end{equation}
Substituting this expression into $\delta J$, we require the terms involving $\nu$ and $\nu_T$ to cancel out. This leads directly to the backward adjoint ODE and the terminal condition:
\begin{equation}
    -\partial_t u_t(x) = \sum_{y \neq x} \alpha_{xy}(u(y) - u(x)) + f_{\params} + \sum_y \mu(y) \frac{d f_{\params}}{d \mu(x)}, \quad u_T(x) = g_{\params} + \sum_y \mu_T(y) \frac{d g_{\params}}{d \mu(x)}.
\end{equation}
Finally, the remaining terms in $\delta J$ involve only the arbitrary perturbation $\beta$:
\begin{equation}
    \delta J = \int_0^T \sum_x \mu_t(x) \left[ \frac{\partial f_{\params}}{\partial \alpha} \beta + \sum_{y \neq x} \beta_{xy} (u(y) - u(x)) \right] dt.
\end{equation}
Requiring $\delta J = 0$ for all $\beta$ yields the optimality condition for the control policy.

\subsection{Numerical Approximation via Fixed-Point Iteration}
To numerically compute the social optimum characterized by the forward-backward optimality system, we employ a Picard iteration scheme analogous to Algo.~\ref{al: picard}. However, solving the Mean Field Control system presents a unique mathematical and computational challenge when the true parameter $\gamma$ depends explicitly on the population distribution, i.e., $\gamma_t = \gamma(t, \mu_t)$. 

In this scenario, the backward adjoint equation strictly requires the evaluation of the chain rule term $\frac{\partial \gamma}{\partial \mu}$ to internalize the planner's impact on the environment. If we were to embed the nonlinear function $\gamma(\cdot)$ directly into the backward ODE solver, the implicit numerical differentiation of the fixed-point operator would become highly recursive and computationally prohibitive. 

For a practical implementation, we linearly decouple the parameter and Jacobian trajectories during the iteration. At each Picard step $k$, given the previous distribution trajectory $\mu^{(k)}$, we pre-compute both the parameter values and their Jacobian across the entire time horizon:
\begin{equation}
    \gamma^{(k)}_t = \gamma(t, \mu^{(k)}_t), \quad \quad D^{(k)}_t = \frac{\partial \gamma}{\partial \mu}(t, \mu^{(k)}_t).
\end{equation}
These pre-computed trajectories, $(\gamma^{(k)}, D^{(k)})$, are passed as fixed, exogenous inputs into the ODE solver. The backward adjoint equation is then integrated using the fixed trajectory $D^{(k)}_t$ to evaluate the extra MFC term $\frac{\partial f}{\partial \gamma} D^{(k)}_t$. 

At a converged fixed point with $\mu^{(k+1)}=\mu^{(k)}$, the precomputed $\gamma^{(k)}$ and $D^{(k)}$ coincide with the quantities evaluated along that fixed point, so the iteration matches the discretized optimality system above for the restricted dynamics.

Crucially, this decoupled architecture elegantly supports the inverse problem of neural parameter calibration. In the inverse setting, we aim to learn the neural network parameters $\theta$ that define $\gamma_\theta(t, \mu)$. Mirroring the approach taken for MFGs, we employ a teacher-forcing methodology where we replace the iterative trajectory $\mu^{(k)}_t$ entirely with the \textit{observed} empirical data $\mu^{obs}_t$. 

During training, we evaluate the model predictions and its Jacobians explicitly on the observed trajectory:
\begin{equation}
    \gamma^{obs}_t = \gamma_\theta(t, \mu^{obs}_t), \quad \quad D^{obs}_t = \frac{\partial \gamma_\theta}{\partial \mu}(t, \mu^{obs}_t).
\end{equation}
These pre-computed trajectories $(\gamma^{obs}, D^{obs})$ are then passed to the ODE solver to generate the predicted mean field $\mu^{pred}$, which is scored against $\mu^{obs}$ via the $L_2$ loss. This approach decouples the neural network training loop from nested, recursive ODE backpropagation and provides a scalable teacher-forced training objective.

\subsection{Numerical Experiments \& Results}
To demonstrate the versatility of our implicit differentiation framework, we extend the LQ model from finding a non-cooperative Nash equilibrium (MFG) to identifying the social optimum formulated by a central planner (Mean Field Control). 

While the Nash equilibrium is characterized by selfish agents who only consider their immediate running costs, the social planner minimizes the aggregate cost of the entire population $\int_0^T \sum_x \mu_t(x) f(x, a, \mu) dt$. The fundamental mathematical distinction lies in the backward HJB equation: the social planner's value function must account for the exact derivative of the Lagrangian with respect to the population density. 

To highlight the physical divergence between the Nash equilibrium and the social optimum, we modify the LQ cost structure in two ways:
\begin{enumerate}
    \item \textbf{Reduced Base Transition Rate:} We lower the base transition rate to $a_{xy} \approx 0.3$. This significantly reduces the natural "free" diffusion of the agents, meaning they remain stationary unless explicitly pushed by the congestion gradients.
    \item \textbf{Squared Congestion Cost:} We adopt a squared congestion penalty $c \mu(x)^2$, with $c=2$. By taking the derivative with respect to $\mu(x)$, the MFC social planner incurs an additional penalty of $2 c \mu(x)$ in the HJB equation. This anticipatory "systemic risk" term strongly forces the planner to evacuate crowded states faster than the selfish MFG agents.
\end{enumerate}

Due to the hypersensitivity introduced by the reduced base transition rate, the forward-backward Picard solver naturally oscillates or diverges. To restore mathematical stability, we applied a Picard damping factor of $\lambda=0.5$ (as detailed in App.~\ref{appendix: picard iteration}).

\begin{figure}[h]
    \centering
    \begin{subfigure}[b]{0.8\textwidth}
        \includegraphics[width=\textwidth]{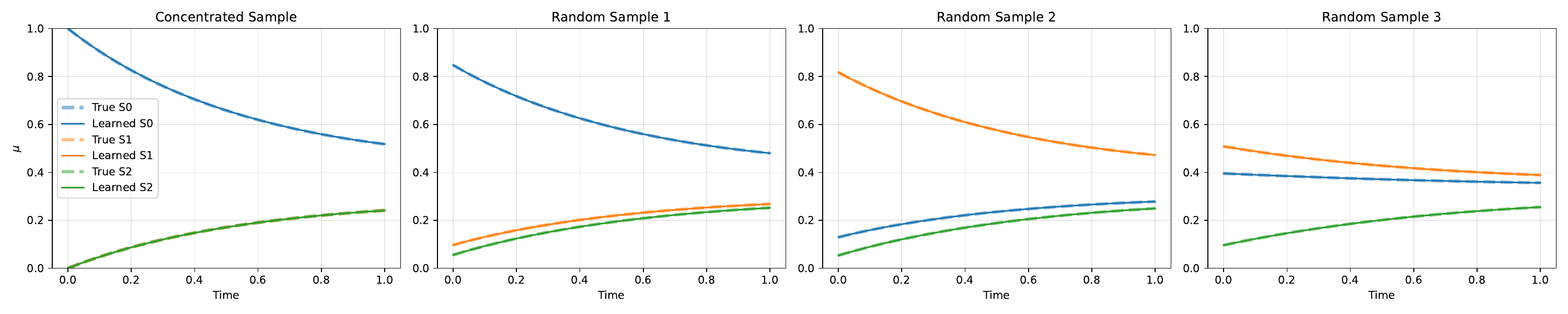}
        \caption{Nash Equilibrium (MFG)}
    \end{subfigure}

    \begin{subfigure}[b]{0.8\textwidth}
        \includegraphics[width=\textwidth]{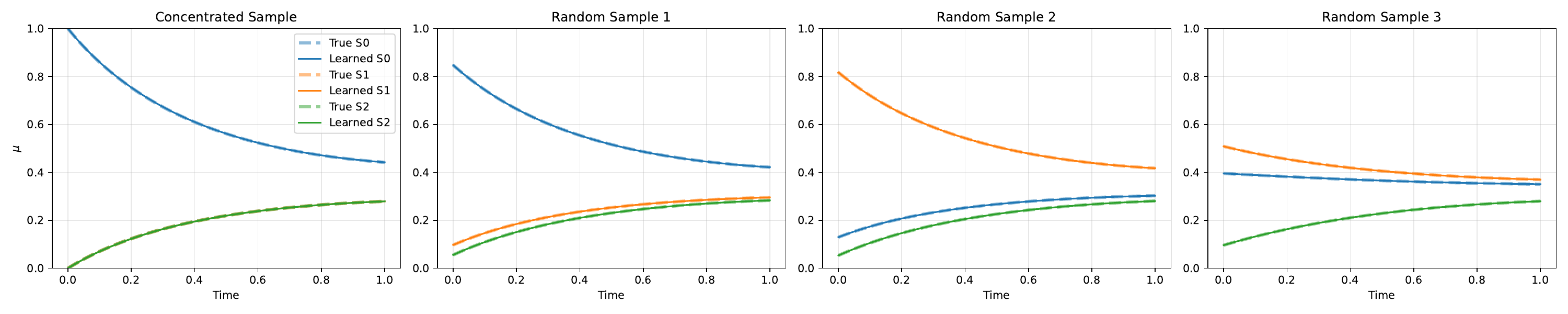}
        \caption{Social Optimum (MFC)}
    \end{subfigure}
    \caption{Comparison of the mean field evolution $\mu_t$ between the non-cooperative Nash equilibrium (top) and the social optimum (bottom). The social planner preemptively diffuses the population out of the concentrated initial state faster to avoid the squared congestion penalty.}
    \label{fig:mfg_vs_mfc_evolution}
\end{figure}

\section{Mean Field Game Calibration of the Cybersecurity Model}\label{appendix: extra_analyzes}
\begin{figure}[H]
    \centering
    \begin{subfigure}[b]{0.24\textwidth}
    \includegraphics[width=\textwidth]{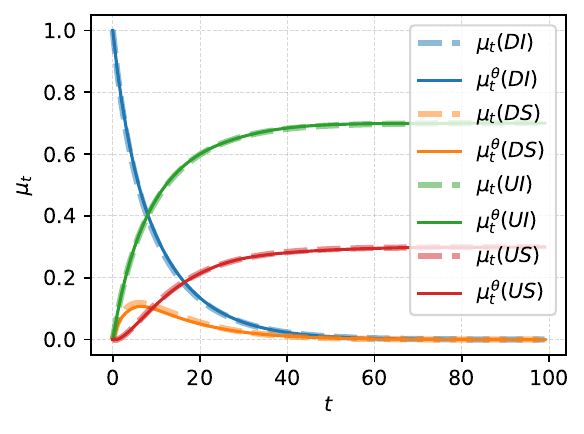}
    \end{subfigure}
    \hfill
    \begin{subfigure}[b]{0.24\textwidth}
        \includegraphics[width=\textwidth]{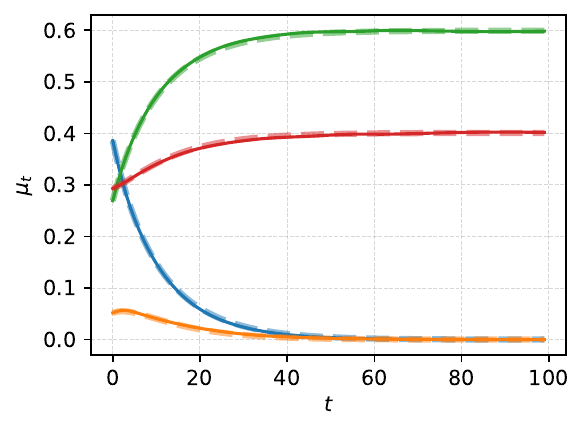}
    \end{subfigure}
    \hfill
        \begin{subfigure}[b]{0.24\textwidth}
    \includegraphics[width=\textwidth]{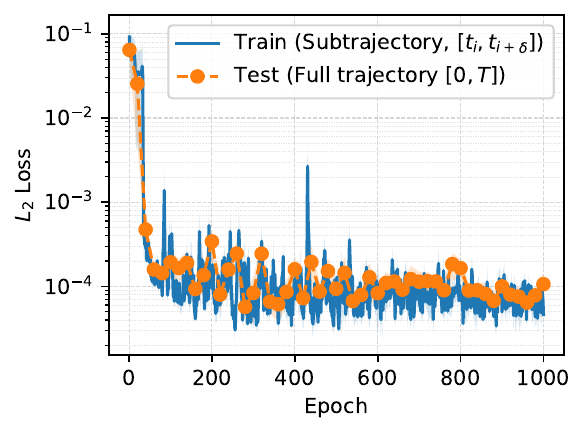}
    \end{subfigure}
        \begin{subfigure}[b]{0.24\textwidth}
    \includegraphics[width=\textwidth]{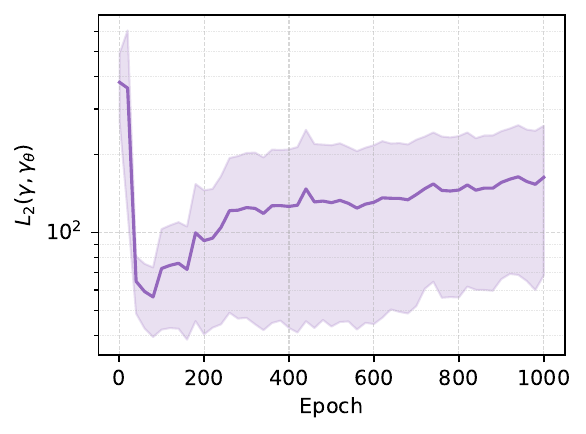}
    \end{subfigure}
    \caption{Predicted mean field flow for the Cybersecurity MFG for two random samples from the test set. The two rightmost panes respectively show the train and test error and the $L_2$ error of the predicted parameters. }
\end{figure}

\section{Cycling with Running Costs}\label{appendix: biking_running}
To isolate and demonstrate the structural, forward-looking power of the MFG formulation, we evaluate the two models from Sec.~\ref{sec: experiments} under three restricted-capacity testing regimes. 

\noindent\textbf{Test 1: Constant Base Rates.} We restrict the base transition rates to be entirely constant over time: $\varphi_\theta(t) = \gamma_0$ (a learned but time-independent vector). A standard ODE with constant rates decays toward a stationary distribution and cannot recreate the complex, double-peaked morning and evening commute patterns. The MFG, however, learns a non-trivial terminal cost $w_g(x)$ representing commuters' end-of-day location preferences. The backward HJB equation propagates this terminal incentive backward in time, causing agents to strategically adjust their flows throughout the day. As shown in Table~\ref{tab:bikeshare}, the MFG achieves a $41.7\%$ reduction in Test~ID $L_2$ error compared to the baseline.

\noindent\textbf{Test 2: Linear Base Rates.} We allow the base rates to be an affine function of time: $\varphi_\theta(t) = W t + b_{\text{bias}}$, where $W \in \RR^{d(d-1) \times 1}$ and $b_{\text{bias}} \in \RR^{d(d-1)}$. This enables the baseline to capture monotonic trends over the day but prevents it from fitting highly non-linear dynamics. Even with this added expressivity, the MFG maintains a $12.0\%$ Test~ID improvement (Table~\ref{tab:bikeshare}).

\noindent\textbf{Test 3: Restricted Dynamics with General Cost.} We consider a practitioner who relies on a rigid phenomenological model for the baseline dynamics (linear base rates, as in Test~2) and does not wish to alter it. Instead, we augment the MFG with a flexible cost network $\psi_\phi : [0,T] \times \cP([d]) \to \RR^{d(d-1)}$ (a 2-layer MLP with 64 hidden units and $\tanh$ activations), which replaces the scalar congestion cost $c \cdot \mu(y)$ in~\eqref{eq:bike_cost} with a general, potentially negative, cost vector $\gamma_{\text{cost}}(t, \mu_t) := \psi_\phi(t, \mu_t) \in \RR^{d(d-1)}$. The running cost becomes:
\begin{equation}\label{eq:bike_cost_general}
	f(x, a, \mu, t) = b \sum_{y \neq x} (a_{xy} - q_{xy}^{\mathrm{base}})^2 - \sum_{y \neq x} [\gamma_{\text{cost}}]_{xy}(t, \mu) \, a_{xy},
\end{equation}
and the corresponding optimal control is:
\begin{equation}
	a_{xy}^* = \max\!\left(0, \; q_{xy}^{\mathrm{base}} + \frac{[\gamma_{\text{cost}}]_{xy} - (u(y) - u(x))}{2b}\right).
\end{equation}
Note that the sign of the linear cost term is reversed compared to~\eqref{eq:bike_cost}, and the network $\psi_\phi$ can output negative values, enabling the model to learn both congestion penalties and incentives. As shown in Table~\ref{tab:bikeshare}, this formulation achieves a $19.3\%$ reduction in Test~ID $L_2$ error over the linear baseline, nearly doubling the $12.0\%$ improvement of Test~2.

\begin{table}[ht]
    \centering
    \caption{Prediction Errors on Citi Bike Trajectories (Mean $\pm$ SD over 5 seeds). Test ID is the held-out weekdays; Test OOD is the weekends. Absolute $L_2$ errors are scaled by $10^3$.}
    \label{tab:bikeshare}
    \resizebox{\textwidth}{!}{
    \begin{tabular}{llcc|cc|cc}
        \toprule
        & & \multicolumn{2}{c}{\textbf{Train}} & \multicolumn{2}{c}{\textbf{Test ID}} & \multicolumn{2}{c}{\textbf{Test OOD}} \\
        \textbf{Model} & \textbf{Complexity} & \textbf{$L_2$} & \textbf{Rel. $L_2$} & \textbf{$L_2$} & \textbf{Rel. $L_2$} & \textbf{$L_2$} & \textbf{Rel. $L_2$} \\
        \midrule
        MF Dynamics & Constant & $1.55 \pm 0.00$ & $0.153 \pm 0.000$ & $1.32 \pm 0.00$ & $0.054 \pm 0.000$ & $3.09 \pm 0.02$ & $0.091 \pm 0.000$ \\
        Mean-Field Game & Constant & $\mathbf{0.94 \pm 0.00}$ & $\mathbf{0.083 \pm 0.000}$ & $\mathbf{0.77 \pm 0.01}$ & $\mathbf{0.036 \pm 0.000}$ & $3.14 \pm 0.05$ & $0.101 \pm 0.001$ \\
        \midrule
        MF Dynamics & Linear & $1.02 \pm 0.00$ & $0.092 \pm 0.001$ & $0.83 \pm 0.00$ & $0.035 \pm 0.000$ & $\mathbf{3.07 \pm 0.02}$ & $0.104 \pm 0.000$ \\
        Mean-Field Game & Linear & $0.91 \pm 0.00$ & $0.085 \pm 0.001$ & $0.73 \pm 0.00$ & $0.033 \pm 0.000$ & $3.18 \pm 0.06$ & $0.103 \pm 0.001$ \\
        Mean-Field Game & Linear + General Cost & $\mathbf{0.84 \pm 0.00}$ & $\mathbf{0.081 \pm 0.001}$ & $\mathbf{0.67 \pm 0.00}$ & $\mathbf{0.032 \pm 0.000}$ & $3.19 \pm 0.00$ & $0.105 \pm 0.002$ \\
        \bottomrule
    \end{tabular}
    }
\end{table}

\begin{figure}[ht]
    \centering
    \includegraphics[width=0.48\textwidth]{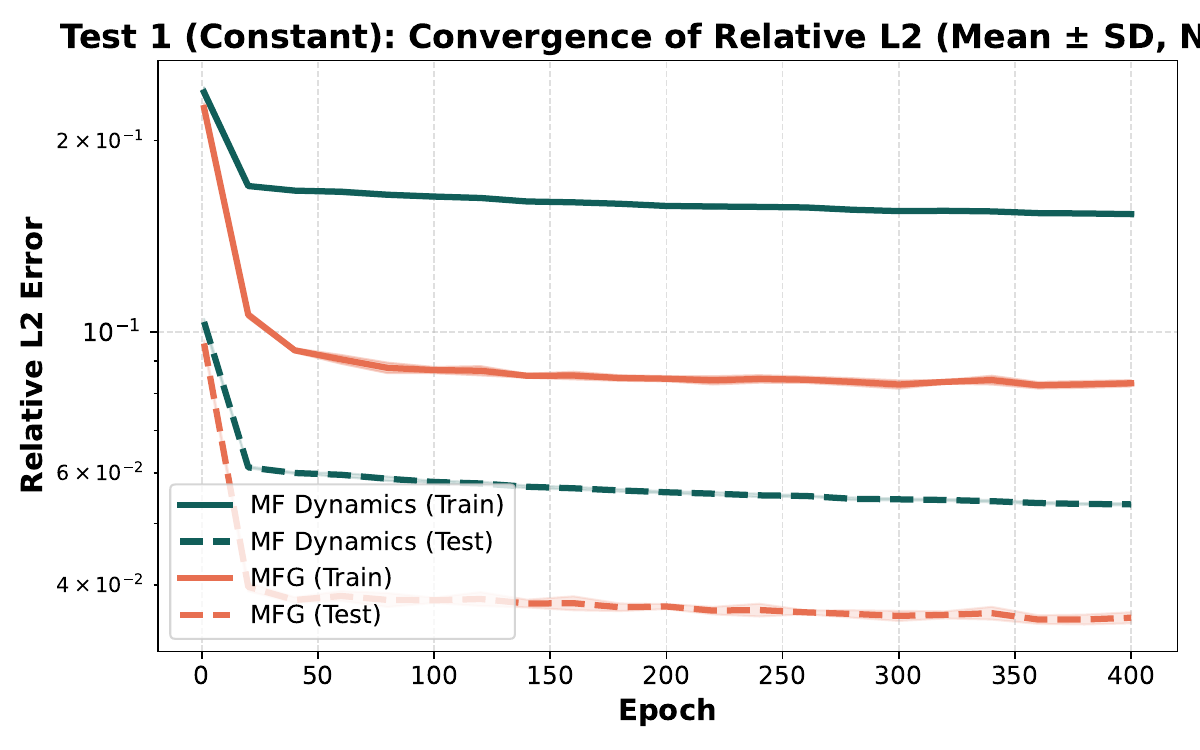}
    \includegraphics[width=0.48\textwidth]{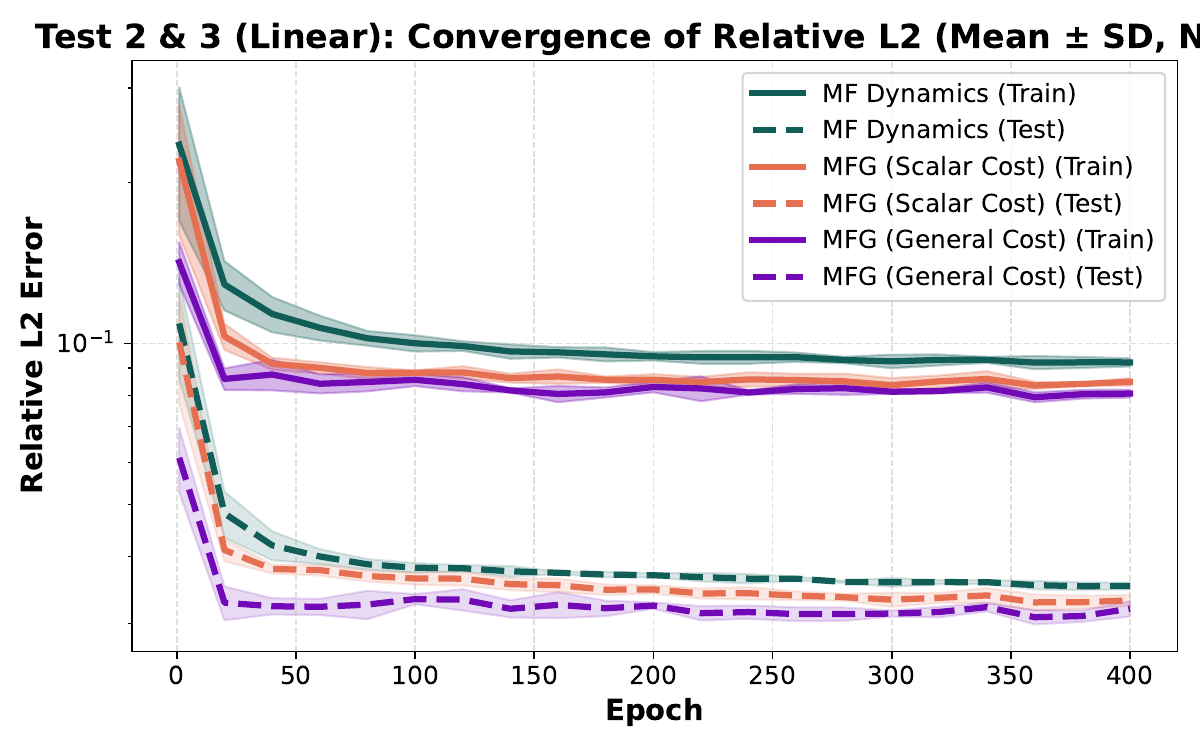}
    \caption{Relative $L_2$ error convergence. Left (Test 1): Constant Base Rates. Right (Test 2 \& 3): Linear Base Rates. The General Cost MFG (purple) clearly achieves the lowest test error, demonstrating the advantage of highly flexible cost networks. Shaded regions denote $\pm 1$ standard deviation over 5 seeds.}
    \label{fig:bikeshare_convergence}
\end{figure}

\begin{figure}[ht]
    \centering
    \includegraphics[width=0.8\textwidth]{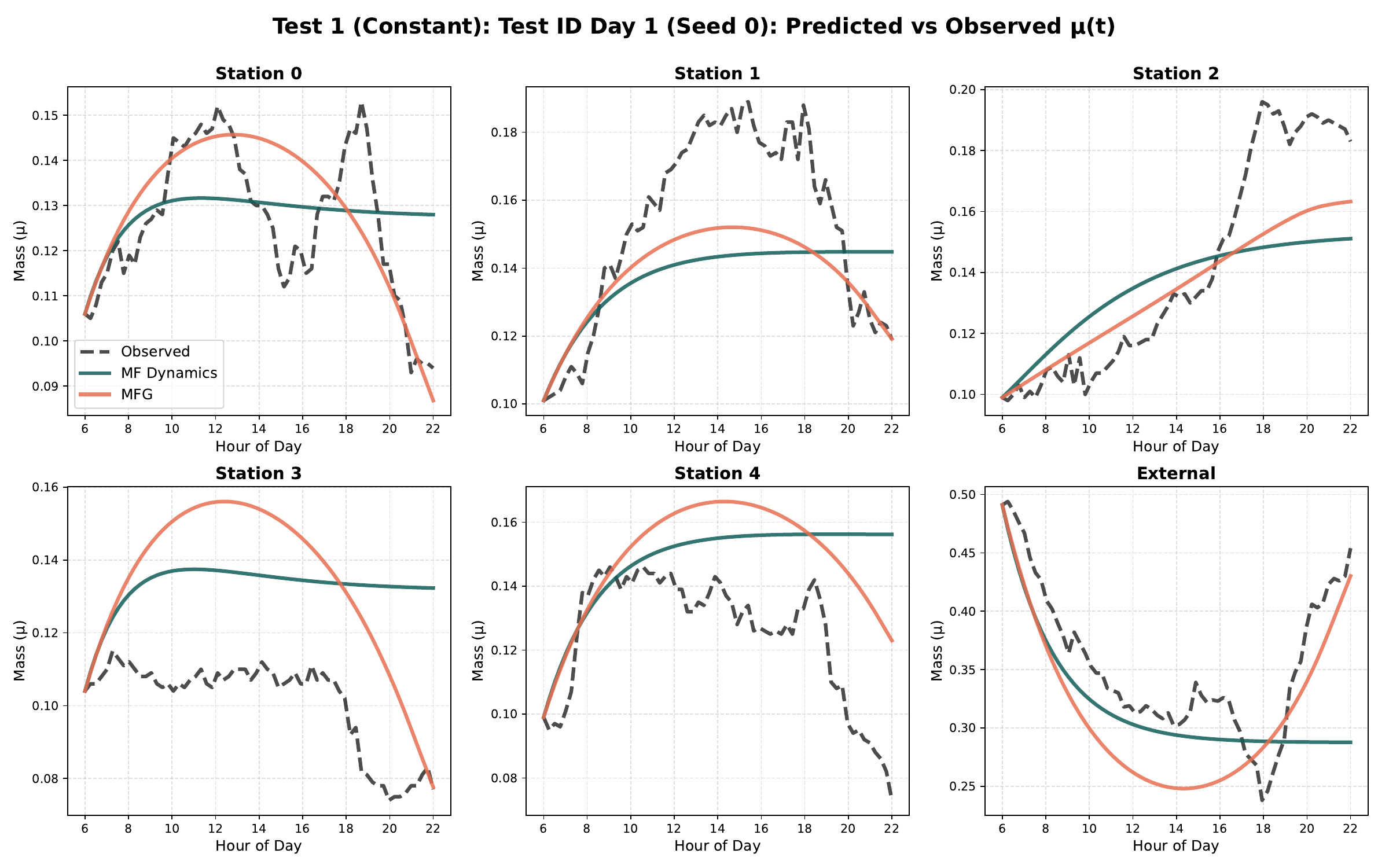}

    \includegraphics[width=0.8\textwidth]{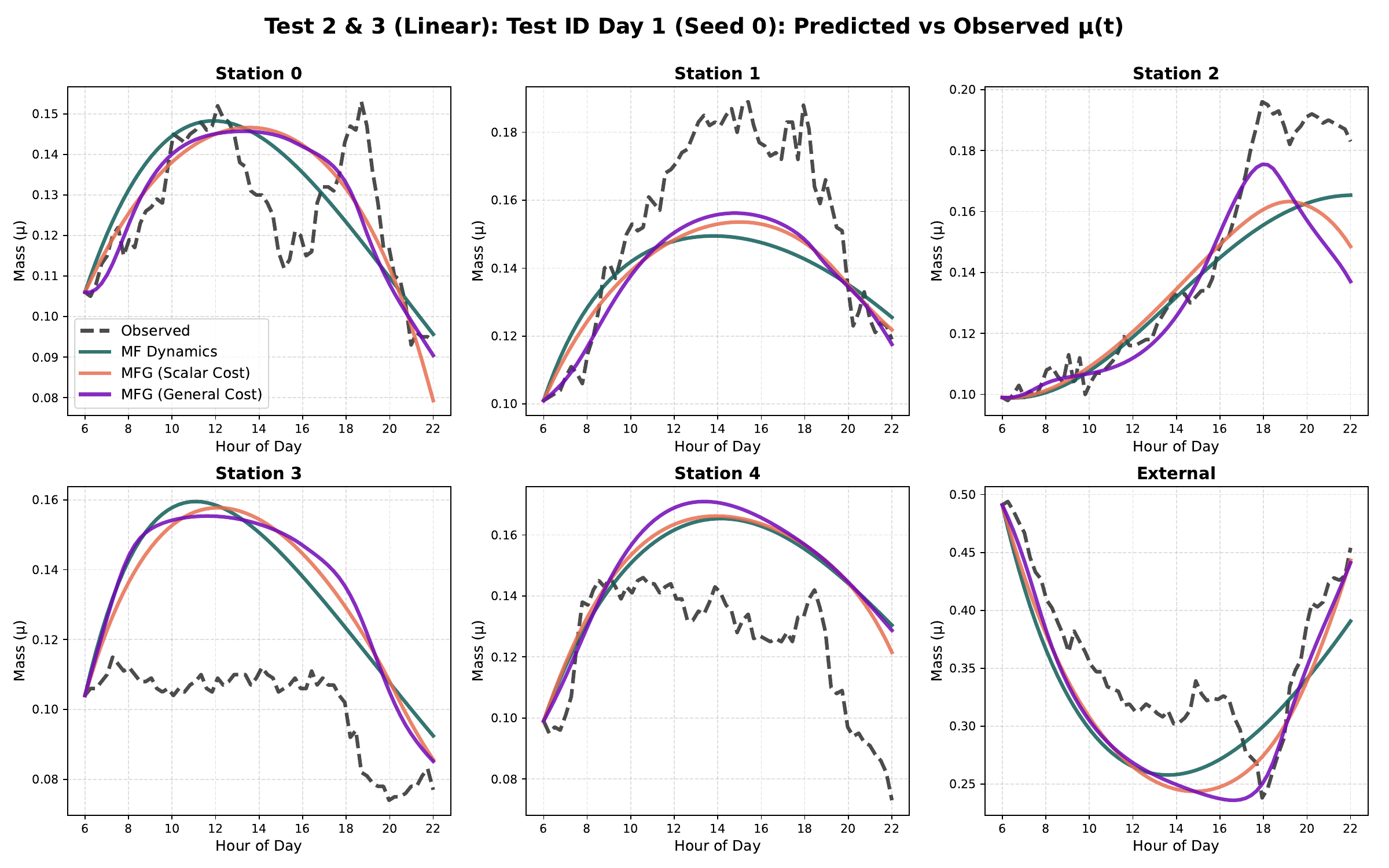}
    \caption{Predicted versus Observed Trajectories on a typical Weekday (Test ID, Seed 0). Top (Test 1): Constant Base Rates. Bottom (Test 2 \& 3): Linear Base Rates. Please refer to the text for interpretation.}
    \label{fig:bikeshare_trajectories_id}
\end{figure}

\begin{figure}[ht]
    \centering
    \includegraphics[width=0.48\textwidth]{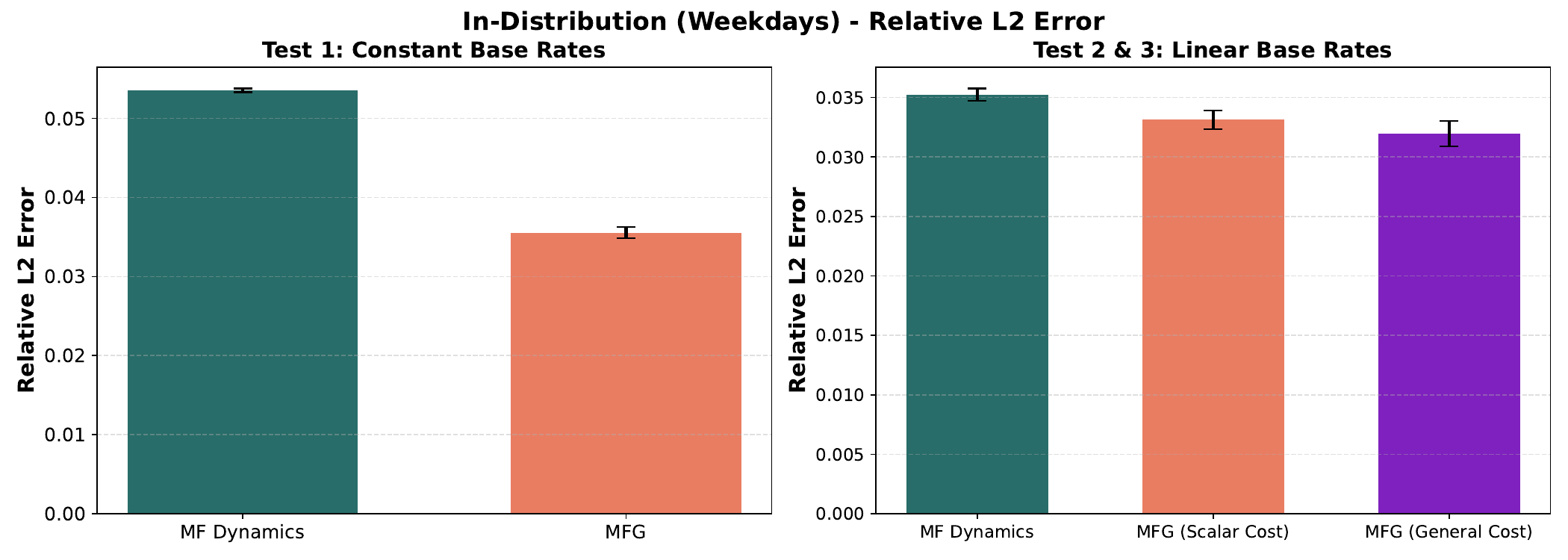}
    \includegraphics[width=0.48\textwidth]{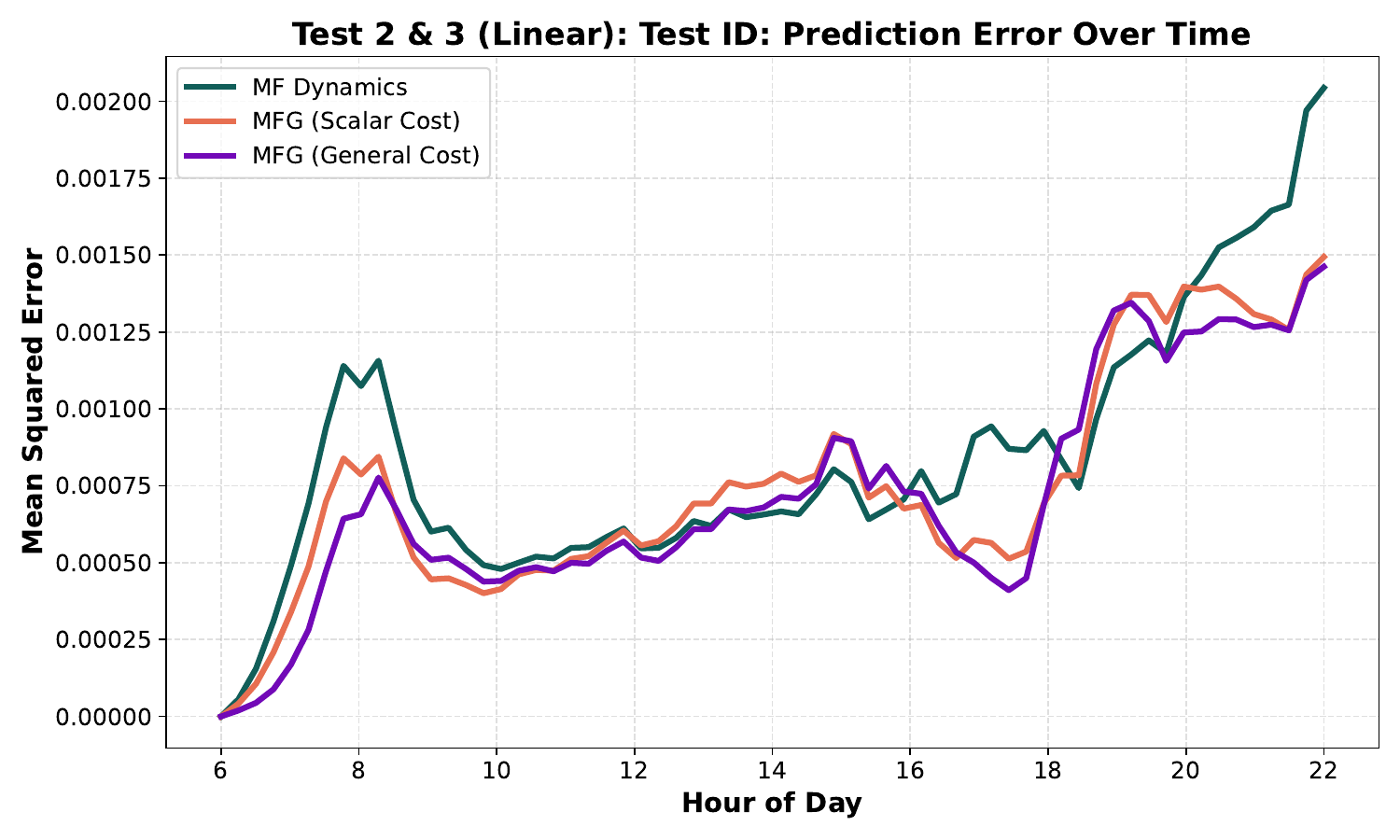}
    \caption{Left: In-Distribution (Test ID) Relative $L_2$ comparison across all regimes. \textbf{Right:} Prediction Error Over Time for Test 2 \& 3.}
    \label{fig:bikeshare_comparison_id}
\end{figure}

Interestingly, both models fail to generalize to the Out-of-Distribution (OOD) test set, which comprises the weekends in March 2024. This occurs due to two distinct domain shifts. First, environmental shocks: the first OOD day (Saturday, March 2nd) experienced rain, causing station usage to stagnate. Second, latent structural shifts: sunny weekend days (e.g., Sunday, March 3rd) exhibit broad, smooth leisure riding rather than the sharp bimodal commuting spikes of weekdays. Because the models were trained exclusively on weekday commute rhythms, they incorrectly predict commute-like surges on weekends. As illustrated in Fig.~\ref{fig:bikeshare_trajectories_ood}, this yields substantial errors, demonstrating a generalization failure when the underlying user utility shifts.
The slightly smaller OOD error of the MF dynamics baseline should therefore be interpreted as a less severe failure under this particular weekend shift, rather than as evidence of better out-of-distribution modelling.

\begin{figure}[ht]
    \centering
    \includegraphics[width=0.8\textwidth]{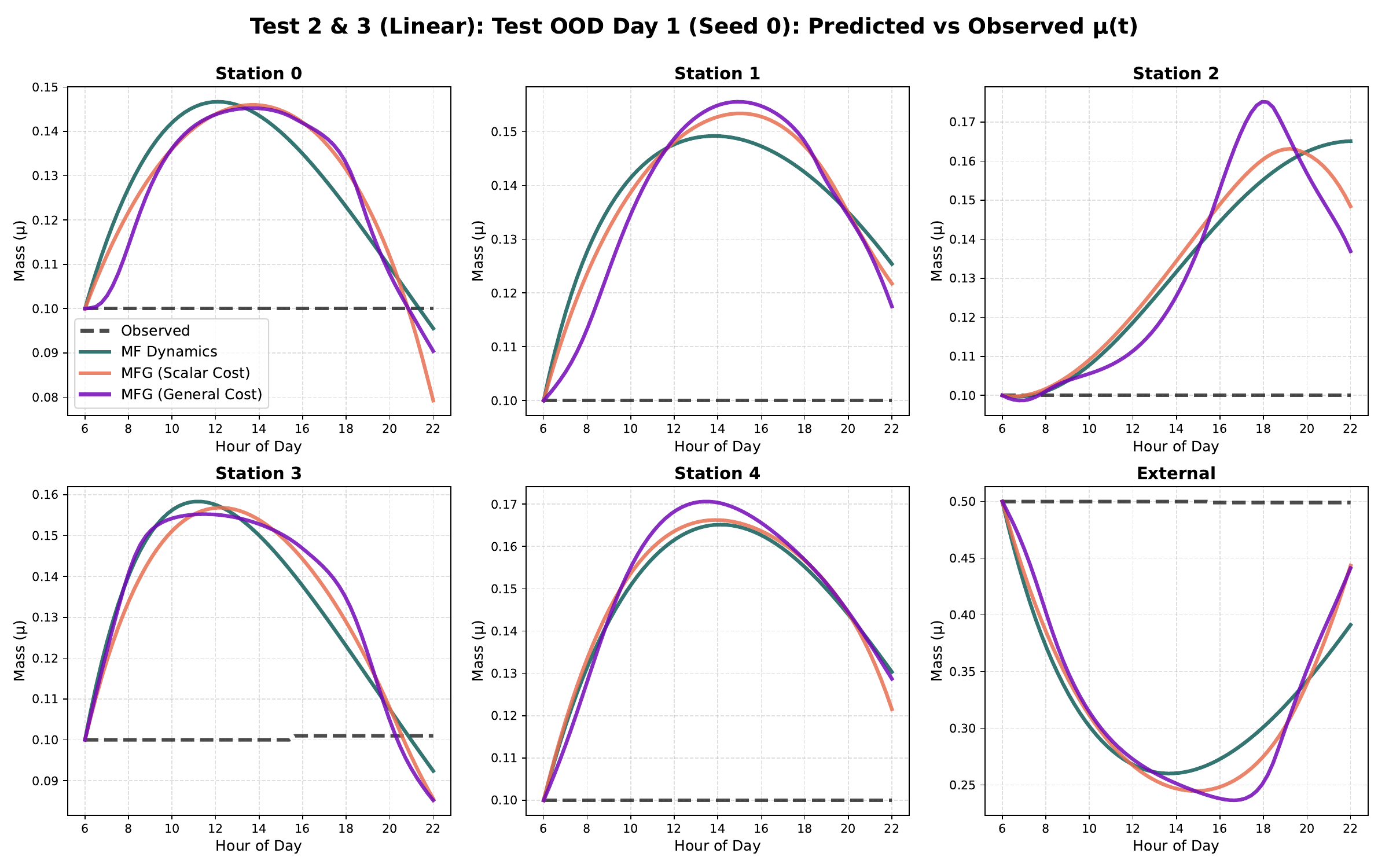}

    \includegraphics[width=0.8\textwidth]{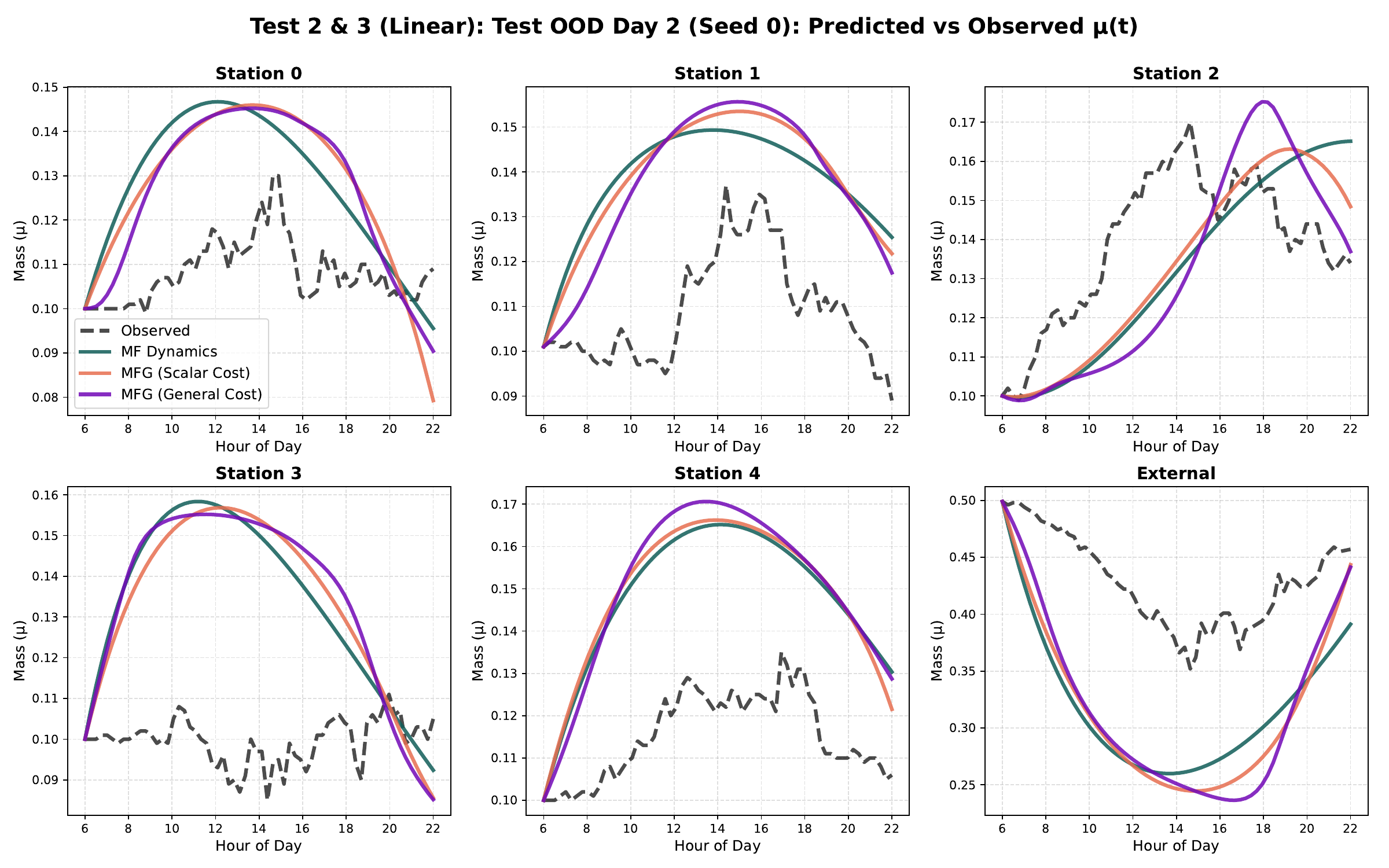}
    \caption{OOD Test Set (Weekends) using Test 2 \& 3 models. Top (Day 1): A rainy day caused reduced movement; models predicted weekday commutes. Bottom (Day 2): A sunny Sunday with smooth leisure riding, unlike the sharp commute peaks the models learned.}
    \label{fig:bikeshare_trajectories_ood}
\end{figure}

\clearpage

\section{Cycling with Interventions}\label{app_bike_scenarios}
We present the mean field mass per station for the two different intervention scenarios of the urban mobility analysis below. Fig.~\ref{fig: intervention_1} presents the predicted dynamics for intervention scenario 1, where station 1 is closed from 14:45 until 17:30. Fig.~\ref{fig: intervention_2} presents the predicted dynamics for intervention scenario 2, which involves the closure of station 2 from 07:15 until 09:45. 
\begin{figure}[H]
    \centering
    \includegraphics[width=0.8\linewidth]{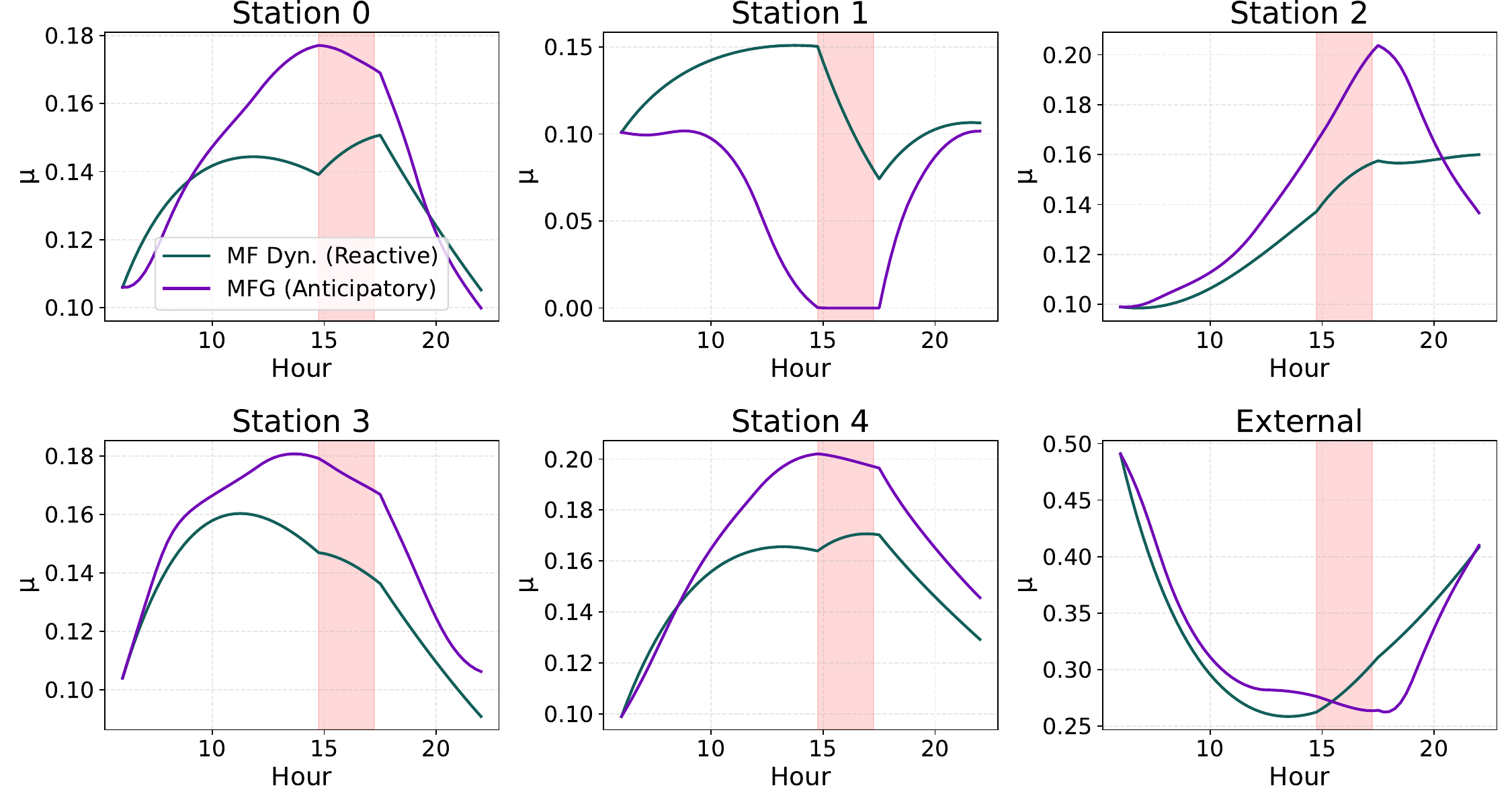}
    \caption{The mean field mass per station for intervention scenario 1.}
    \label{fig: intervention_1}
\end{figure}

\begin{figure}[H]
    \centering
    \includegraphics[width=0.8\linewidth]{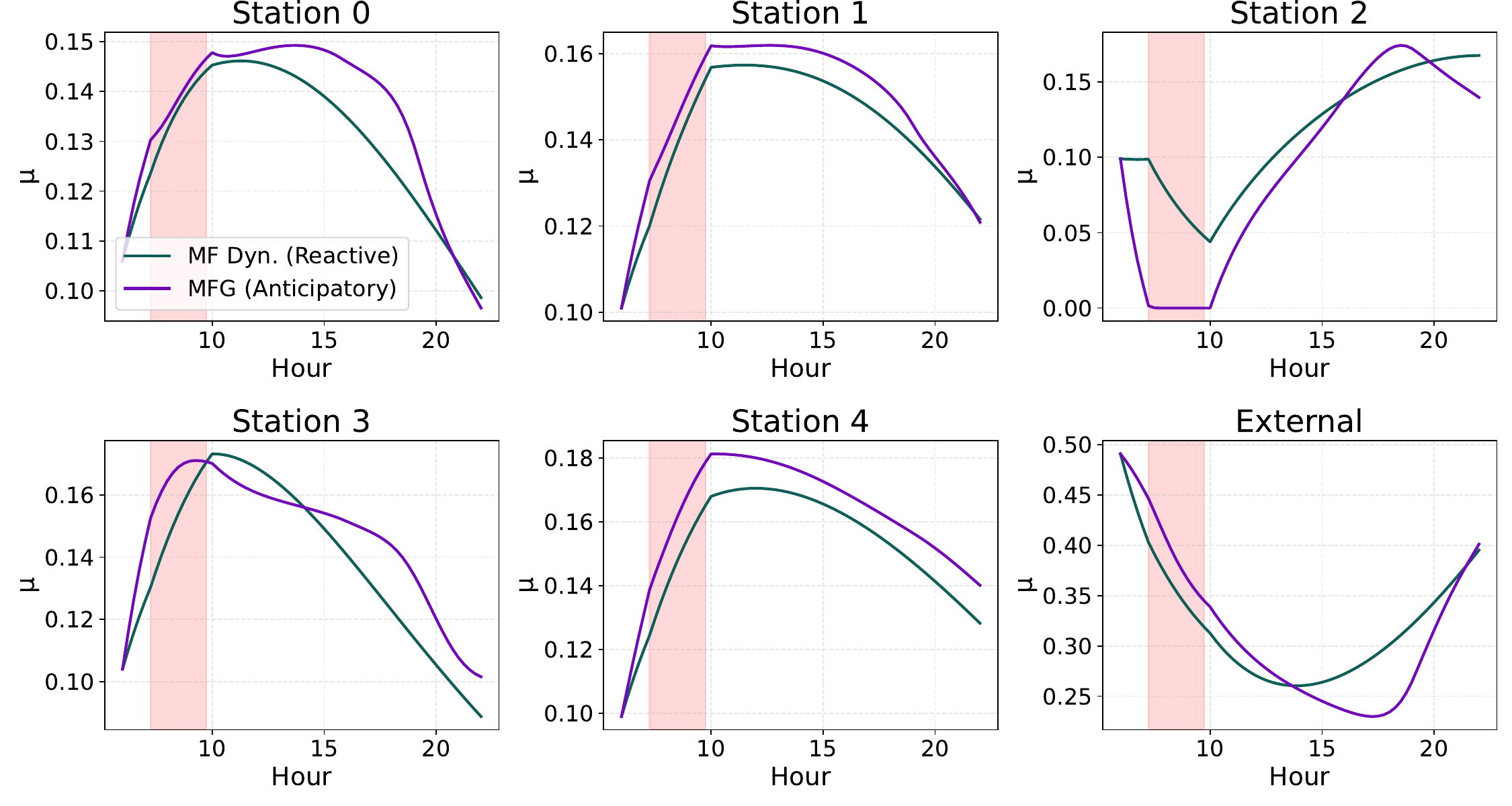}
    \caption{The mean field mass per station for intervention scenario 2.}
    \label{fig: intervention_2}
\end{figure}

\section{Experimental Parameters}\label{appendix: experimental_parameters}

\subsection{Hardware}
All experiments are implemented in Python using JAX \cite{jax2018} and are executed on a MacBook Pro equipped with an Apple M1 Pro chip  and 16 GB of RAM. Each experiment is performed $n=5$ times with different random seeds. 

\subsection{Random Seeds} For all experiments we use the random seeds $[42, 43, 44, 45, 46]$ to initialize the numpy and JAX pseudorandom number generation processes. 
\subsection{Linear-Quadratic}
We model the unknown parameter $\gamma \equiv b(t, \mu)$ using a feed-forward neural network. To improve the network's ability to capture highly dynamic and localized parameter shapes, we employ a sub-trajectory training approach, where gradients are computed over short, randomized time windows ($20\%$ of the total horizon). The full set of physical and optimization parameters used for data generation and network training is summarized in Table~\ref{tab:lq_params}.

\begin{table}[h]
\centering
\caption{Physical parameters and neural network hyperparameters for the LQ MFG experiment.}
\label{tab:lq_params}
\resizebox{\textwidth}{!}{%
\begin{tabular}{lc|lr}
\toprule
\textbf{Physical Parameters} & \textbf{Value} & \textbf{Training Hyperparameters} & \textbf{Value} \\
\midrule
Number of states ($d$) & 3 & Learning rate & $5 \times 10^{-3}$ \\
Time horizon ($T$) & 2.0 & Optimizer & Adam \\
Number of points in time ($N+1$) & 100 & Batch size & 10 \\
Min. parameter ($b_{\min}$) & 0.1 & Training samples & 200 \\
Max. parameter ($b_{\max}$) & 3.0 & Test samples & 20 \\
& & Sub-trajectory length & $20\%$ of $T$ \\
& & Number of neural network layers & 3\\
& & Layer width & 64\\
& & Activation & ReLU\\
& & Final activation & Softplus\\
  &  & Training Epochs & 500 \\
\bottomrule
\end{tabular}%
}
\end{table}
\subsection{Cybersecurity}
We model the unknown parameters $\gamma \equiv (q_{rec}^D, q_{rec}^U, q_{inf}^D,\allowbreak q_{inf}^U, \beta_{DD}, \beta_{DU}, \beta_{UU}, \beta_{UD})$ using a feed-forward neural network. To improve the network's ability to capture highly dynamic and localized parameter shapes, we employ a sub-trajectory training approach, where gradients are computed over short, randomized time windows ($20\%$ of the total horizon). The full set of physical and optimization parameters used for data generation and network training is summarized in Table~\ref{tab:cs_params}.
\begin{table}[H]
\centering
\caption{Physical parameters and neural network hyperparameters for the cybersecurity MFG experiment.}
\label{tab:cs_params}
\begin{tabular}{lc|lr}
\toprule
\textbf{Physical Parameters} & \textbf{Value} & \textbf{Training Hyperparameters} & \textbf{Value} \\
\midrule
Number of states ($d$) & 4 & Learning rate & $5 \times 10^{-3}$ \\
Time horizon ($T$) & 10.0 & Optimizer & Adam \\
Number of points in time ($N+1$) & 100 & Batch size & 10 \\
$q_{rec}^D$ & $[0.0, 0.5]$ & Training samples & 200 \\
$q_{rec}^U$ & $[0.0, 0.4]$& Test samples & 20 \\
$q_{inf}^D$ & $[0.0, 0.4]$& Sub-trajectory length & $20\%$ of $T$ \\
$q_{inf}^U$ & $[0.0, 0.3]$ & Number of neural network layers & 3\\
$\beta_{DD}$ & $[0.0, 0.4]$ & Layer width & 64\\
$\beta_{DU}$ & $[0.0, 0.3]$ & Activation & ReLU\\
$\beta_{UU}$ & $[0.0, 0.3]$ & Final activation & Softplus\\
$\beta_{UD}$ & $[0.0, 0.4]$ & Training Epochs& 1000\\
$\lambda$ & 0.8 & &\\
$v_H$ & 0.6 & &\\
$k_D$ & 0.3 & &\\
$k_I$ & 0.5 & &\\
\bottomrule
\end{tabular}
\end{table}
\subsection{Susceptible-Infected-Recovered}
We model the unknown parameters $\gamma \equiv (\beta, \gamma, c_I, c_\alpha)$ using a feed-forward neural network. To improve the network's ability to capture highly dynamic and localized parameter shapes, we employ a sub-trajectory training approach, where gradients are computed over short, randomized time windows ($20\%$ of the total horizon). The full set of physical and optimization parameters used for data generation and network training is summarized in Table~\ref{tab:sir_params}. Because the parameters $\beta$ and $\gamma$ interact with the players' contact factor $\alpha \in [0, 1]$, they are unbounded in theory. In practice, the parameters often converge to a reasonable domain under Picard Damping. 
\begin{table}[H]
\centering
\caption{Physical parameters and neural network hyperparameters for the SIR MFG experiment.}
\label{tab:sir_params}
\begin{tabular}{lc|lr}
\toprule
\textbf{Physical Parameters} & \textbf{Value} & \textbf{Training Hyperparameters} & \textbf{Value} \\
\midrule
Number of states ($d$) & 3 & Learning rate & $5 \times 10^{-3}$ \\
Time horizon ($T$) & 30 & Optimizer & Adam \\
Number of points in time ($N+1$) & 30 & Batch size & 10 \\
$\gamma$ & $[0, \infty)$ & Training samples & 200 \\
$\beta$ & $[0, \infty)$& Test samples & 20 \\
$c_a$ & $[0, \infty)$ & Sub-trajectory length & $20\%$ of $T$ \\
$c_I$& $[0, \infty)$ & Number of neural network layers & 3\\
Picard Max Iterations & $200$ & Layer width & 64\\
Picard Tolerance & $10^{-5}$  & Activation & ReLU\\
Picard Damping ($\alpha$) & $0.5$  & Final activation & Softplus\\
  &  & Training Epochs & 1000 \\
\bottomrule
\end{tabular}
\end{table}

\subsection{Urban Mobility}
The physical parameters and training hyperparameters for the bike-sharing experiment are presented in Table~\ref{tab:bikeshare_hyperparams}.
\begin{table}[H]
    \centering
    \caption{Hyperparameters for Urban Mobility Experiments.}
    \label{tab:bikeshare_hyperparams}
    \begin{tabular}{lc|lr}
        \toprule
        \textbf{Physical parameters} & \textbf{Value} & \textbf{Training Hyperparameters} & \textbf{Value} \\
        \midrule
        Number of Stations ($d$) & $6$ & Learning Rate & $5 \times 10^{-3}$ \\
        Time Horizon ($T$) & $1.0$ (16 hours) & Optimizer & Adam \\
        Number of points in time ($N+1$) & $64$ & Epochs & $400$ \\
        Picard Max Iterations & $200$ &  & \\
        Picard Tolerance & $10^{-5}$ & Initial $b_0$ & $1.0$ \\
        Picard Damping ($\alpha$) & $0.5$ & Initial $c_0$ & $0.5$ \\
        Base Network (Tests 1--2) & Constant / Affine & Cost Network (Test 3) & 2-Layer MLP (64, $\tanh$) \\
        \bottomrule
    \end{tabular}
\end{table}

\clearpage
\section{Convergence Analysis}
Figs.~\ref{fig:lq_losses}-\ref{fig:loss_bike} present the convergence of the training and testing losses for the mean field $\mu$, as well as the $L_2$ error of the parameter $\gamma$ across training epochs. The test loss evaluates the full trajectory prediction, while the training loss reflects the localized sub-trajectory optimization.
\subsection{Linear-Quadratic}
\phantom{LQ}
\begin{figure}[H]
    \centering
        \begin{subfigure}[b]{0.32\textwidth}
        \includegraphics[width=\textwidth]{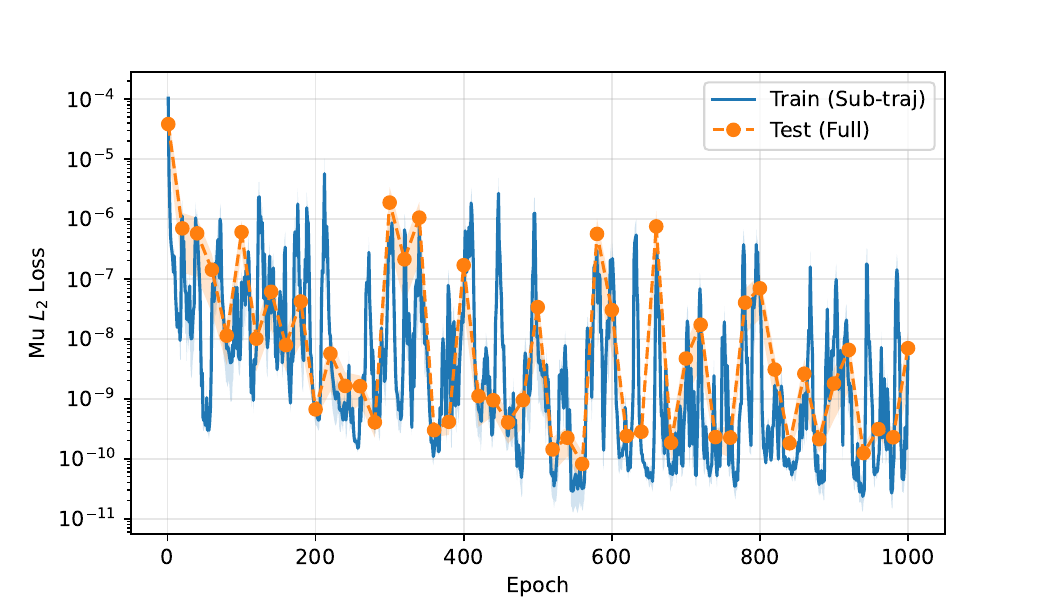}
    \end{subfigure}
    \hfill
    \begin{subfigure}[b]{0.32\textwidth}
        \includegraphics[width=\textwidth]{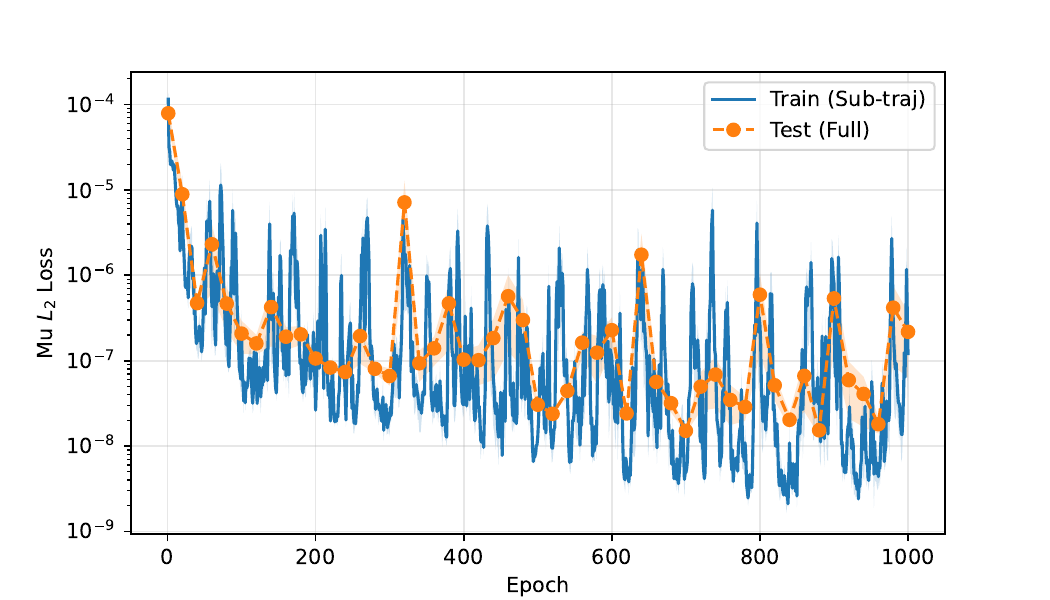}
    \end{subfigure}
    \hfill
    \begin{subfigure}[b]{0.32\textwidth}
        \includegraphics[width=\textwidth]{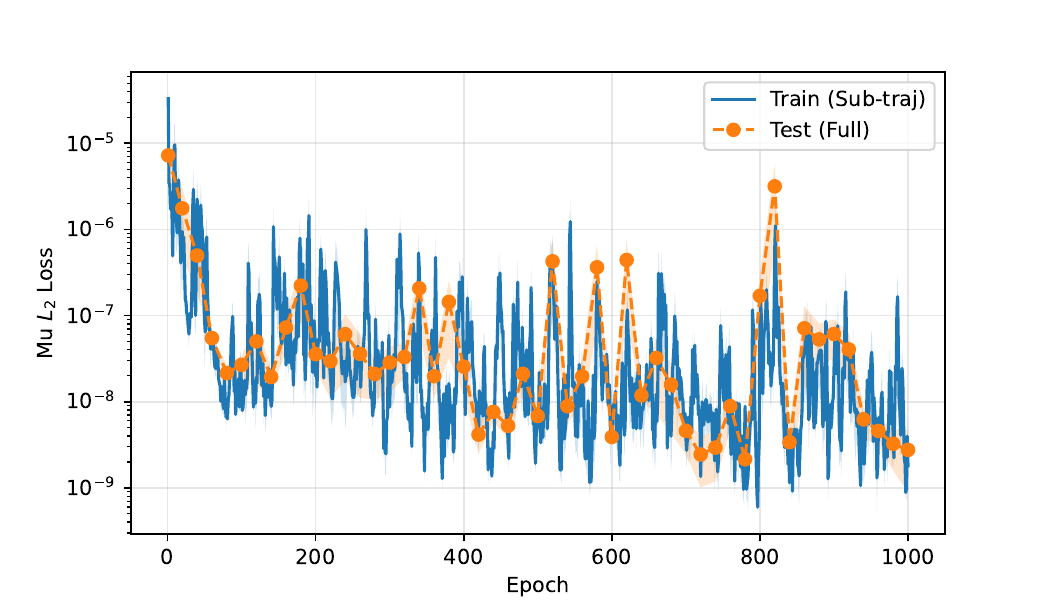}
    \end{subfigure}
    
    \vspace{0.2cm}

    \begin{subfigure}[b]{0.32\textwidth}
        \includegraphics[width=\textwidth]{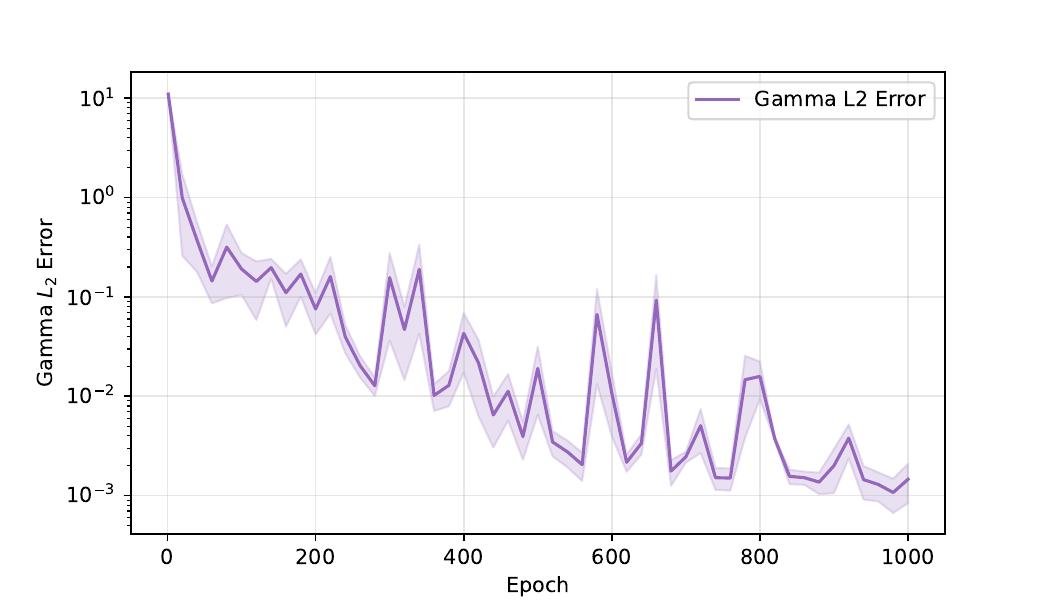}
        \caption{Constant}
    \end{subfigure}
    \hfill
    \begin{subfigure}[b]{0.32\textwidth}
        \includegraphics[width=\textwidth]{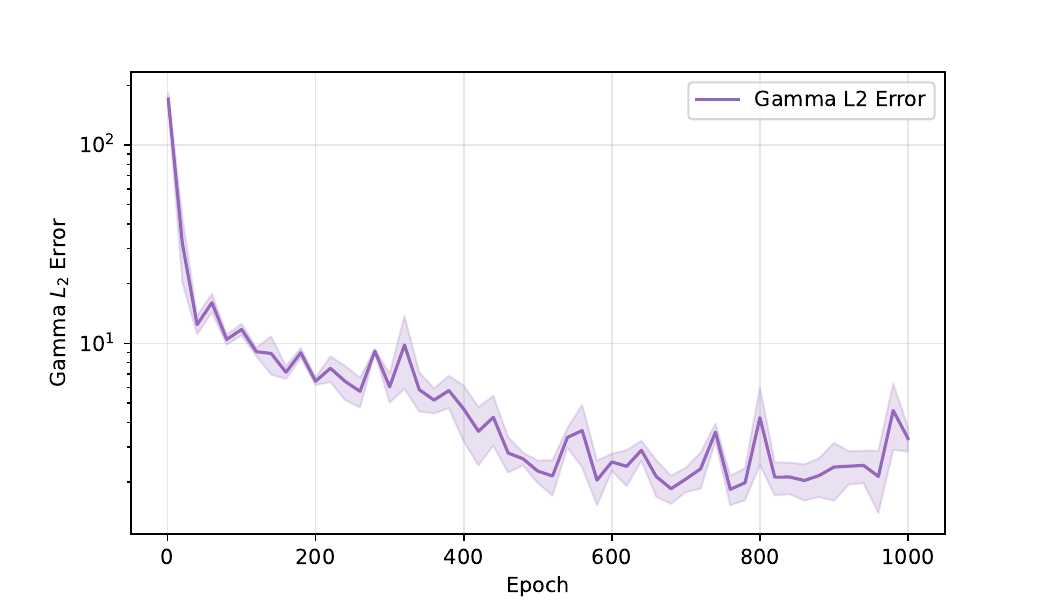}
        \caption{Time-dependent}
    \end{subfigure}
    \hfill
    \begin{subfigure}[b]{0.32\textwidth}
        \includegraphics[width=\textwidth]{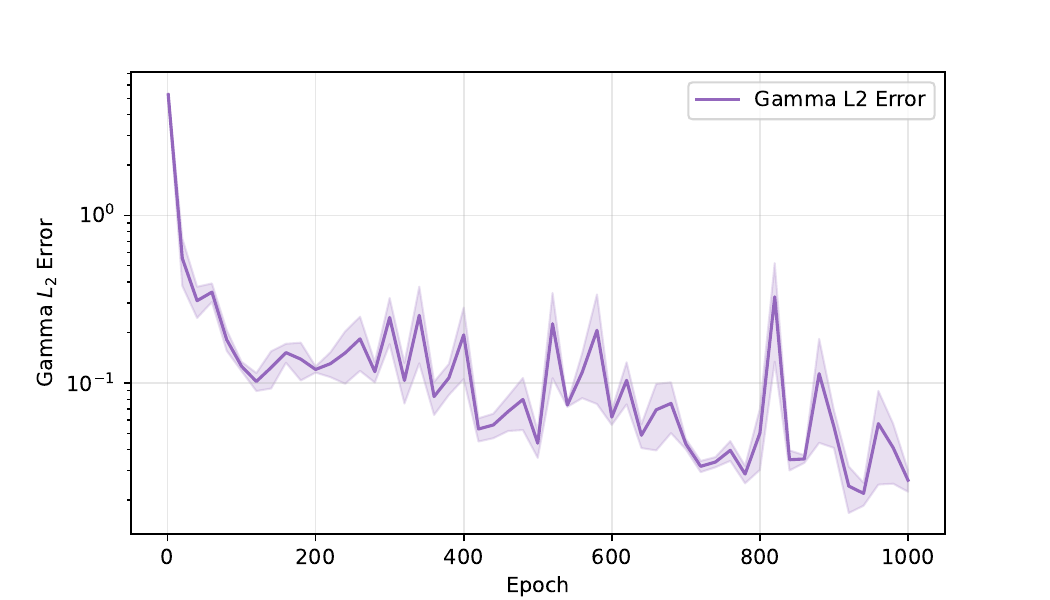}
        \caption{MF-dependent}
    \end{subfigure}
    \caption{Convergence of the mean field $L_2$ loss (top row) and the parameter $L_2$ error (bottom row) during training for the three test cases without noise. Shaded regions denote the standard error over multiple random seeds.}
    \label{fig:lq_losses}
\end{figure}

\subsection{Mean Field Control}
\phantom{MFC}
\begin{figure}[H]
    \centering
    \begin{subfigure}[b]{0.4\textwidth}
        \includegraphics[width=\textwidth]{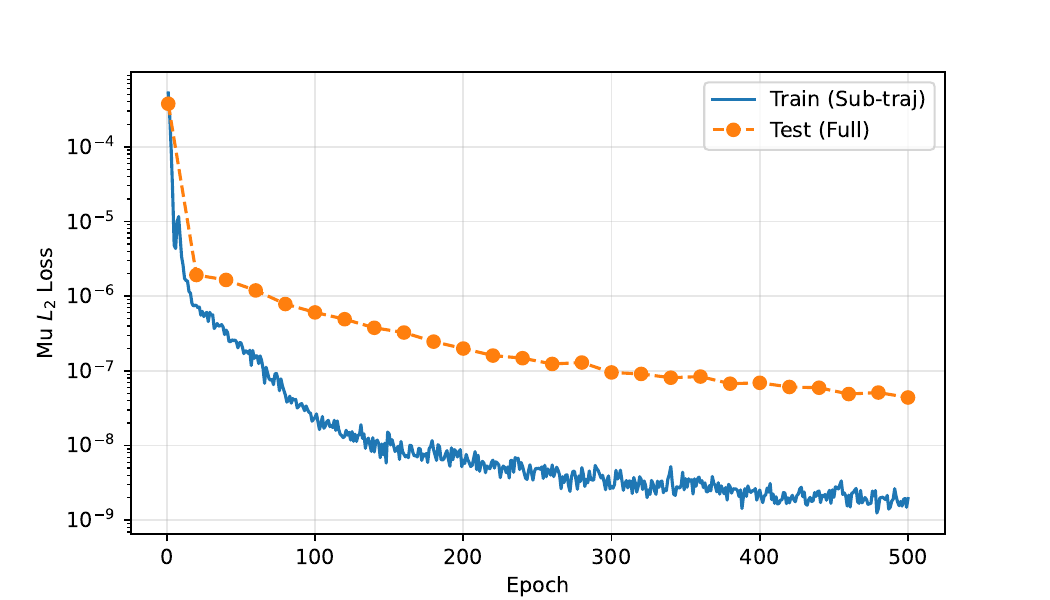}
        \caption{MFG Convergence}
    \end{subfigure}
    \hspace{1cm}
    \begin{subfigure}[b]{0.4\textwidth}
        \includegraphics[width=\textwidth]{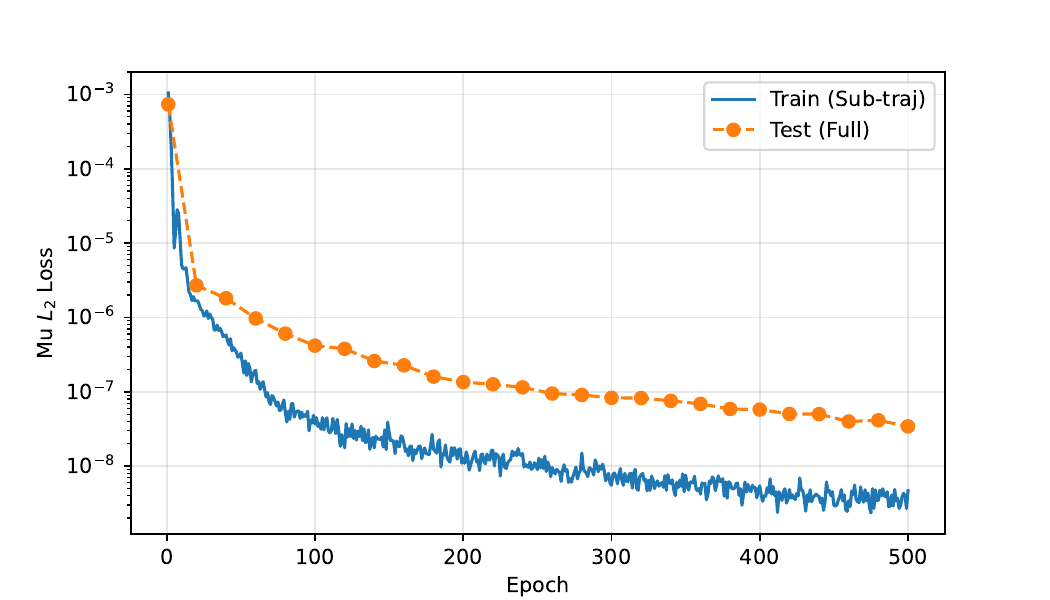}
        \caption{MFC Convergence}
    \end{subfigure}
    
    \vspace{0.2cm}
    
    \begin{subfigure}[b]{0.4\textwidth}
        \includegraphics[width=\textwidth]{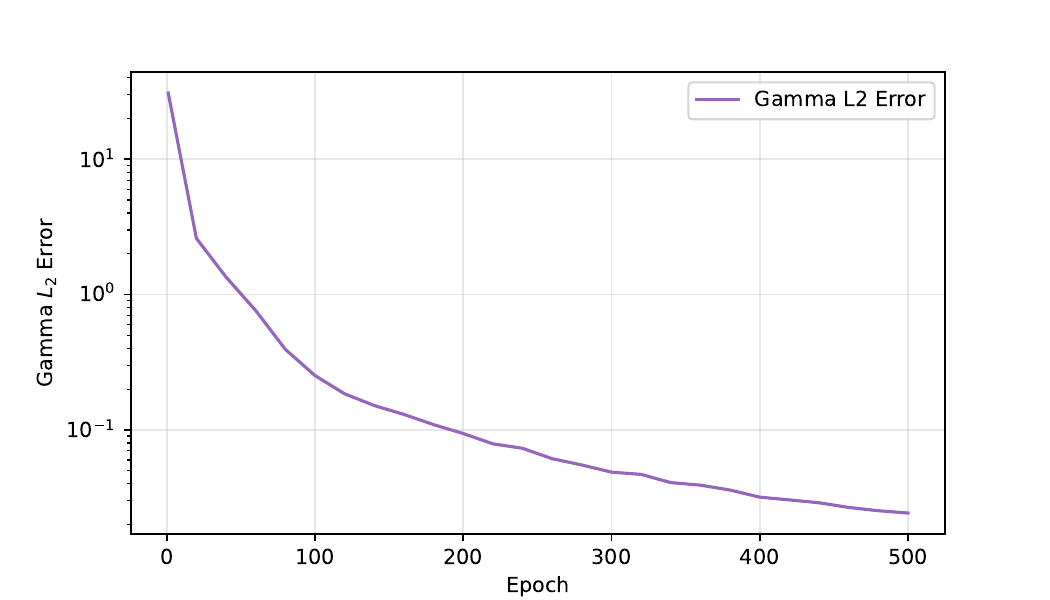}
        \caption{MFG Parameter Error}
    \end{subfigure}
     \hspace{1cm}
    \begin{subfigure}[b]{0.4\textwidth}
        \includegraphics[width=\textwidth]{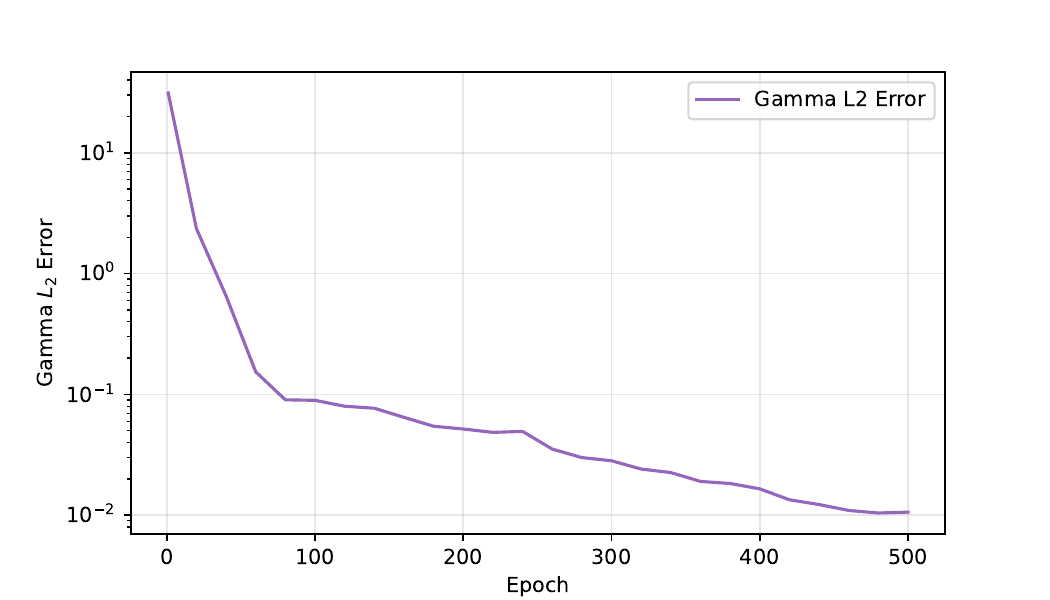}
        \caption{MFC Parameter Error}
    \end{subfigure}
    \caption{Training loss for the mean field predictions (top) and the corresponding parameter estimation error (bottom). Despite the modified ODE structure and the inclusion of Picard damping, our framework successfully recovers the underlying parameters for both the MFG and MFC configurations.}
    \label{fig:mfg_vs_mfc_losses}
\end{figure}

\subsection{Cybersecurity}
\begin{figure}[H]
    \centering
    \begin{subfigure}[b]{0.32\textwidth}
        \includegraphics[width=\textwidth]{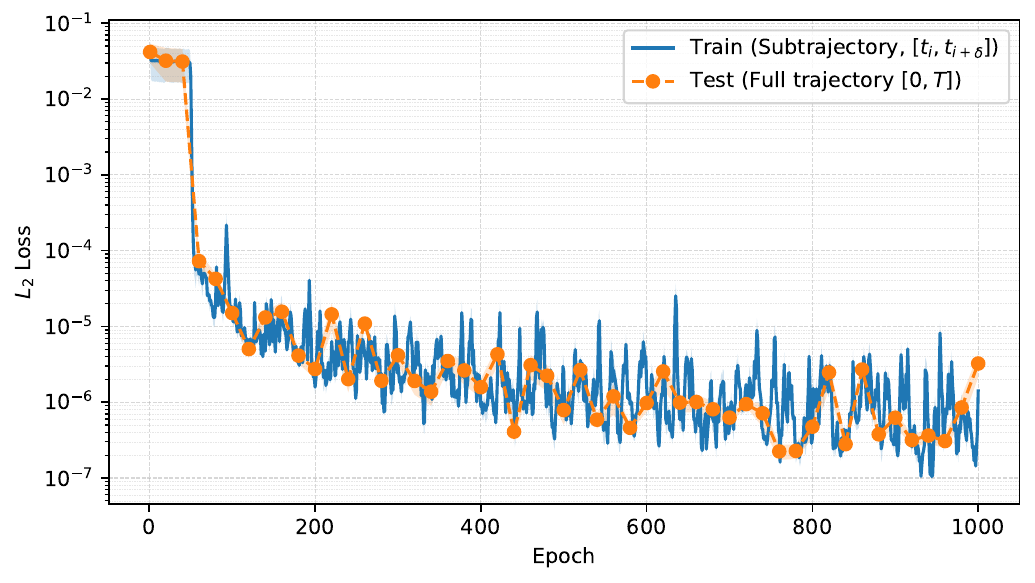}
    \end{subfigure}
    \hfill
    \begin{subfigure}[b]{0.32\textwidth}
        \includegraphics[width=\textwidth]{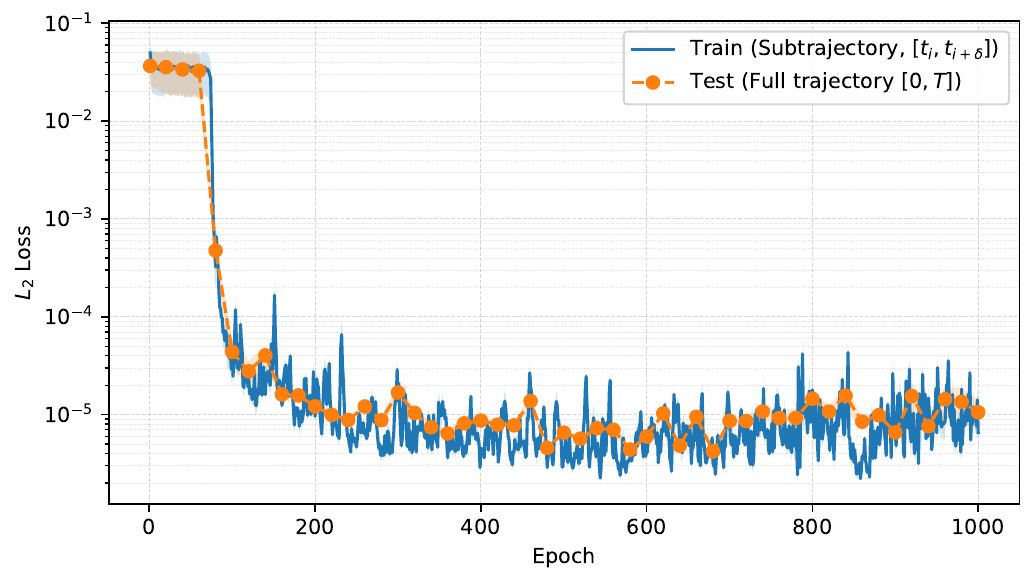}
    \end{subfigure}
    \hfill
    \begin{subfigure}[b]{0.32\textwidth}
        \includegraphics[width=\textwidth]{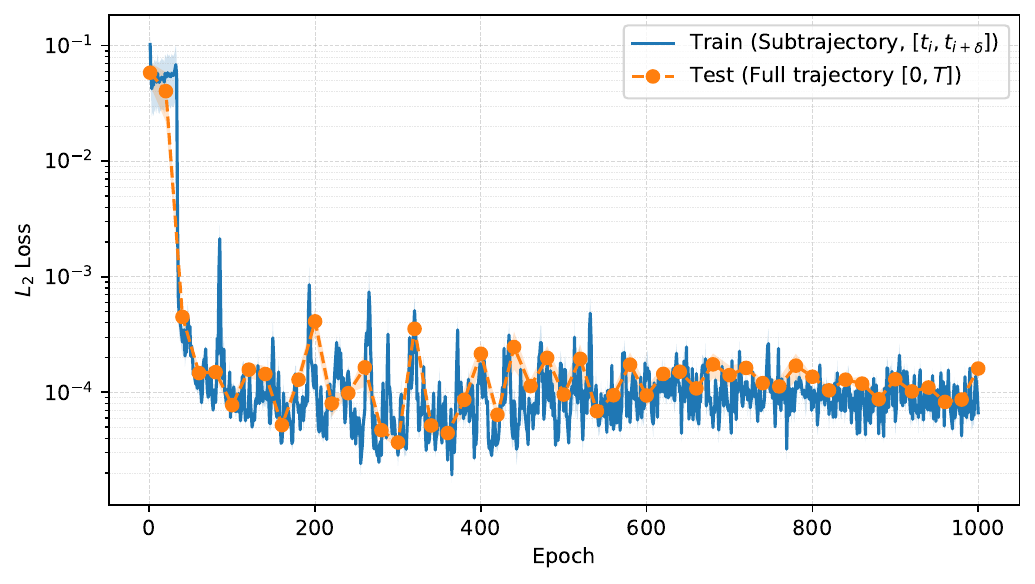}
    \end{subfigure}
    
    \vspace{0.2cm}

    \begin{subfigure}[b]{0.32\textwidth}
        \includegraphics[width=\textwidth]{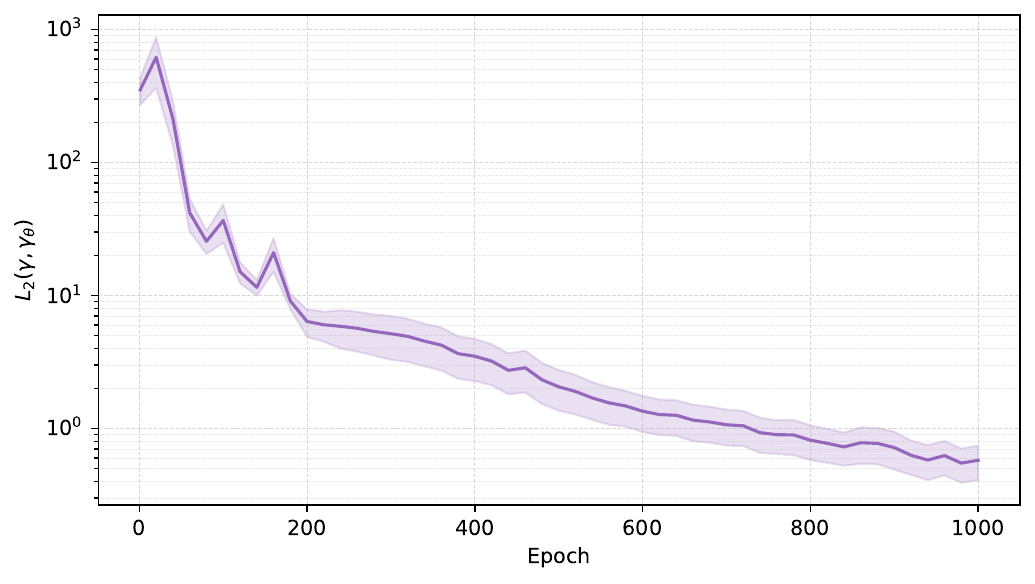}
        \caption{Constant}
    \end{subfigure}
    \hfill
    \begin{subfigure}[b]{0.32\textwidth}
        \includegraphics[width=\textwidth]{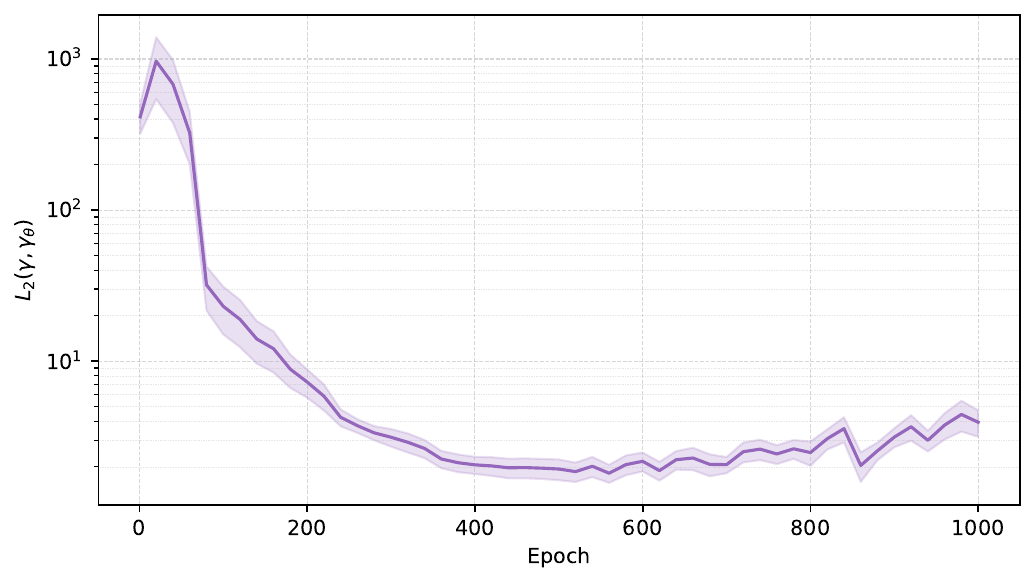}
        \caption{Time-dependent}
    \end{subfigure}
    \hfill
    \begin{subfigure}[b]{0.32\textwidth}
        \includegraphics[width=\textwidth]{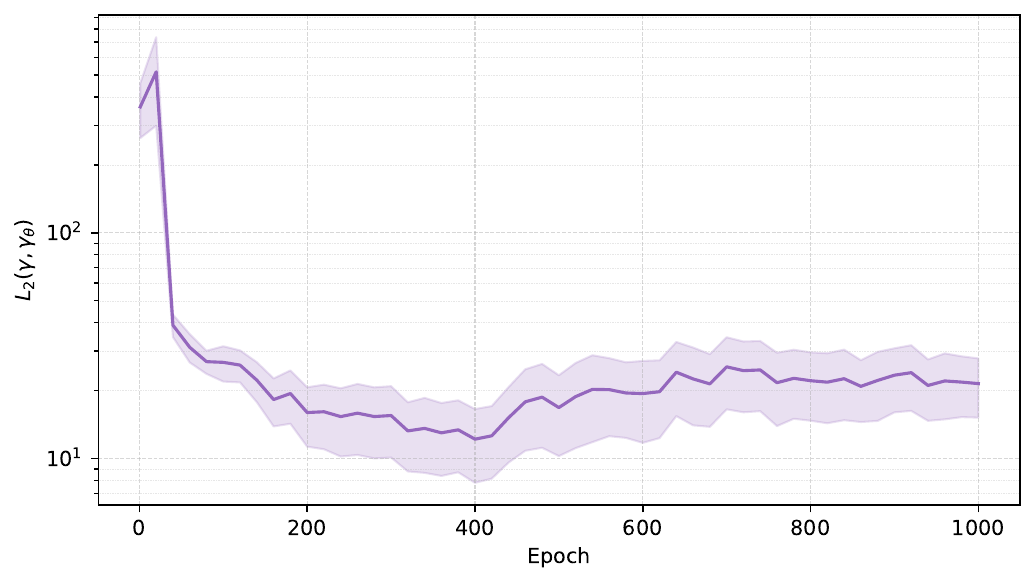}
        \caption{MF-dependent}
    \end{subfigure}
    \caption{Convergence of the mean field $L_2$ loss (top row) and the parameter $L_2$ error (bottom row) during training for the three test cases without noise. Shaded regions denote the standard error over multiple random seeds.}
    \label{fig:cs_losses}
    \end{figure}
\subsection{Susceptible-Infected-Recovered}
We provide the $L_2$ loss associated with the predictions on the CDC influenza dataset in Fig.~\ref{fig: sir_loss}.

When considering the real-world influenza data, we only have access to a proxy for the infected population fraction, namely the influenza test positivity rate $\mu_t(I)$. Therefore, the calculated $L_2$ only considers the agents at the infected state. We cannot report the parameter $L_2$ error due to the unavailability of the true parameters $\gamma$.

\begin{minipage}{.45\textwidth}
\begin{figure}[H]
    \centering
\includegraphics[width=\textwidth]{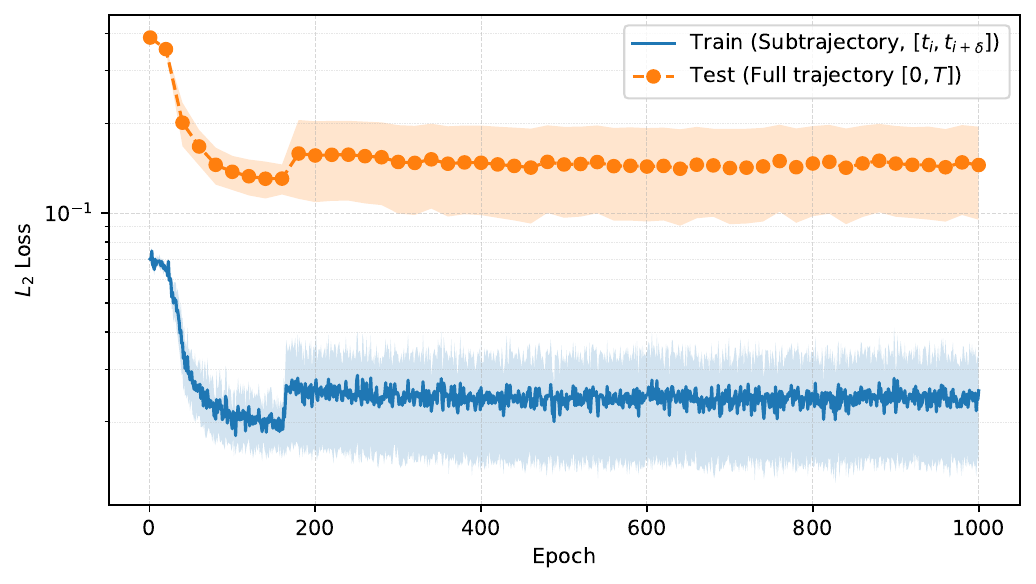}
    \caption{Convergence of the mean field $L_2$ loss between the predicted infected coordinate $\mu_t^\theta(I)$ and the observed influenza test positivity rate. Shaded regions denote the standard error over multiple random seeds.}
    \label{fig: sir_loss}
\end{figure}
\end{minipage}%
\hfill
\begin{minipage}{0.45\textwidth}
\begin{figure}[H]
    \centering
    \includegraphics[width=\textwidth]{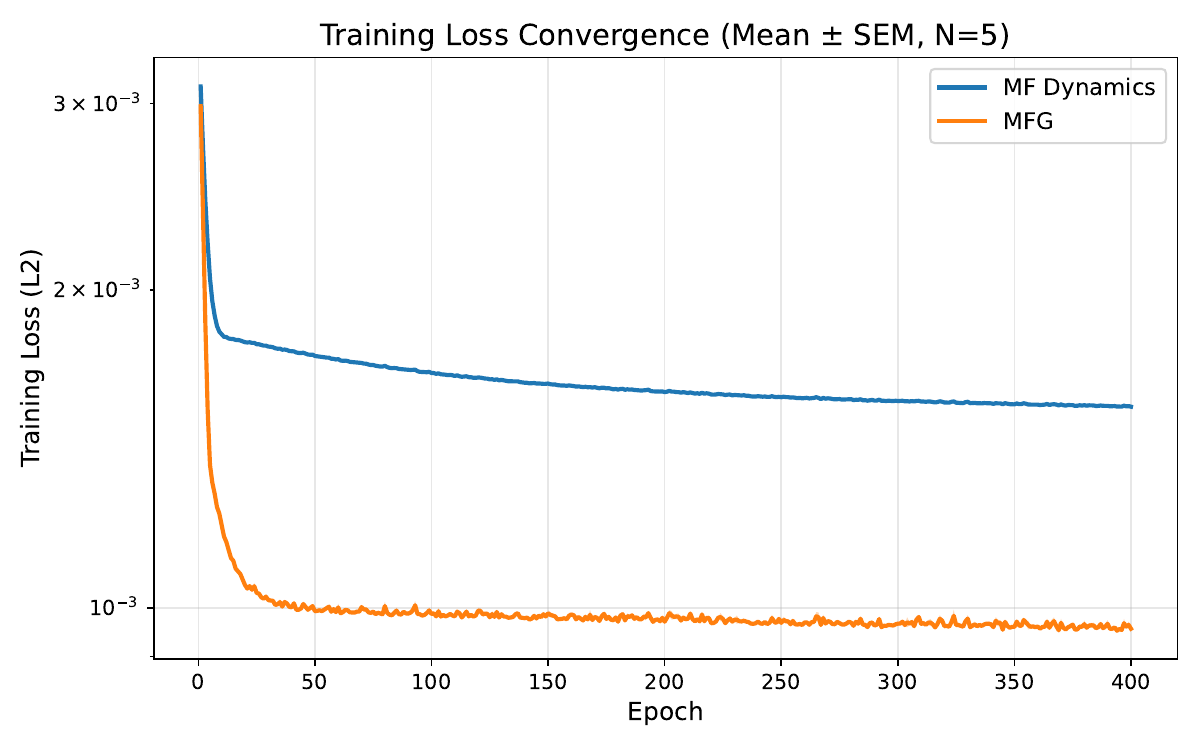}
    \caption{Convergence of the mean field dynamics model and the mean field game for the urban mobility example.\\[3mm]}
    \label{fig:loss_bike}
\end{figure}
\end{minipage}

\subsection{Urban Mobility.}
Lastly, Fig.~\ref{fig:loss_bike} presents the convergence of the mean field dynamics model and mean field game for the urban mobility example. The figure presents the mean and standard error over the mean as calculated over the five seeds.

\end{document}